\documentclass[copyright]{eptcs}
\usepackage{times}
\usepackage{proof}
\usepackage{amssymb}
\usepackage{comment}
\usepackage{color}
\usepackage{amsmath}
\usepackage{listings}

\newtheorem{theorem}{Theorem}
\newtheorem{lemma}{Lemma}
\newtheorem{corollary}{Corollary}

\newcommand{\To}{\Rightarrow}
\newcommand{\next}[0]{\textit{next}}

\newcommand{\SN}[0]{\textit{SN}}
\newcommand{\vc}[0]{\texttt{vec}}
\newcommand{\nat}[0]{\texttt{nat}}
\newcommand{\Tvec}[0]{\texttt{T}^\vc}
\newcommand{\nil}[0]{\texttt{nil}}
\newcommand{\cons}[0]{\texttt{cons}}

\newcommand{\join}[0]{\texttt{join}}
\newcommand{\cast}[0]{\texttt{cast}}

\newcommand{\leadstov}[0]{\leadsto_v}

\newcommand{\ifzero}[0]{\texttt{ifZero}}
\newcommand{\impapp}[0]{\raisebox{-0.2ex}{$\square$}}

\newcommand{\interp}[1]{[\negthinspace[#1]\negthinspace]}
\newcommand{\drop}[1]{| #1 |}

\begin{document}

\pagestyle{empty}

\newif \ifTR \TRfalse

\title{Equality, Quasi-Implicit Products, and Large Eliminations}
\author{Vilhelm Sj\"oberg
\institute{Computer and Information Science\\
University of Pennsylvania}
\email{vilhelm@cis.upenn.edu}
\and
Aaron Stump 
\institute{Computer Science\\
The University of Iowa}
\email{astump@acm.org}}

\date{}

\def\authorrunning{Sj\"oberg and Stump}
\def\titlerunning{Equality and Implicit Products}

\maketitle
\thispagestyle{empty}

\begin{abstract}
This paper presents a type theory with a form of equality reflection:
provable equalities can be used to coerce the type of a term.
Coercions and other annotations, including implicit arguments, are
dropped during reduction of terms.  We develop the metatheory for an
undecidable version of the system with unannotated terms.  We then
devise a decidable system with annotated terms, justified in terms of
the unannotated system.  Finally, we show how the approach can be
extended to account for large eliminations, using what we call
quasi-implicit products.
\end{abstract}

\section{Introduction}
\label{sec:intro}

The main goal of this paper, as of several recent works, is to
facilitate external reasoning about dependently typed
programs~\cite{mishra-linger+08,barras+08}.  This is hampered if one
must reason about specificational data occurring in terms.
Specificational data are data which have no effect on the result of
the computation, and are present in program text solely for
verification purposes.  In traditional formal methods, specification
data are also sometimes called ghost data.  For example, consider the
familiar example of vectors $\langle\vc\ \phi\ l\rangle$ indexed by
both the type $\phi$ of the elements and the length $l$ of the vector.
An example dependently typed program is the $\textit{append}_\phi$
function (we work here with monomorphic functions, but will elide type
subscripts), operating on vectors holding data of type $\phi$.  We can
define \textit{append} so that it has the following type, assuming a
standard definition of $\textit{plus}$ on unary natural numbers
$\nat$:
\[
\textit{append} \ : \ \Pi l_1:\nat.\,\Pi l_2:\nat.\,\Pi v_1:\langle\vc\ \phi\ l_1\rangle.\,\Pi v_2:\langle\vc\ \phi\ l_2\rangle.\  \langle\vc\ \phi\ (\textit{plus}\ l_1\ l_2)\rangle
\]
\noindent We might wish to prove that \textit{append} is associative.
In type theories such as \textsc{Coq}'s Calculus of Inductive
Constructions, we would do this by showing that the following type is
inhabited:
\[
\begin{array}{l}
\Pi l_1:\nat.\, \Pi l_2:\nat.\,\Pi l_3:\nat.\,
\Pi v_1:\langle \vc\ \phi\ l_1\rangle.\, \Pi v_2:\langle \vc\ \phi\ l_2\rangle. \, \Pi v_3:\langle \vc\ \phi\ l_3\rangle.\\
\ \ \ \ (\textit{append}\ (\textit{plus}\ l_1\ l_2)\ l_3\ (\textit{append}\ l_1\ l_2\ v_1\ v_2)\ v_3) 
= (\textit{append}\ l_1\ (\textit{plus}\ l_2\ l_3)\ v_1\ (\textit{append}\ l_2\ l_3\ v_2\ v_3))
\end{array}
\]
\noindent Notice how the lengths of the vectors are cluttering even
the statement of this theorem.  Tools like \textsc{Coq} allow such
arguments to be elided, when they can be uniquely reconstructed.  So
the theorem to prove can be written in the much more palatable form:
\[
\begin{array}{l}
\Pi l_1:\nat.\, \Pi l_2:\nat. \, \Pi l_3:\nat.\,
\Pi v_1:\langle \vc\ \phi\ l_1\rangle.\, \Pi v_2:\langle \vc\ \phi\ l_2\rangle.\, \Pi v_3:\langle \vc\ \phi\ l_3\rangle.\\
\ \ \ \ (\textit{append}\ (\textit{append}\ v_1\ v_2)\ v_3) 
= (\textit{append}\ v_1\ (\textit{append}\ v_2\ v_3))
\end{array}
\]
\noindent This is much more readable.  But as others have noted, while
the indices have been elided, they are not truly erased.  This means
that the proof of associativity of \textit{append} must make use of
associativity also of \textit{plus}, in order for the lengths of the
two vectors (on the two sides of the equation) to be equal.  Indeed,
even stating this equation may require some care, since the types of
the two sides are not definitionally equal: one has
$(\textit{plus}\ (\textit{plus}\ l_1\ l_2)\ l_3)$ where the other has
$(\textit{plus}\ l_1\ (\textit{plus}\ l_2\ l_3))$.  This is where
techniques like heterogeneous equality come into
play~\cite{mcbride99}.

One solution to this problem is via intersection types, also called in
this setting \emph{implicit products}, as in the Implicit Calculus of
Constructions~\cite{miquel01}.  An implicit product $\forall
x:\phi.\phi'$ is the type for functions whose arguments are erased
during conversion (cf.~\cite{mishra-linger+08,barras+08}).  Such a
type can also be viewed as an infinite intersection type, since its
typing rule will assert $\Gamma \vdash t : \forall x:\phi.\phi'$
whenever $\Gamma, x :\phi\vdash t:\phi'$.  This rule formalizes
(approximately) the idea that $t$ is in the type $\forall
x:\phi.\phi'$ whenever it is in each instance of that type (i.e., each
type $[u/x]\phi'$ for $u:\phi$).  Thus, membership in the
$\forall$-type follows from membership in the instances of the body of
the $\forall$-type, making the $\forall$-type an intersection of those
instances.  Note that this is an infinitary intersection, and thus
different from the classical finitary intersection type
of~\cite{CoppoDezani78}.  We note in passing that the current work includes
first-class datatypes, while the other works just cited all rely on
encodings of inductive data as lambda terms.

We seek to take the previous approaches further, and erase not just
arguments to functions typed with implicit products, but all
annotations.  This is not the case in the Implicit Calculus of
Constructions, for example, or its algorithmic development
$\textit{ICC}^*$~\cite{barras+08}, where typing annotations other than
implicit arguments are not erased from terms.  When testing
$\beta$-equivalence of terms, we will work with unannotated versions
of those terms, where all type- and proof-annotations have been
dropped.  For associativity of \textit{append}, the proof does not
require associativity of \textit{plus}.  From the point of view of
external reasoning, \textit{append} on vectors will be
indistinguishable from \textit{append} on lists (without statically
tracked length).  

\textbf{The $\Tvec$ Type Theory.} This paper studies versions of a
type theory we call $\Tvec$.  This system is like G\"odel's System T,
with vectors and explicit equality proofs.  We first study an
undecidable version of $\Tvec$ with equality reflection, where terms
are completely unannotated (Section~\ref{sec:core}).  We establish
standard meta-theoretic results for this unannotated system
(Section~\ref{sec:metatheory}).  We then devise a decidable annotated
version of the language, which we also call $\Tvec$ (the context will
determine whether the annotated or unannotated language is intended).
The soundness of annotated $\Tvec$ is justified by erasure to the
unannotated system (Section~\ref{sec:tvec}).  We consider the
associativity of \textit{append} in annotated $\Tvec$, as an example
(Section~\ref{sec:eg}).  This approach of studying unannotated versus
annotated versions of the type theory should be contrasted with the
approach taken in NuPRL, based on Martin-L\"of's extensional type
theory~\cite{C86,ml84}.  There, one constructs typing derivations, as
separate artifacts, for unannotated terms.  Here, we unite the typing
derivation and the unannotated term in a single artifact, namely the
annotated term.

\textbf{Large eliminations.} Type-level computation poses challenges
for our approach.  Because coercions by equality proofs are erased
from terms, if we naively extended the system with large eliminations
(types defined by pattern matching on terms) we would be able
to assign types to diverging or stuck terms. We propose a solution
based on what we call \emph{quasi-implicit products}.  These
effectively serve to mark the introduction and elimination of the
intersection type, and prohibit call-by-value reduction within an
introduction.  This saves Normalization and Progress, which would
otherwise fail.  We develop the meta-theory of an extension of the
unannotated system with large eliminations and call-by-value
reduction, including normalization (Section~\ref{sec:tveclarge}).

The basic idea of basing provable equality on the operational
semantics of unannotated terms has been implemented previously in the
\textsc{Guru} dependently programming language, publicly available at
\url{http://www.guru-lang.org}~\cite{guru09}.  The current paper
improves upon the work on \textsc{Guru}, by developing and analyzing a
formal theory embodying that idea (lacking in~\cite{guru09}).

\section{Unannotated $\Tvec$}
\label{sec:core}

The definition of unannotated $\Tvec$ uses  unannotated terms $a$ (we
sometimes also write $b$):
\[
\begin{array}{lll}
 a & ::= & x\ |\ (a\ a')\ |\ \lambda x.a\ |\ 0\ |\ (S\ a)\ |\ (R_\nat\ a\ a'\ a'') \ 
|\ \nil\ |\ (\cons\ a\ a')\ |\ (R_\vc\ a\ a'\ a'')\ |\ \join\ 
\end{array}
\]
\noindent Here, $x$ is for $\lambda$-bound variables and $S$ is for
successor (not the $S$ combinator).  $R_\nat$ is the recursor over
natural numbers, and $R_\vc$ is the recursor over vectors.  We have
constructors $\nil$ and $\cons$ for vectors.  The term construct
$\join$ is the introduction form for equality proofs.  We will not
need an elimination form, since our system includes a form of equality
reflection.  For readability, we sometimes use meta-variable $l$ for
terms $a$ intended as lengths of vectors.  Types $\phi$ are defined
by:
\[
 \phi\ ::=\ \nat\ |\ \langle \vc\ \phi\ a\rangle\ |\ \Pi x:\phi.\phi'\ |\ \forall x:\phi.\phi'\ |\ a = a' 
\]
\noindent The first $\Pi$-type is as usual, while the second is an
intersection type abstracting a specificational $x$.  This $x$ need
not be $\lambda$-abstracted in the corresponding term, nor supplied as
an argument when that term is applied, similarly to Miquel's implicit
products~\cite{miquel01}. 

The reduction relation is the compatible closure under arbitrary
contexts of the rules in Figure~\ref{fig:red}.
Figure~\ref{fig:typing} gives type assignment rules for $\Tvec$, using
a standard definition of typing contexts $\Gamma$.  We define
$\Gamma\ \textit{Ok}$ to mean that if
$\Gamma\equiv\Gamma_1,x:\phi,\Gamma_2$, then
$\textit{FV}(\phi)\subset\textit{dom}(\Gamma_1)$. We use $a\downarrow
a'$ to mean that $a$ and $a'$ are joinable with respect to our
reduction relation (i.e., there exists $\hat{a}$ such that
$a\leadsto^*\hat{a}$ and $a'\leadsto^*\hat{a}$).

Perhaps surprisingly we do not track well-formedness of types, and
indeed the \texttt{join} and \texttt{conv} rules can introduce
untypable terms into types. However, they preserve the invariant
that terms deemed equal are joinable, and that turns out to
be enough to ensure type safety.

Type assignment is not syntax-directed, due to the \texttt{(conv)},
\texttt{(spec-abs)}, and \texttt{(spec-app)} rules, and not obviously
decidable.  This will not pose a problem here as we study the
meta-theoretic properties of the system.  Section~\ref{sec:tvec}
defines a system of annotated terms which is obviously decidable, and
justifies it by translation to unannotated $\Tvec$.  We work up to
syntactic identity modulo safe renaming of bound variables, which we
denote $\equiv$.

\begin{figure}
\[
\begin{array}{lll}
(\lambda x.a)\ a' & \leadsto & [a'/x]a \\
(R_\nat\ a\ a'\ 0) & \leadsto & a \\
(R_\nat\ a\ a'\ (S\ a'')) & \leadsto & (a'\ a''\ (R_\nat\ a\ a'\ a'')) \\
(R_\vc\ a\ a'\ \nil) & \leadsto & a \\
(R_\vc\ a\ a'\ (\cons\ a_1\ a'')) & \leadsto & (a'\ a_1\ a''\ (R_\vc\ a\ a'\ a'')) \\
\end{array}
\]
\caption{Reduction semantics for unannotated $\Tvec$ terms}
\label{fig:red}
\end{figure}

\begin{figure}
\[
\begin{array}{ll}
\infer[\texttt{var}]{\Gamma\vdash x:\phi}{\Gamma(x) \equiv \phi & \Gamma\,\textit{Ok}}
&
\ 
\\ \\

\infer[\texttt{join}]{\Gamma\vdash \join : a = a'}{a \downarrow a' & \Gamma\,\textit{Ok}}
&
\infer[\texttt{conv}]{\Gamma\vdash a:[a''/x]\phi}{\Gamma \vdash a''': a' = a'' & \Gamma \vdash a:[a'/x]\phi & x\not\in\textit{dom}(\Gamma)} 
\\ \\

\infer[\texttt{spec-abs}]{\Gamma \vdash a : \forall x:\phi'.\phi}{\Gamma, x:\phi' \vdash a:\phi & x\not\in\textit{FV}(a)}
&
\infer[\texttt{spec-app}]{\Gamma \vdash a : [a'/x]\phi}{\Gamma \vdash a:\forall x:\phi'.\phi & \Gamma \vdash a':\phi'}
\\ \\

\infer[\texttt{abs}]{\Gamma \vdash \lambda x.a : \Pi x:\phi'.\phi}
                    {\Gamma, x:\phi' \vdash a:\phi}
&
\infer[\texttt{app}]{\Gamma \vdash (a\ a') : [a'/x]\phi}{\Gamma \vdash a:\Pi x:\phi'.\phi & \Gamma\vdash a':\phi'}

\\ \\
\infer[\texttt{zero}]{\Gamma\vdash 0:\nat}{\Gamma\,\textit{Ok}} 
&
\infer[\texttt{nil}]{\Gamma\vdash \nil:\langle\vc\ \phi\ 0\rangle}{\Gamma\,\textit{Ok}} 

\\ \\

\infer[\texttt{succ}]{\Gamma\vdash (S\ a):\nat}{\Gamma\vdash a:\nat  }
&
\infer[\texttt{Rnat}]{\Gamma\vdash (R_\nat\ a\ a'\ a''):[a''/x]\phi}
      {\begin{array}{l} 
       x\not\in\textit{dom}(\Gamma) \\
       \Gamma \vdash a'' : \nat \\
       \Gamma \vdash a : [0/x]\phi \\
       \Gamma \vdash a' : \Pi y:\nat. \Pi u : [y/x]\phi. [(S y)/x]\phi
       \end{array}}
\\ \\

\infer[\texttt{cons}]{\Gamma\vdash (\cons\ a\ a'):\langle \vc\ \phi\ (S\ l)\rangle}
      {\begin{array}{l}\Gamma\vdash a:\phi \\ \Gamma \vdash a':\langle \vc\ \phi\ l\rangle
       \end{array}}
&
\infer[\texttt{Rvec}]{\Gamma\vdash (R_\vc\ a\ a'\ a''):[l/y, a''/x]\phi}
      {\begin{array}{l}
       x\not\in\textit{dom}(\Gamma) \\
       \Gamma \vdash a'' : \langle \vc\ \phi'\ l\rangle \\
       \Gamma \vdash a : [0/y,\nil/x]\phi \\
       \Gamma \vdash a' : \Pi z:\phi'. \forall l:\nat. \Pi v :\langle \vc\ \phi'\ l\rangle. \Pi u : [l/y, v/x]\phi. \\
        \ \ \ \ \ \ \ \ \ \ \ \ \  [(S\ l)/y, (\cons\ z\ v)/x]\phi
       \end{array}}

\end{array}
\]
\caption{Type assignment system for unannotated $\Tvec$}
\label{fig:typing}
\end{figure}

\section{Metatheory of Unannotated $\Tvec$}
\label{sec:metatheory}

$\Tvec$ enjoys standard properties: Type Preservation, Progress (for
closed terms), and Strong Normalization.  These are all easily
obtained, the last by dependency-erasing translation to another type
theory (as done originally for LF in~\cite{HHP93}).  Here, we consider
a more semantically informative approach to Strong Normalization.
Omitted proofs may be found in a companion report on the second
author's web page (see \url{http://www.cs.uiowa.edu/~astump/papers/ITRS10-long.pdf}).

\begin{theorem}[Type Preservation]
\label{thm:tppres}
If $\Gamma \vdash a : \phi$ and $a\leadsto a'$, then $\Gamma \vdash a':\phi$.
\end{theorem}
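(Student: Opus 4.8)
The plan is to do the standard proof by induction on the derivation of $a \leadsto a'$, with an inner case analysis on the last rule used in the derivation of $\Gamma \vdash a : \phi$. Since reduction is the compatible closure of the rules in Figure~\ref{fig:red}, the congruence cases (reduction inside a subterm) follow routinely from the induction hypothesis, possibly together with a substitution lemma where the reduced subterm occurs in a type (as it can under \texttt{Rnat}, \texttt{Rvec}, \texttt{cons}, and via \texttt{conv}). The interesting cases are the five top-level contractions. For each, I would first appeal to a Generation (inversion) lemma to extract, from the typing of the redex, typing judgments for its immediate subterms, being careful that the typing need not be syntax-directed: the \texttt{conv}, \texttt{spec-abs}, and \texttt{spec-app} rules may wrap the "real" structural rule, so the Generation lemma must peel those off, collecting the equalities used by \texttt{conv} so they can be re-applied to the contractum.

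The $\beta$-rule $(\lambda x.a)\,a' \leadsto [a'/x]a$ needs the usual Substitution Lemma: if $\Gamma, x:\phi' \vdash a : \phi$ and $\Gamma \vdash a' : \phi'$ then $\Gamma \vdash [a'/x]a : [a'/x]\phi$; combined with Generation for \texttt{app} (through any \texttt{conv}/\texttt{spec-*} wrappers) this gives the result, modulo re-applying \texttt{conv}. The two $R_\nat$ rules and the two $R_\vc$ rules are handled the same way: Generation on \texttt{Rnat}/\texttt{Rvec} supplies the typings of the base branch $a$, the step branch $a'$, and the scrutinee $a''$; in the zero/\nil{} case $a$ already has (up to the substitution $[0/x]\phi$, resp.\ $[0/y,\nil/x]\phi$) the required type; in the successor/\cons{} case we instantiate the $\Pi$-type (and, for $R_\vc$, the $\forall$-type via \texttt{spec-app}) of the step branch at the appropriate arguments and apply \texttt{app} to the recursive call, whose typing comes from reassembling an \texttt{Rnat}/\texttt{Rvec} derivation for the strictly smaller scrutinee. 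One subtlety worth noting: Generation on \texttt{cons} tells us the scrutinee has type $\langle\vc\ \phi'\ (S\ l')\rangle$ for some $l'$, whereas the \texttt{Rvec} premise names the length $l$; these need only be equal up to the joinability bookkeeping, which is exactly what \texttt{conv} (driven by \texttt{join}, since the lengths are joinable) lets us absorb.

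The main obstacle is getting the Generation lemma right in the presence of the non-syntax-directed rules. Because \texttt{conv} can coerce $a : [a'/x]\phi$ to $a : [a''/x]\phi$ for any provably-equal $a', a''$, an arbitrary number of such coercions (interleaved with \texttt{spec-abs}/\texttt{spec-app}) may sit between the root of the derivation and the structural rule that actually matches the term's head constructor. I would therefore prove Generation by a separate induction that strips these rules, accumulating the net effect as a chain of \texttt{conv} coercions (and recording that all the equalities invoked were justified by \texttt{join}, hence by genuine joinability of unannotated terms — this is where the remark in the text about the preserved invariant is used). Given a correctly stated Generation lemma, and the standard Weakening and Substitution lemmas (themselves routine inductions, with \texttt{join} handled because $a \downarrow a'$ is stable under substitution), the six cases of the main induction go through with only the bookkeeping of re-wrapping the contractum in the collected \texttt{conv} steps.
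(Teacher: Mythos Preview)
Your proposal is essentially correct but organizes the induction differently from the paper. The paper proceeds by induction on the \emph{typing} derivation $\Gamma\vdash a:\phi$, and in each case considers the possible ways $a$ can step. This means the non-syntax-directed rules \texttt{conv}, \texttt{spec-abs}, \texttt{spec-app} are just ordinary cases in the main induction (apply the IH to the premise typing the same term, then reapply the rule), and an inversion lemma is needed only at the few points where an introduction form is being eliminated (the paper states this as ``Simplifying Inversion'': any derivation of $\Gamma\vdash a:\phi$ with $a$ an introduction form and $\phi$ the matching type can be rearranged into the introduction rule followed by a chain of \texttt{conv}s). Your plan instead inducts on the reduction and pushes all of the work with \texttt{conv}/\texttt{spec-*} into a stronger Generation lemma up front; this is perfectly standard and also works, but note that your congruence cases are not quite ``routine from the IH'' alone---they too must first invoke Generation (or an inner induction on the typing) to get past any wrapping \texttt{conv}/\texttt{spec-*} before the structural rule is exposed. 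One small point: you do not need to record that the collected equalities are ``genuine joinabilities.'' Symmetry of equality (needed to reverse the chain of \texttt{conv}s when typing the argument of a $\beta$-redex) is derivable internally: from $\Gamma\vdash d:c=c'$ and $\Gamma\vdash\join:c=c$ one \texttt{conv} gives $\Gamma\vdash\join:c'=c$. The paper uses exactly this trick, so no appeal to the semantic invariant is required for Preservation.
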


\begin{theorem}[Progress]
\label{thm:progress}
If $\Gamma \vdash a : \phi$ and $dom(\Gamma) \cap FV(a) = \emptyset$, then either $a$ is a value or  $\exists a'. a\leadsto a'$. Here a \emph{value} is a term of the form
\[
\begin{array}{lll}
v & ::= & \lambda x.a\ |\ 0\ |\ (S\ v)\ |\ \nil\ |\ (\cons\ v\ v')\ |\ \join
\end{array}
\]
\end{theorem}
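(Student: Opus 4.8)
The plan is to proceed by induction on the derivation of $\Gamma \vdash a : \phi$, with all the real content packaged into a Canonical Forms lemma. The one subtlety is that typing is not syntax-directed: \texttt{conv}, \texttt{spec-abs}, and \texttt{spec-app} apply to terms of any shape, so a value could in principle carry many unrelated types. The observation that controls this is purely syntactic: since $\Tvec$ has no type variables, substituting a term into a type never changes that type's head constructor at any position, and neither $\forall$-introduction nor $\forall$-elimination disturbs the type sitting underneath the leading $\forall$'s. So I would first prove: \emph{if $\Gamma \vdash v : \phi$ and $v$ is a value, then $\phi \equiv \forall x_1:\psi_1.\cdots\forall x_n:\psi_n.\phi_0$ for some $n \ge 0$, where $\phi_0 \equiv \nat$ if $v$ is $0$ or $(S\ v')$; $\phi_0 \equiv \langle\vc\ \psi\ l\rangle$ if $v$ is $\nil$ or $(\cons\ v_1\ v_2)$; $\phi_0 \equiv \Pi x:\psi.\psi'$ if $v \equiv \lambda x.a$; and $\phi_0 \equiv (b = b')$ if $v \equiv \join$.}

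Canonical Forms is itself proved by induction on the typing derivation. The \texttt{var} case is vacuous since a variable is not a value; \texttt{zero}, \texttt{succ}, \texttt{nil}, \texttt{cons}, \texttt{abs}, and \texttt{join} each produce a value of exactly the stated shape with $n = 0$. For \texttt{spec-abs} the conclusion type already has head $\forall$, and the induction hypothesis applied to the premise (whose type is the conclusion's type minus one $\forall$) finishes it. For \texttt{spec-app} the premise's type is $\forall x:\psi'.\chi$, hence by the induction hypothesis a $\forall$-telescope of length $\ge 1$ ending in one of the four shapes; substituting the argument for $x$ removes one $\forall$ and leaves the head of $\phi_0$ intact, as noted. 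For \texttt{conv} the two types are $[a'/x]\chi$ and $[a''/x]\chi$; since substitution preserves head constructors everywhere in $\chi$ (including inside any $\forall$-prefix), both types have the same telescope shape, so the claim transfers from premise to conclusion via the induction hypothesis.

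Given Canonical Forms, the Progress induction is routine. If the last rule is \texttt{zero}, \texttt{nil}, \texttt{join}, or \texttt{abs}, then $a$ is already a value; for \texttt{succ} and \texttt{cons}, apply the induction hypothesis to the subterm(s) --- if one reduces, close under the surrounding context to reduce $a$, otherwise all subterms are values and $a$ is a value. The \texttt{var} case cannot arise, since $\Gamma \vdash x : \phi$ would force $x \in \textit{dom}(\Gamma) \cap \textit{FV}(a)$, contrary to hypothesis. For \texttt{conv} and \texttt{spec-app} the subject term is unchanged and the induction hypothesis applies verbatim; for \texttt{spec-abs} the context gains $x:\psi'$, but its side condition $x \notin \textit{FV}(a)$ preserves $\textit{dom}(\Gamma,x:\psi') \cap \textit{FV}(a) = \emptyset$, so the induction hypothesis again applies. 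The substantive cases are \texttt{app}, \texttt{Rnat}, and \texttt{Rvec}: apply the induction hypothesis to the principal premise (the operator for \texttt{app}, the scrutinee for \texttt{Rnat}/\texttt{Rvec}); if it reduces, reduce $a$ inside the matching context; if it is a value, Canonical Forms forces its shape --- a $\lambda$-abstraction at a $\Pi$-type, $0$ or $(S\ v)$ at $\nat$, $\nil$ or $(\cons\ v_1\ v_2)$ at a vector type --- and in every case exactly one rule of Figure~\ref{fig:red} applies, giving $a \leadsto a'$.

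The main obstacle is Canonical Forms, and specifically the handling of \texttt{conv}, \texttt{spec-abs}, and \texttt{spec-app}. None of these is individually hard, but the induction only goes through once the statement is generalized to a $\forall$-telescope, and the whole argument rests on the single invariant that the head constructor of a type below its $\forall$-prefix is stable under substitution and under $\forall$-introduction/elimination. It is worth noting that Progress does not need the joinability invariant mentioned after Figure~\ref{fig:typing}: because \texttt{conv} cannot change a type's head constructor for the purely syntactic reason above, even an ``ill-typed'' coercion cannot, say, turn $\join$ into a function.
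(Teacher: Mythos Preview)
Your proposal is correct and follows essentially the same approach as the paper: an induction on the typing derivation together with a Canonical Forms lemma generalized to a leading $\forall$-telescope, with the key observation that \texttt{conv} cannot change the head constructor of a type because term-for-variable substitution leaves type structure intact. The only cosmetic difference is that you phrase Canonical Forms from value shape to type shape, whereas the paper phrases it from type shape to value shape; the induction is identical in either direction.
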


\subsection{Semantics of equality}

For our Strong Normalization proof, a central issue is providing an
interpretation for equality types in the presence of free variables.
We would like to interpret equations like $(\textit{plus}\ 2\ 2) = 4$
(where the numerals abbreviate terms formed with $S$ and $0$ as usual,
and \textit{plus} has a standard recursive definition), as simply
$(\textit{plus}\ 2\ 2) \downarrow 4$.  But when the two terms contain
free variables -- e.g., in $(\textit{plus}\ x\ y) =
(\textit{plus}\ y\ x)$ -- or when the context is inconsistent, the
semantics should make the equation true, even though its sides are not
joinable.  So our semantics for equality types is joinability under
all \emph{ground instances} of the context $\Gamma$.  The notation for
this is $a \sim_\Gamma a'$.  The definition must be given as part of
the definition of the interpretation of types, because we want to
stipulate that the substitutions $\sigma$ replace each variable $x$ by
a ground term in the interpretation of $\sigma \Gamma(x)$.  When
$\Gamma$ is empty, we will write $a\sim_\Gamma a'$ as $a\sim a'$.  We
use a similar convention for other notations subscripted by a context
below.

\subsection{The interpretation of types}

The interpretation of types is given in Figure~\ref{fig:interp}.  In
that figure, we write $\To$ and $\Leftrightarrow$ for meta-level
implication and equivalence, respectively, and give $\Leftrightarrow$
lowest precedence among all infix symbols, and $\To$ next lowest
precedence.  We stipulate up front (not in the clauses in the figure)
that $a\in\interp{\phi}_\Gamma$ requires $a\in\SN$ (where $\SN$ is the
set of strongly normalizing terms) and $\Gamma\vdash a:\phi$.  The
definition in Figure~\ref{fig:interp} proceeds by well-founded
recursion on the triple $(|\Gamma|,d(\phi),l(a))$, in the natural
lexicographic ordering.  Here, $|\Gamma|$ is the cardinality of
$\textit{dom}(\Gamma)$, and if $a\in\SN$, then we make use of a
(finite) natural number $l(a)$ bounding the number of symbols in the
normal form of $a$.  We need to assume confluence of reduction
elsewhere in this proof, so it does not weaken the result to assume
here that each term has at most one normal form.  While we believe
confluence for this language should be easily established by standard
methods, that proof remains to future work.  The quantity $d(\phi)$ is
the depth of $\phi$, defined as follows:
\[
\begin{array}{lllllll}
d(\nat) & = & 0 &\ \ \ \ \ &
d(\langle\vc\ \phi\ l\rangle) & = & 1+d(\phi) \\
d(\Pi x:\phi.\phi') & = & 1+\textit{max}(d(\phi),d(\phi')) & \ \ \ \ \ &
d(\forall x:\phi.\phi') & = & 1+\textit{max}(d(\phi),d(\phi'))\\
d(a = a') & = & 0 & \ &\ &\ &\ 
\end{array}
\]
\noindent Note that $d(\phi)=d([a/x]\phi)$ for all $a$, $x$, and
$\phi$.  Also, in the clause for $\vc$-types, since the right hand
side of the clause conjoins the condition $a\in\SN$, $l(a)$ is
defined, and we have $l(a'')<l(\cons\ a'\ a'')$.  The figure gives an
inductive definition for when $\sigma\in\interp{\Gamma}_\Delta$.  We
call such a $\sigma$ a \emph{closable substitution}.

\begin{figure}
\[
\begin{array}{l}
\begin{array}{lll}
a \in \interp{\nat}_\Gamma & \Leftrightarrow & \top\\
a \in \interp{\langle\vc\ \phi\ l\rangle}_\Gamma & \Leftrightarrow & (a\leadsto^* \nil\ \To\  l \sim_\Gamma 0) \ \wedge\\
\ &\ & \begin{array}{ll}
       \forall a'.\, \forall a''.\, a\leadsto^* (\cons\ a'\ a'')\ \To  &
      (i)\ a'\in\interp{\phi}_\Gamma\ \wedge\ \exists l'. \\
\ &   (ii)\ a''\in\interp{\langle\vc\ \phi\ l'\rangle}_\Gamma\ \wedge\\ 
\ &   (iii)\ l \sim_\Gamma (S\ l')
\end{array}\\
a\in\interp{\Pi x:\phi'.\phi}_\Gamma & \Leftrightarrow & \forall a'\in\interp{\phi'}^+_\Gamma.\ (a\ a')\in\interp{[a'/x]\phi}_\Gamma \\
a\in\interp{\forall x:\phi'.\phi}_\Gamma & \Leftrightarrow & \forall a'\in\interp{\phi'}^+_\Gamma.\ a\in\interp{[a'/x]\phi}_\Gamma \\
a \in \interp{a_1 = a_2}_\Gamma & \Leftrightarrow & (a \leadsto^* \join \To a_1 \sim_\Gamma a_2) \\ 
\end{array}\\ \\
\textnormal{\underline{where}:}\\
\begin{array}{lll}

a \sim_\Gamma a' & \Leftrightarrow & \forall \sigma.\ \sigma\in\interp{\Gamma}\ \Rightarrow\ (\sigma a) \downarrow (\sigma a') \\ 

a\in\interp{\phi}^+_{\Gamma} & \Leftrightarrow & a\in\interp{\phi}_\Gamma
\ \wedge \ (|\Gamma| > 0 \ \To\ \forall \sigma\in\interp{\Gamma}.\ \sigma a\in\interp{\sigma \phi})
\end{array} \\ \\ 
\textnormal{\underline{and also}:} \\ 
\begin{array}{ll}
\infer{\emptyset\in\interp{\cdot}_{\Delta}}{\ }
&
\infer{\sigma\cup\{(x,a)\}\in\interp{\Gamma,x:\phi}_{\Delta}}
      {a\in\interp{\sigma\phi}^+_{\Delta} & \sigma \in\interp{\Gamma}_{\Delta}}
\end{array}
\end{array}
\]
\caption{The interpretation $a\in\interp{\phi}_\Gamma$ of strongly normalizing terms with $\Gamma\vdash a:\phi$}
\label{fig:interp}
\end{figure}

In general, the inductive definition of closable substitution
$\sigma\in\interp{\Gamma}_{\Delta}$ allows the range of the
substitution to contain open terms.  When $\Delta$ is empty, $\sigma$
is a \emph{closing} substitution.  The definition of $\interp{\cdot}$
for types uses the definition of closable substitutions in a
well-founded way.  We appeal only to $\interp{\Gamma}$ (with an empty
context $\Delta$) in the definitions of $\interp{\phi}_\Gamma$ and
$\interp{\phi}^+_\Gamma$.  Where the definition of
$\interp{\Gamma}_\Delta$ appeals back to the interpretation of types,
it does so only when this $\Gamma$ was non-empty, and with an empty
context given for the interpretation of the type.  So $|\Gamma|$ has
indeed decreased from one appeal to the interpretation of types to the
next.

\subsection{Critical properties}
\label{sec:critprop}

A term is defined to be \emph{neutral} iff it is of the form $(a\ a')$
or $(R_B\ a\ a'\ a'')$ (with $B \in \{\nat, \vc\}$), or if it is a
variable.  We prove three critical properties of reducibility at type
$\phi$, by mutual induction on $(|\Gamma|,d(\phi),l(a))$.  Here we
write $\next(a) = \{ a' \mid a \leadsto a' \}$.

\ 

\noindent \textbf{R-Pres}. If $a\in\interp{\phi}_\Gamma$, then $\next(a)\subset\interp{\phi}_\Gamma$.

\noindent \textbf{R-Prog}. If $a$ is neutral and $\Gamma\vdash
a:\phi$, then
$\next(a)\subset\interp{\phi}_\Gamma$ implies $a\in\interp{\phi}_\Gamma$.

\noindent \textbf{R-Join}. Suppose $a_1 \sim_\Gamma a_2$;
$\Gamma\vdash a':a_1 = a_2$ for some $a'$; and
$x\not\in\textit{dom}(\Gamma)$. Then $\interp{[a_1/x]\phi}_\Gamma\subset
\interp{[a_2/x]\phi}_\Gamma$.

\subsection{Soundness of typing with respect to the interpretation}

Our typing rules are sound with respect to our interpretation of types
(Figure~\ref{fig:interp}).  As usual, we must strengthen the statement
of soundness for the induction to go through.  We need a quasi-order
$\subset$ on contexts, defined by:
$\Delta\subset\Gamma\ \Leftrightarrow\ \forall
x\in\textit{dom}(\Delta).\ \Delta(x) = \Gamma(x)$.

\begin{theorem}[Soundness for Interpretations]
\label{thm:soundness}
Suppose $\Gamma\vdash a:\phi$.  Then for any $\Delta\,\textit{Ok}$
with $\Delta\subset\Gamma$ and $\sigma\in\interp{\Gamma}_{\Delta}$, we
have $(\sigma a)\in\interp{\sigma \phi}_{\Delta}$.  
\end{theorem}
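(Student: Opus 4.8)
The plan is to prove this by induction on the derivation of $\Gamma \vdash a : \phi$, with the statement strengthened exactly as given — quantifying over all $\Delta\,\textit{Ok}$ with $\Delta \subset \Gamma$ and all $\sigma \in \interp{\Gamma}_\Delta$. This strengthening is what lets the bound-variable cases go through: when we enter the body of a $\lambda$, a $\Pi$, a $\forall$, or a recursor branch, the context grows by one variable $x:\phi'$, and we need to extend $\sigma$ by a suitable value $a'$ for $x$; the hypothesis $\Delta \subset \Gamma$ lets us keep $\Delta$ fixed while $\Gamma$ grows, and closability ($\interp{\cdot}^+$) is what guarantees the extended $\sigma$ lies in $\interp{\Gamma, x:\phi'}_\Delta$. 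Before starting the main induction I would record the basic facts already implicit in Figure~\ref{fig:interp}: that $a \in \interp{\phi}_\Gamma$ entails $a \in \SN$ and $\Gamma \vdash a : \phi$, that closable substitutions are built exactly by the two inference rules at the bottom of the figure, and — crucially — the three critical properties R-Pres, R-Prog, R-Join from Section~\ref{sec:critprop}, which I will invoke freely.

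First I would handle the leaf rules. For \texttt{var}, $a \equiv x$ with $\Gamma(x) \equiv \phi$; since $\sigma \in \interp{\Gamma}_\Delta$ and $x \in \textit{dom}(\Gamma)$, unwinding the inductive definition of closable substitution gives $\sigma x \in \interp{\sigma\phi}^+_\Delta \subseteq \interp{\sigma\phi}_\Delta$, which is what is needed (using $\Delta \subset \Gamma$ to see that $\sigma$ treats the prefix of $\Gamma$ up to $x$ correctly). For \texttt{zero}, \texttt{nil}, \texttt{succ}: these are immediate from the interpretation clauses — $\interp{\nat}_\Delta$ holds of any SN well-typed term of type $\nat$, and for $\nil$ and $(S\,a)$ one checks the $\vc$- and $\nat$-clauses directly, using the IH for the subterm in the \texttt{succ} and \texttt{cons} cases. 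For \texttt{join}, $a \equiv \join$ and $\phi \equiv a_1 = a_2$ with $a_1 \downarrow a_2$; I must show $\sigma\join \in \interp{\sigma a_1 = \sigma a_2}_\Delta$, i.e.\ that whenever $\sigma\join \leadsto^* \join$ (which it does, trivially) we have $\sigma a_1 \sim_\Delta \sigma a_2$. Since joinability is preserved under substitution of closing substitutions (reductions commute with substitution), and any $\tau \in \interp{\Delta}$ composes with $\sigma$ to a closing substitution, $\tau(\sigma a_1) \downarrow \tau(\sigma a_2)$, giving $\sigma a_1 \sim_\Delta \sigma a_2$ as required.

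Next the structural cases. For \texttt{abs}/\texttt{app} and \texttt{spec-abs}/\texttt{spec-app}, the argument is the standard Girard-style reducibility pattern: for \texttt{abs}, given $a' \in \interp{\sigma\phi'}^+_\Delta$, form $\sigma' = \sigma \cup \{(x,a')\}$, check $\sigma' \in \interp{\Gamma,x:\phi'}_\Delta$, apply the IH to get $\sigma'(\lambda x.a \; \text{body}) \in \interp{\sigma'\phi}_\Delta$, and conclude that $(\sigma\lambda x.a)\,a' \leadsto [a'/x](\sigma a)$ reduces into the interpretation, then use R-Prog (the application is neutral) together with R-Pres to pull membership back to $(\sigma\lambda x.a)\,a'$ itself — here one also needs that $\lambda x.\sigma a \in \SN$, which follows because its reducts under the single $\beta$-redex and under congruence are all SN. The \texttt{spec-abs} case is similar but uses $x \notin \textit{FV}(a)$ so that $\sigma a$ is literally in every instance $\interp{[a'/x]\sigma\phi}_\Delta$; \texttt{spec-app} and \texttt{app} just instantiate. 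The recursor cases \texttt{Rnat} and \texttt{Rvec} are the most laborious: one does an inner induction on the SN-length of the scrutinee $\sigma a''$, using R-Pres to handle reductions inside $a''$ and the reduction rules of Figure~\ref{fig:red} together with the IH for $a$ and $a'$ (and the $\vc$-clause's bookkeeping about $l'$) to handle the cases where $\sigma a''$ reaches $0$, $(S\,\cdot)$, $\nil$, or $(\cons\,\cdot\,\cdot)$; R-Prog closes the neutral term $(R_B\,\cdots)$ once all its reducts are shown reducible. The key subtlety in \texttt{Rvec} is that the motive $\phi$ mentions both $y$ (the length) and $x$ (the vector), and the $\vc$-clause only tells us $l \sim_\Gamma (S\,l')$ rather than a definitional equality — this is precisely where R-Join is needed, to transport membership in $\interp{[\ldots l \ldots]\phi}$ to $\interp{[\ldots (S\,l')\ldots]\phi}$.

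The \texttt{conv} case is where I expect the real work, and it is essentially an application of R-Join. We have $\Gamma \vdash a''' : a' = a''$, $\Gamma \vdash a : [a'/x]\phi$, and want $\sigma a \in \interp{[\sigma a''/x]\sigma\phi}_\Delta$. By IH, $\sigma a \in \interp{[\sigma a'/x]\sigma\phi}_\Delta$ and $\sigma a''' \in \interp{\sigma a' = \sigma a''}_\Delta$; the latter forces $\sigma a''' \leadsto^* \join$ (since $\sigma a''' \in \SN$ and is well-typed at an equality type, and by Progress/canonical-forms its normal form is $\join$), hence $\sigma a' \sim_\Delta \sigma a''$. Now R-Join with the data $\sigma a' \sim_\Delta \sigma a''$, the inhabitant $\sigma a'''$ of $\sigma a' = \sigma a''$, and $x \notin \textit{dom}(\Delta)$ (arrange this by $\alpha$-renaming $x$ fresh for $\Delta$, legitimate since $x \notin \textit{dom}(\Gamma) \supseteq \textit{dom}(\Delta)$ by $\Delta \subset \Gamma$) gives $\interp{[\sigma a'/x]\sigma\phi}_\Delta \subseteq \interp{[\sigma a''/x]\sigma\phi}_\Delta$, which delivers the goal. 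The main obstacle throughout is bookkeeping: making sure at each step that $\Delta\,\textit{Ok}$, $\Delta \subset (\Gamma,x:\phi')$, and $\sigma' \in \interp{\Gamma,x:\phi'}_\Delta$ all hold for the extended context and substitution, that the fresh $x$ is chosen not just outside $\textit{dom}(\Gamma)$ but outside $\textit{dom}(\Delta)$ and the free variables of $\sigma$'s range, and that the $\SN$ side-condition built into the interpretation is re-established in every case (usually a one-line appeal to the fact that a term is SN iff all its one-step reducts are). None of these is conceptually deep, but they are exactly the places where a careless proof breaks.
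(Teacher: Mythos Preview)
Your overall strategy matches the paper's, but two steps would fail as written.

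First, in the \texttt{abs} (and \texttt{spec-abs}) case you write that $\lambda x.\sigma a \in \SN$ ``follows because its reducts under the single $\beta$-redex and under congruence are all SN.'' This is circular: to run the inner R-Prog argument you need a well-founded inner induction on something like $\nu(\sigma a)+\nu(a')$, which already presupposes $\sigma a\in\SN$; and if $\interp{\sigma\phi'}^+_\Delta$ is empty (which it can be, e.g.\ when $\sigma\phi'$ is $0=S\,0$ and $\Delta$ is empty) there is no $a'$ from which to extract even that. The paper's fix is exactly the reason the theorem quantifies over \emph{all} $\Delta\subset\Gamma$ rather than a fixed one: extend the target context to $\Delta'\equiv\Delta,x{:}\sigma\phi'$, note that $x\in\interp{\sigma\phi'}^+_{\Delta'}$ by R-Prog (a variable is neutral with no reducts), lift $\sigma$ to $\interp{\Gamma}_{\Delta'}$ via a separate Weakening-Substitutions lemma, and instantiate the IH in context $\Delta'$ with $a'\equiv x$ to obtain $\sigma a\in\interp{\sigma\phi}_{\Delta'}$, hence $\sigma a\in\SN$. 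The paper flags this as precisely the point where allowing open substitutions in the statement is critical.

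Second, in the \texttt{conv} case you claim $\sigma a''' \leadsto^* \join$ by Progress and Canonical Forms. But $\sigma a'''$ is typed in $\Delta$, not the empty context, and its free variables may lie in $\textit{dom}(\Delta)$; Progress does not apply to such a term. The paper instead establishes $\sigma a'\sim_\Delta\sigma a''$ pointwise: for each closing $\tau\in\interp{\Delta}$, a Composing-Substitutions lemma gives $\tau\circ\sigma\in\interp{\Gamma}$, the IH is \emph{re-applied} with this closing substitution to obtain $\tau(\sigma a''')\in\interp{\tau(\sigma a')=\tau(\sigma a'')}$ in the empty context, and only then do Progress and Canonical Forms yield $\tau(\sigma a''')\leadsto^*\join$, hence $\tau(\sigma a')\downarrow\tau(\sigma a'')$.
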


\noindent Critically, we quantify over possibly open substitutions
$\sigma$, whose ranges consist of closable terms.  

\begin{corollary}[Strong Normalization]
\label{cor:sn}
If $\Gamma \vdash a : \phi$, then $a\in\SN$.
\end{corollary}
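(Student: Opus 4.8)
The plan is to obtain Strong Normalization as an immediate consequence of the Soundness theorem, applied to the identity substitution. Recall that membership $a\in\interp{\phi}_\Gamma$ was stipulated up front to entail $a\in\SN$; so it suffices, given $\Gamma\vdash a:\phi$, to produce a context $\Delta$ and substitution $\sigma$ satisfying the hypotheses of Theorem~\ref{thm:soundness} and for which $\sigma a = a$. Taking $\Delta := \Gamma$ and $\sigma := \textit{id}$, the identity on $\textit{dom}(\Gamma)$, the theorem then yields $a = \textit{id}\,a \in \interp{\textit{id}\,\phi}_\Gamma = \interp{\phi}_\Gamma \subseteq \SN$. The hypotheses $\Delta\,\textit{Ok}$ and $\Delta\subset\Gamma$ hold trivially once we know $\Gamma\,\textit{Ok}$, and the latter follows from the derivation of $\Gamma\vdash a:\phi$: every leaf of that derivation is an instance of \texttt{var}, \texttt{join}, \texttt{zero}, or \texttt{nil}, each carrying an $\textit{Ok}$ side-condition, and no rule shrinks the context from conclusion to premise, so $\Gamma$ is a prefix of an $\textit{Ok}$ context and hence $\textit{Ok}$.

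The real work is to verify that $\textit{id}\in\interp{\Gamma}_\Gamma$. I would prove the more general claim that $\textit{id}_{\textit{dom}(\Gamma)}\in\interp{\Gamma}_\Delta$ for every $\Delta\,\textit{Ok}$ having $\Gamma$ as a prefix, by induction on $\Gamma$; the desired instance is $\Delta=\Gamma$. The empty case is the first rule defining closable substitutions. For $\Gamma=\Gamma',x:\phi$, the induction hypothesis gives $\textit{id}_{\textit{dom}(\Gamma')}\in\interp{\Gamma'}_\Delta$, so by the second rule it remains to show $x\in\interp{\phi}^+_\Delta$ (noting $(\textit{id}_{\textit{dom}(\Gamma')})\phi=\phi$ since $\textit{FV}(\phi)\subseteq\textit{dom}(\Gamma')$ by $\Gamma\,\textit{Ok}$). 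For the underlying membership $x\in\interp{\phi}_\Delta$: $x$ is neutral, $\Delta\vdash x:\phi$ by \texttt{var} since $\Delta(x)\equiv\phi$ and $\Delta\,\textit{Ok}$, and $\next(x)=\emptyset\subseteq\interp{\phi}_\Delta$, so $x\in\interp{\phi}_\Delta$ by \textbf{R-Prog}. For the $(+)$-clause, assume $|\Delta|>0$ and let $\tau\in\interp{\Delta}$; unwinding the inductive definition of closable substitution down to the binding of $x$ produces $\tau x\in\interp{\tau_1\phi}^+\subseteq\interp{\tau_1\phi}$, where $\tau_1$ is the portion of $\tau$ on the variables of $\Delta$ preceding $x$, and $\tau_1\phi=\tau\phi$ because $\textit{FV}(\phi)$ is contained in that prefix by $\Delta\,\textit{Ok}$; hence $\tau x\in\interp{\tau\phi}$, as required.

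The step I expect to be the main obstacle is precisely this verification that the identity substitution is closable: it is the one place where the interpretation's machinery has to be exercised, since placing each free variable into its interpretation requires \textbf{R-Prog}, discharging the $(+)$-decoration requires unwinding the inductive definition of $\interp{\Delta}$, and throughout one must keep the bookkeeping about $\textit{Ok}$-ness and context prefixes consistent so that the substitutions applied to types reduce to the identity. Once $\textit{id}\in\interp{\Gamma}_\Gamma$ is established, the appeal to Theorem~\ref{thm:soundness} together with the standing $\SN$-requirement on interpretations closes the argument with no further effort.
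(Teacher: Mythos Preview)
Your proposal is correct and follows essentially the same route as the paper: instantiate Soundness with $\Delta=\Gamma$ and the identity substitution, argue that each variable $x$ lies in $\interp{\Gamma(x)}^+_\Gamma$ via \textbf{R-Prog} (for the base membership) and by unwinding the definition of $\sigma'\in\interp{\Gamma}$ (for the closability clause), and then read off $a\in\SN$ from $a\in\interp{\phi}_\Gamma$. Your treatment is simply more explicit than the paper's---you spell out the induction on $\Gamma$, the prefix generalization, and the $\Gamma\,\textit{Ok}$ check---whereas the paper compresses all of this into a single sentence.
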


\begin{corollary}
\label{cor:eqdec}
If $\Gamma \vdash a : \phi$ and $\Gamma \vdash a' : \phi'$, then $a \downarrow a'$ is decidable.
\end{corollary}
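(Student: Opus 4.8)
The plan is to reduce the statement to Strong Normalization (Corollary~\ref{cor:sn}) together with the elementary observation that the reduction relation $\leadsto$ is finitely branching. First, since $\Gamma\vdash a:\phi$ and $\Gamma\vdash a':\phi'$, Corollary~\ref{cor:sn} gives $a\in\SN$ and $a'\in\SN$. Second, each term has only finitely many one-step reducts: the redexes in a term are among its finitely many subterms, each matching at most one left-hand side of a rule in Figure~\ref{fig:red}, and contracting a given redex is deterministic; hence $\leadsto$ is finitely branching.

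Combining finite branching with $a\in\SN$ via K\"onig's Lemma, the tree of all reduction sequences starting from $a$ is finite. In particular the set $\mathrm{Red}^*(a) = \{\, b \mid a\leadsto^* b \,\}$ is finite, and it is computable: starting from $a$, repeatedly take the one-step reducts of terms already found, in breadth-first fashion, until no new term appears; $\SN$ plus finite branching guarantee this halts. Likewise $\mathrm{Red}^*(a')$ is finite and computable. The decision procedure is then: compute $\mathrm{Red}^*(a)$ and $\mathrm{Red}^*(a')$ and test whether they share a common element (syntactic identity $\equiv$ modulo renaming of bound variables is plainly decidable). By the very definition of $\downarrow$, we have $a\downarrow a'$ iff $\mathrm{Red}^*(a)\cap\mathrm{Red}^*(a')\neq\emptyset$, so this decides $a\downarrow a'$.

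If one additionally uses confluence of $\leadsto$ (which the metatheory already assumes), the procedure streamlines: normalize $a$ to its unique normal form $\hat a$ and $a'$ to $\hat a'$ — possible and terminating because every maximal reduction sequence is finite by $\SN$ — and return whether $\hat a\equiv\hat a'$; indeed any common reduct of $a$ and $a'$ itself reduces to the shared normal form, so $a\downarrow a'$ iff $\hat a\equiv\hat a'$. The one point to be careful about, and the only place where anything substantial is used, is that the search must be guaranteed to halt \emph{uniformly}, i.e.\ we genuinely need $\SN$ rather than mere weak normalization: without $\SN$, a naive hunt for a common reduct could run forever. Since $\SN$ for typable terms is exactly what Corollary~\ref{cor:sn} supplies, there is no real obstacle here — all the difficulty has already been discharged in the Strong Normalization development that precedes this corollary.
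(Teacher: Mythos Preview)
Your proposal is correct and is essentially the standard argument the paper implicitly relies on: the paper does not spell out a proof of this corollary, treating it as an immediate consequence of Strong Normalization (Corollary~\ref{cor:sn}) together with the assumed confluence of $\leadsto$. Your write-up makes this explicit, and even supplies the confluence-free variant via K\"onig's Lemma before giving the streamlined normalize-and-compare procedure; there is nothing to correct.
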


\begin{corollary}[Equational Soundness]
\label{cor:eqsnd}
If $\cdot \vdash a:b_1 = b_2$, then $b_1 \downarrow b_2$.
\end{corollary}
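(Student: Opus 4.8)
The plan is to derive this directly from Theorem~\ref{thm:soundness} (Soundness for Interpretations) instantiated at the empty context, together with the definition of the interpretation of equality types. Suppose $\cdot \vdash a : b_1 = b_2$. First I would apply Theorem~\ref{thm:soundness} with $\Gamma = \cdot$, $\Delta = \cdot$ (which is trivially $\textit{Ok}$ and satisfies $\cdot \subset \cdot$), and $\sigma = \emptyset$, which is in $\interp{\cdot}_{\cdot}$ by the first inference rule for closable substitutions. Since $\emptyset$ acts as the identity on $a$ and on $b_1 = b_2$, this yields $a \in \interp{b_1 = b_2}$. By the clause for equality types in Figure~\ref{fig:interp}, this means $a \leadsto^* \join \To b_1 \sim b_2$, i.e.\ $b_1 \sim b_2$ provided $a$ reduces to $\join$.

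So the crux is to establish that $a \leadsto^* \join$. This is where Progress (Theorem~\ref{thm:progress}) and Type Preservation (Theorem~\ref{thm:tppres}) come in, in the standard ``safety'' combination: since $\cdot \vdash a : b_1 = b_2$ and $a \in \SN$ by Corollary~\ref{cor:sn}, I would take the normal form $\hat a$ of $a$; by iterated Type Preservation $\cdot \vdash \hat a : b_1 = b_2$, and by Progress $\hat a$ is either a value or reducible; being in normal form it must be a value. The only value inhabiting an equality type is $\join$ — this requires a small inversion argument on the typing of values at an equality type, ruling out $\lambda x.a$, $0$, $(S\,v)$, $\nil$, $(\cons\,v\,v')$ (these are handled because none of the typing rules can conclude an equality type for them except via \texttt{conv}, which merely substitutes inside $\phi$ and cannot change a non-equality type shape into $=$; and the introduction rules for those values give $\nat$, $\langle\vc\ \ldots\rangle$, or $\Pi$-types, never $=$). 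Hence $\hat a \equiv \join$, so $a \leadsto^* \join$.

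Combining the two parts: $a \leadsto^* \join$, so from $a \in \interp{b_1 = b_2}$ we get $b_1 \sim b_2$, which by definition (at the empty context, $\sigma = \emptyset$) unfolds to $b_1 \downarrow b_2$. The main obstacle is the inversion lemma characterizing values at equality type; everything else is a routine assembly of the already-stated metatheorems. One should be slightly careful that the inversion argument accounts for the non-syntax-directed rules \texttt{conv}, \texttt{spec-abs}, and \texttt{spec-app}: \texttt{spec-abs} and \texttt{spec-app} change the type from/to a $\forall$-type and cannot produce a bare equality type at the outermost level from a value whose ``natural'' type is a data or function type, while \texttt{conv} preserves the outer type constructor, so an induction on the typing derivation of the value $\hat a$ suffices.
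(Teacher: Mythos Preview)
Your proof is correct, and for the first step---establishing $a \leadsto^* \join$---it coincides with the paper's: Type Preservation, Progress, Strong Normalization, and a canonical-forms inversion on values at equality type. The paper packages that inversion as a separate Canonical Forms lemma; your informal sketch of it is essentially right, though one remark is imprecise: \texttt{spec-app} \emph{can} produce a bare equality type at the outermost level (when the body of the $\forall$ already is one), so the argument must recurse into the derivation of the $\forall$-typed premise rather than stop there. The induction still goes through, and the paper's lemma does exactly this.

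Where you genuinely diverge from the paper is in extracting $b_1 \downarrow b_2$ once you know $\cdot \vdash \join : b_1 = b_2$. You route through the semantic interpretation: Theorem~\ref{thm:soundness} at the empty context gives $a \in \interp{b_1 = b_2}$, and the equality clause of Figure~\ref{fig:interp} together with $a \leadsto^* \join$ yields $b_1 \sim b_2$, which unfolds to $b_1 \downarrow b_2$ when $\Gamma$ is empty. The paper instead argues syntactically: by (Simplifying) Inversion, the derivation of $\cdot \vdash \join : b_1 = b_2$ must be a \texttt{join}-introduction (giving some $c_1 \downarrow c_2$) followed by a chain of \texttt{conv} inferences, and joinability is then propagated through that chain using closure of $\downarrow$ under substitution of joinable terms. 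Your semantic route is arguably tidier: it reuses the heavy lifting already done in Theorem~\ref{thm:soundness} and sidesteps the need to argue, for each \texttt{conv} premise $\cdot \vdash a''' : a' = a''$ in the chain, that $a' \downarrow a''$---which in the paper's sketch amounts to a recursive appeal to the very corollary being proved.
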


\begin{corollary}[Logical Soundness]
There is a type $\phi$ such that $\vdash a:\phi$ does not hold for any $a$.
\end{corollary}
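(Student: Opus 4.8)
The plan is to exhibit a manifestly false closed equation as the desired $\phi$ and to refute its inhabitation by appeal to Equational Soundness (Corollary~\ref{cor:eqsnd}). Concretely, I would take $\phi \equiv (0 = (S\ 0))$.

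To see that $\vdash a : 0 = (S\ 0)$ holds for no $a$, suppose otherwise. Then Corollary~\ref{cor:eqsnd} yields $0 \downarrow (S\ 0)$, i.e.\ a common reduct $\hat a$ of $0$ and $(S\ 0)$. But no reduction rule of Figure~\ref{fig:red} applies within $0$ or within $(S\ 0)$ --- neither term contains a redex --- so both are already normal forms, forcing $\hat a \equiv 0$ and $\hat a \equiv (S\ 0)$, which is absurd since $0$ and $(S\ 0)$ are syntactically distinct. Hence $\phi \equiv (0 = (S\ 0))$ is uninhabited, which is what we want. Because both sides of the chosen equation are already in normal form, this final step does not even use confluence; for an arbitrary false equation one would instead invoke confluence to argue that distinct normal forms cannot be joinable.

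There is essentially no obstacle beyond the results already in hand: Equational Soundness does all the work, and it in turn rests on Soundness for Interpretations (Theorem~\ref{thm:soundness}). Indeed, the empty substitution is trivially a closable substitution for the empty context, so $a \in \interp{0 = (S\ 0)}$; by Strong Normalization together with Progress, $a$ reduces to a value, which by a canonical-forms analysis for equality types must be $\join$; and then the equality clause of Figure~\ref{fig:interp} forces $0 \sim (S\ 0)$, that is, $0 \downarrow (S\ 0)$. So the only genuinely new thing to verify for this corollary is the trivial observation that $0$ and $(S\ 0)$ are distinct normal forms.
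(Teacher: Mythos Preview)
Your proposal is correct and follows exactly the paper's approach: choose $\phi \equiv (0 = (S\ 0))$ and invoke Equational Soundness to rule out any inhabitant. The paper's proof is the one-line version of what you wrote; your additional justification that $0$ and $(S\ 0)$ are distinct normal forms, and your recap of how Equational Soundness is obtained, are fine elaborations but not new ingredients.
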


\noindent \textbf{Proof.} By Equational Soundness, we do not have $\vdash a:0 = (S\ 0)$ for any $a$.

\section{Annotated $\Tvec$}
\label{sec:tvec}

We now define a system of annotated terms $t$, and a decidable type
computation system deriving judgments $\Gamma \Vdash t:\phi$,
justified by dropping annotations via $\drop{\cdot}$ (defined in
Figure~\ref{fig:drop}).  The annotated terms $t$ are the following.
Annotations include types $\phi$, possibly with designated free
variables, as in $x.\phi$ (bound by the dot notation).
\[
\begin{array}{lll}
 t & ::= & x\ |\ (t\ t')\ |\ (t\ t')^-\ |\ \lambda x:\phi.t\ |\ \lambda^- x:\phi.t\ |\ 0\ |\ (S\ t)\ |\ (R_\nat\ x.\phi\ t\ t'\ t'') \\
\ & \ & |\ (\nil\ \phi)\ |\ (\cons\ t\ t')\ |\ (R_\vc\ x.y.\phi\ t\ t'\ t'')\ |\ (\join\ t\ t')\ |\ (\cast\ x.\phi\ t\ t') 
\end{array}
\]
\noindent Three new constructs correspond to the typing rules
$\texttt{(spec-abs)}$, $\texttt{(spec-app)}$, and $\texttt{(conv)}$ of
Figure~\ref{fig:typing}: $\lambda^- x:\phi'.\phi$, $(t\ t')^-$ and
$(\cast\ x.\phi\ t\ t')$.  Figure~\ref{fig:tpcomp} gives
syntax-directed type-computation rules, which constitute a
deterministic algorithm for computing a type $\phi$ as output from a
context $\Gamma$ and annotated term $t$ as inputs.  Several rules use
the $\drop{\cdot}$ function, since types $\phi$ (as defined in
Section~\ref{sec:core} above) may mention only unannotated terms.

\begin{theorem}[Algorithmic Typing]
Given $\Gamma$ and $a$, we can, in an effective way, either find
$\phi$ such that $\Gamma \Vdash a : \phi$, or else report that there
is no such $\phi$.
\end{theorem}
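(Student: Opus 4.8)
The plan is to prove Algorithmic Typing by structural induction on the annotated term $t$, showing that the type-computation rules of Figure~\ref{fig:tpcomp} are syntax-directed: at most one rule applies to each term former, and each premise is a recursive type-computation call on a proper subterm, together with side conditions that are themselves effectively checkable. Thus the rules describe a terminating recursive procedure that either succeeds with a unique output type or fails.

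First I would establish the syntax-directedness claim: inspecting the grammar of annotated terms, each top-level constructor ($x$, $(t\ t')$, $(t\ t')^-$, $\lambda x{:}\phi.t$, $\lambda^- x{:}\phi.t$, $0$, $(S\ t)$, $(R_\nat\ x.\phi\ t\ t'\ t'')$, $(\nil\ \phi)$, $(\cons\ t\ t')$, $(R_\vc\ x.y.\phi\ t\ t'\ t'')$, $(\join\ t\ t')$, $(\cast\ x.\phi\ t\ t')$) is matched by exactly one rule. In particular the three problematic non-syntax-directed rules of the unannotated system --- \texttt{(conv)}, \texttt{(spec-abs)}, \texttt{(spec-app)} --- have been replaced by rules keyed on the new constructors $\cast$, $\lambda^-$, and $(t\ t')^-$, so the ambiguity is gone. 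Next I would verify that every premise of every rule is either (a) a recursive call $\Gamma' \Vdash t_i : \phi_i$ on an immediate subterm $t_i$ (so the recursion is well-founded on the size of $t$), or (b) a decidable side condition. The side conditions are of two kinds: syntactic shape checks on the computed type (e.g.\ "the type computed for $t$ is of the form $\Pi x{:}\phi'.\phi$", or "$\ldots$ is of the form $a_1 = a_2$"), which are trivially decidable; and joinability checks $a \downarrow a'$ arising from the annotated analogue of \texttt{join} and from the equality premise of $\cast$. For the joinability checks I would appeal to Corollary~\ref{cor:eqdec}: whenever we reach such a check in a successful branch of the algorithm, the relevant subterms have already been assigned types by recursive calls, so they are typable, and hence $a \downarrow a'$ is decidable. (If a recursive call fails, we propagate failure immediately without ever reaching the check.)

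Finally I would assemble these observations: the procedure, on input $(\Gamma, t)$, dispatches on the head constructor of $t$, makes the finitely many recursive calls prescribed by the matching rule (on strictly smaller terms, hence terminating), checks the decidable side conditions, and outputs the unique $\phi$ if all succeed or reports failure otherwise. Determinism of the output follows because each rule computes its conclusion type as a fixed function of the subterms' computed types. Soundness and completeness with respect to $\Vdash$ are immediate since the procedure is literally executing the rules.

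The main obstacle I anticipate is the decidability of the joinability side conditions, since $\leadsto$ is not assumed confluent in this development and joinability is in general undecidable for an arbitrary rewrite system. The resolution is exactly Corollary~\ref{cor:eqdec}: by the time the algorithm inspects a proof obligation $a \downarrow a'$, strong normalization (Corollary~\ref{cor:sn}) applies to both sides because they have been typed, so one may normalize both and compare normal forms --- using the running assumption that each term has at most one normal form. A secondary, more bookkeeping-style obstacle is confirming that the free-variable and $\Gamma\,\textit{Ok}$-style conditions attached to the new constructors ($x \notin \textit{dom}(\Gamma)$ in the recursor and $\cast$ rules, the handling of the bound-variable annotations $x.\phi$ and $x.y.\phi$) are all effectively checkable and correctly thread the context through recursive calls; this is routine but must be stated to make the induction go through.
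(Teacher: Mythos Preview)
Your proof is correct and follows the same approach as the paper, which simply notes that the rules are syntax-directed and invokes Corollary~\ref{cor:eqdec} for the \texttt{join} rule. One small slip: the \texttt{cast} rule has no joinability side condition---its premise $\Gamma \Vdash t : a = a'$ is just a recursive type computation followed by a shape check on the output---so the only place Corollary~\ref{cor:eqdec} is actually needed is the \texttt{join} rule.
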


\noindent This follows in a standard way from inspection of the rules,
using Corollary~\ref{cor:eqdec} for the $\join$-rule.

\begin{theorem}[Soundness for Type Assignment]
If $\Gamma \Vdash t:\phi$ then $\Gamma \vdash \drop{t} : \phi$.
\end{theorem}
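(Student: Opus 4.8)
The plan is to prove this by induction on the derivation of $\Gamma \Vdash t : \phi$ in the annotated type-computation system of Figure~\ref{fig:tpcomp}. For each rule, we examine the annotated term $t$ at its conclusion, compute $\drop{t}$ according to the definition in Figure~\ref{fig:drop}, and then build a derivation of $\Gamma \vdash \drop{t} : \phi$ in the unannotated assignment system of Figure~\ref{fig:typing}, using the induction hypotheses supplied by the premises. The correspondence is designed to be tight: each annotated construct either erases to the matching unannotated construct (e.g.\ $\lambda x{:}\phi.t$ erases to $\lambda x.\drop{t}$, $(\cons\ t\ t')$ to $(\cons\ \drop{t}\ \drop{t'})$, $(R_\nat\ x.\phi\ t\ t'\ t'')$ to $(R_\nat\ \drop{t}\ \drop{t'}\ \drop{t''})$), or erases away entirely leaving its subterm (the new annotations $\lambda^- x{:}\phi.t$, $(t\ t')^-$, and $\cast\ x.\phi\ t\ t'$, which correspond precisely to $\texttt{(spec-abs)}$, $\texttt{(spec-app)}$, and $\texttt{(conv)}$).

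First I would dispatch the straightforward cases: variables, $0$, $(S\ t)$, $\nil$, $\cons$, application, abstraction, and the two recursors. Here the annotated rule has essentially the same shape as its unannotated counterpart, so after pushing $\drop{\cdot}$ through the term and invoking the IH on each premise, the matching unannotated rule applies directly. One routine point to check throughout is that $\drop{\cdot}$ commutes with substitution, i.e.\ $\drop{[t'/x]t} = [\drop{t'}/x]\drop{t}$, and consequently that dropping annotations from a type (which already mentions only unannotated terms) interacts correctly with the substitutions appearing in the conclusions of $\texttt{(app)}$, $\texttt{(Rnat)}$, $\texttt{(Rvec)}$, etc.; this commutation lemma should be proved (or cited) first, by an easy structural induction. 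The $\join$ case uses Corollary~\ref{cor:eqdec} only for decidability of the side condition; for soundness we just need that the annotated $\join$-rule's premise guarantees $a \downarrow a'$, so that the unannotated $\texttt{(join)}$ rule fires — modulo also checking $\Gamma\,\textit{Ok}$, which should be maintained as an invariant of the type-computation system.

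The three genuinely new cases are $\lambda^- x{:}\phi'.t$, $(t\ t')^-$, and $\cast\ x.\phi\ t\ t'$. For $\lambda^- x{:}\phi'.t$, erasure gives $\drop{t}$ itself (the abstraction disappears), the annotated rule's premise should give $\Gamma, x{:}\phi' \Vdash t : \phi$ together with $x \notin \textit{FV}(\drop{t})$, and then $\texttt{(spec-abs)}$ yields $\Gamma \vdash \drop{t} : \forall x{:}\phi'.\phi$. For $(t\ t')^-$, erasure gives $\drop{t}$, and from premises $\Gamma \Vdash t : \forall x{:}\phi'.\phi$ and $\Gamma \Vdash t' : \phi'$ the IH plus $\texttt{(spec-app)}$ gives $\Gamma \vdash \drop{t} : [\drop{t'}/x]\phi$. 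For $\cast\ x.\phi\ t\ t'$, erasure gives $\drop{t}$, and the premises should provide a proof $t'$ of an equation $a' = a''$ together with $\Gamma \Vdash t : [a'/x]\phi$, so that $\texttt{(conv)}$ delivers $\Gamma \vdash \drop{t} : [a''/x]\phi$.

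The main obstacle I anticipate is purely bookkeeping rather than conceptual: making sure that the type annotations carried in the annotated terms line up exactly with the unannotated types the assignment rules expect, after $\drop{\cdot}$ has been applied and the $x.\phi$-style bound-variable annotations have been resolved. In particular I would be careful with $R_\nat$ and $R_\vc$, where the motive is carried as an explicit annotation $x.\phi$ (resp.\ $x.y.\phi$) and the conclusion type involves several simultaneous substitutions into that motive; I need the commutation lemma for $\drop{\cdot}$ and substitution, plus the observation noted after the depth definition that $d(\phi) = d([a/x]\phi)$, to be confident these match. The freshness side conditions ($x \notin \textit{dom}(\Gamma)$ in $\texttt{(conv)}$, $\texttt{(Rnat)}$, $\texttt{(Rvec)}$; $x \notin \textit{FV}(\drop{t})$ in $\texttt{(spec-abs)}$) must be carried along; the cleanest route is to assume the type-computation rules are stated so as to guarantee them (as is standard for syntax-directed systems), and to strengthen the induction hypothesis with "$\Gamma\,\textit{Ok}$" if the $\texttt{(var)}$, $\texttt{(join)}$, $\texttt{(zero)}$, $\texttt{(nil)}$ cases need it.
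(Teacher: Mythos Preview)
Your approach is correct and is exactly the routine induction the paper has in mind; the paper in fact omits the proof entirely, treating it as immediate from the design of the annotated rules and the erasure map. One small bookkeeping slip: in the $\cast$ case you have the roles of $t$ and $t'$ reversed. According to Figure~\ref{fig:drop}, $\drop{(\cast\ x.\phi\ t\ t')} = \drop{t'}$, and the annotated rule has $t$ as the equality proof ($\Gamma \Vdash t : a = a'$) and $t'$ as the subject ($\Gamma \Vdash t' : [a/x]\phi$); so the IH gives $\Gamma \vdash \drop{t} : a = a'$ and $\Gamma \vdash \drop{t'} : [a/x]\phi$, and then \texttt{(conv)} yields $\Gamma \vdash \drop{t'} : [a'/x]\phi$. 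Also, the commutation lemma you propose is not actually needed here: types $\phi$ already mention only unannotated terms, and the annotated rules for \texttt{app}, \texttt{spec-app}, $R_\nat$, $R_\vc$ already substitute $\drop{t'}$ (not $t'$) into the conclusion type, so the types line up on the nose after applying the IH.
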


\begin{figure}
\[
\begin{array}{lllllll}
\drop{x} & = & x &\ \ \ \ \ & 
\drop{(t\ t')} & = & (\drop{t}\ \drop{t'}) \\
\drop{(t\ t')^-} & = & \drop{t} &\ \ \ &
\drop{\lambda x:\phi.t} & = & \lambda x.\drop{t} \\
\drop{\lambda^- x:\phi.t} & = & \drop{t} &\ \ \ &
\drop{0} & = & 0 \\
\drop{(S\ t)} & = & (S\ \drop{t}) &\ \ \ &
\drop{(\nil\ \phi)} & = & \nil \\
\drop{(\cons\ t\ t')} & = & (\cons\ \drop{t}\ \drop{t'}) &\ \ \ &
\drop{(R_\nat\ x.\phi\ t\ t'\ t'')} & = & (R_\nat\ \drop{t}\ \drop{t'}\ \drop{t''})\\
\drop{(R_\vc\ x.y.\phi\ t\ t'\ t'')} & = & (R_\vc\ \drop{t}\ \drop{t'}\ \drop{t''}) &\ \ \ &
\drop{(\join\ t\ t')} & = & \join \\
\drop{(\cast\ x.\phi\ t\ t')} & = & \drop{t'} &\ &\ &\ &\  
\end{array}
\]
\caption{Translation from annotated terms to unannotated terms}
\label{fig:drop}
\end{figure}

\begin{figure}
\[
\begin{array}{lll}
\infer{\Gamma\Vdash (\join\ t\ t') : \drop{t} = \drop{t'}}{\Gamma \Vdash t:\phi &  \Gamma \Vdash t':\phi' & \drop{t} \downarrow \drop{t'}}
&
\infer{\Gamma\Vdash (\cast\ x.\phi\ t\ t'):[a'/x]\phi}{\Gamma \Vdash t: a = a' & \Gamma \Vdash t':[a/x]\phi} 
&
\infer{\Gamma \Vdash \lambda^- x:\phi'.t : \forall x:\phi'.\phi}{\Gamma, x:\phi' \Vdash t:\phi & x\not\in\textit{FV}(\drop{t})}
\\ \\
\infer{\Gamma \Vdash (t\ t')^- : [\drop{t'}/x]\phi}{\Gamma \Vdash t:\forall x:\phi'.\phi & \Gamma \Vdash t':\phi'}
&

\infer{\Gamma \Vdash \lambda x:\phi'. t : \Pi x:\phi'.\phi}
      {\Gamma, x:\phi' \Vdash t:\phi}
&
\infer{\Gamma \Vdash (t\ t') : [\drop{t'}/x]\phi}{\Gamma \Vdash t:\Pi x:\phi'.\phi & \Gamma\Vdash t':\phi'}

\\ \\
\multicolumn{3}{c}{\infer{\Gamma\Vdash (R_\vc\ x.y.\phi\ t\ t'\ t''):[l/x, \drop{t''}/y]\phi}
      {\begin{array}{l}\Gamma \Vdash t'' : \langle \vc\ \phi'\ l\rangle \\
       \Gamma \Vdash t : [0/x,\nil/y]\phi \\
       \Gamma \Vdash t' : \forall l:\nat. \Pi z:\phi'. \Pi v :\langle \vc\ \phi'\ l\rangle. \Pi u : [l/x, v/y]\phi. \\
        \ \ \ \ \ \ \ \ \ \ \ \ \  [(S\ l)/x, (\cons\ z\ v)/y]\phi
       \end{array}}}
\end{array}
\]
\caption{Type-computation system for annotated $\Tvec$ (selected rules)}
\label{fig:tpcomp}
\end{figure}

\subsection{Example}
\label{sec:eg}

Now let us see versions of the examples mentioned in
Section~\ref{sec:intro}, available in the \texttt{guru-lang/lib/vec.g}
library file for \textsc{Guru} (see \url{www.guru-lang.org}).  The
desired types for vector append (``\textit{append}'') and for
associativity of vector append are:
\[
\begin{array}{lll}
\textit{append}& : &\forall l_1:\nat. \forall l_2:\nat. \Pi v_1:\langle \vc\ \phi\ l_1\rangle. \Pi v_2:\langle \vc\ \phi\ l_2\rangle. \langle\vc\ \phi\ (\textit{plus}\ l_1\ l_2)\rangle \\ 
\textit{append\_assoc}& :& \forall l_1:\nat. \forall l_2:\nat.\forall l_3:\nat. \\
\ & \ & \Pi v_1:\langle \vc\ \phi\ l_1\rangle. \Pi v_2:\langle \vc\ \phi\ l_2\rangle. \Pi v_3:\langle \vc\ \phi\ l_3\rangle.\\
\ & \ & \ \ \ (\textit{append}\ (\textit{append}\ v_1\ v_2)\ v_3) = (\textit{append}\ v_1\ (\textit{append}\ v_2\ v_3))
\end{array}
\]
\noindent We consider now annotated inhabitants of these types.  The
first is the following:
{\small
\[
\begin{array}{lll}
\textit{append} &=& \lambda^- l_1:\nat. \lambda^-l_2:\nat. \lambda v_1:\langle \vc\ \phi\ l_1\rangle. \lambda v_2:\langle \vc\ \phi\ l_2\rangle. \\
\ &\ & \ \ \ \ (R_\vc\ (x.y.\langle\vc\ \phi\ (\textit{plus}\ x\ l_2)\rangle)\ \\
\ &\ & \ \ \ \ \ \ \ \ \ \ (\cast\ (x.\langle\vc\ \phi\ x\rangle)\ P_1\ v_2) \\
\ &\ & \ \ \ \ \ \ \ \ \ \ (\lambda^- l:\nat. \lambda x:\phi.\lambda v_1':\langle\vc\ \phi\ l\rangle.
                      \lambda r:\langle\vc\ \phi\ (\textit{plus}\ l\ l_2)\rangle).\\
\ &\ &\ \ \ \ \ \ \ \ \ \ \ \ \ \ \ (\cast\ (x.\langle\vc\ \phi\ x\rangle)\ P_2\ (\cons\ x\ r))\\
\ &\ & \ \ \ \ \ \ \ \ \ \ v_1)
\end{array}
\]}
\noindent The two cases in the $R_\vc$ term return a type-cast version
of what would standardly be returned in an unannotated version of
\textit{append}.  The proofs $P_1$ and $P_2$ used in those casts show
respectively that $l_2 = (\textit{plus}\ 0\ l_2)$ and
$(S\ (\textit{plus}\ l\ l_2)) = (\textit{plus}\ (S\ l)\ l_2)$. They
are simple join-proofs:
\[
\begin{array}{lllllll}
P_1 & = & (\join\ l_2\ (\textit{plus}\ 0\ l_2))&\ \ \ \ \ &
P_2 & = & (\join\ (S\ (\textit{plus}\ l\ l_2))\ (\textit{plus}\ (S\ l)\ l_2))
\end{array}
\]
\noindent Now for \textit{append\_assoc}, we can use the following annotated term:
{\small
\[
\begin{array}{lll}
\textit{append\_assoc} &=& \lambda^- l_1:\nat. \lambda^- l_2:\nat.\lambda^- l_3:\nat. \\
\ &\ &\lambda v_1:\langle \vc\ \phi\ l_1\rangle. \lambda v_2:\langle \vc\ \phi\ l_2\rangle. \lambda v_3:\langle \vc\ \phi\ l_3\rangle.\\
\ &\ &\ \ \ \ (R_\vc\ (x.y.(\textit{append}\ (\textit{append}\ v_1\ v_2)\ v_3) = (\textit{append}\ v_1\ (\textit{append}\ v_2\ v_3)))\ \\
\ &\ &\ \ \ \ \ \ \ \ \ \ (\join\ (\textit{append}\ (\textit{append}\ \nil\ v_2)\ v_3) = (\textit{append}\ \nil\ (\textit{append}\ v_2\ v_3))) \\
\ &\ &\ \ \ \ \ \ \ \ \ \ (\lambda^- l:\nat. \lambda x:\phi.\lambda v_1':\langle\vc\ \phi\ l\rangle.\\
\ &\ &\ \ \ \ \ \ \ \ \ \ \ \ \lambda r:(\textit{append}\ (\textit{append}\ v_1'\ v_2)\ v_3) = (\textit{append}\ v_1'\ (\textit{append}\ v_2\ v_3)).\\
\ &\ &\ \ \ \ \ \ \ \ \ \ \ \ \ \ \ P_3))
\end{array}
\]}
\noindent The omitted proof $P_3$ is an easy equational proof of the following type:
\[
(\textit{append}\ (\textit{append}\ (\cons\ x\ v_1')\ v_2)\ v_3) = (\textit{append}\ (\cons\ x\ v_1')\ (\textit{append}\ v_2\ v_3))
\]

\section{$\Tvec$ with Large Eliminations}
\label{sec:tveclarge}

\begin{figure}
\[ 
\begin{array}{lll}
\phi\ ::=\ \dots\ |\ \ifzero\ a\ \phi\ \phi' & \qquad a\ ::=\ \dots\ |\ \lambda.a\ |\ a\ \impapp   & \qquad v\ ::=\ \dots\ |\ \lambda.a
\end{array}
\]
\[
\begin{array}{ll}

\infer[\texttt{spec-abs'}]{\Gamma \vdash \lambda.a : \forall x:\phi'.\phi}{\Gamma, x:\phi' \vdash a:\phi & x\not\in\textit{FV}(a)}
&
\infer[\texttt{spec-app'}]{\Gamma \vdash a\ \impapp : [a'/x]\phi}{\Gamma \vdash a:\forall x:\phi'.\phi & \Gamma \vdash a':\phi'}

\\ \\

\infer[\texttt{foldZ}]
{\Gamma\vdash a : \ifzero\ 0\ \phi\ \phi'}
{\Gamma\vdash a : \phi}

&
\infer[\texttt{unfoldZ}]
{\Gamma\vdash a : \phi}
{\Gamma\vdash a : \ifzero\ 0\ \phi\ \phi'}

\\ \\

\infer[\texttt{foldS}]
{\Gamma\vdash a : \ifzero\ (S\ a')\ \phi\ \phi'}
{\Gamma\vdash a : \phi' & \Gamma\vdash a':\nat}

&
\infer[\texttt{unfoldS}]
{\Gamma\vdash a : \phi'}
{\Gamma\vdash a : \ifzero\ (S\ a')\ \phi\ \phi' & \Gamma\vdash a':\nat}
\end{array}
\]
\caption{Types, terms, values, and typing rules for $\Tvec$ with large eliminations.}
\label{fig:tveclarge}
\end{figure}

Next we study an extended version of $\Tvec$ with large eliminations,
i.e. types defined by pattern matching on terms.  This extended
language no longer is normalizing under general $\beta$-reduction
$\leadsto$, but we will prove that well-typed closed terms normalize
under call-by-value evaluation $\leadstov$. In particular, the language is
type safe and logically consistent.

The additions to the language and type system are shown in
figure~\ref{fig:tveclarge}. 

The type language is extended with the simplest possible form of large elimination,
a type-level conditional \ifzero which is introduced and
eliminated by the \texttt{fold} and \texttt{unfold} rules.  While type
conversion and type folding/unfolding are completely implicit, we
replace the \texttt{spec-abs/app} rules with new rules
\texttt{spec-abs'/app'} which require the place where we introduce or
eliminate the $\forall$-type to be marked by new \emph{quasi-implicit}
forms $\lambda.a$ and $a\ \impapp$. These forms do not mention the
quantified variable or the term it is instantiated with, so we retain
the advantages of specificational reasoning. The point of these forms
is their evaluation behavior: $(\lambda.a)\ \impapp \leadstov a$, and
$\lambda.a$ counts as a value so CBV evaluation will never reduce
inside it. Besides this, the CBV operational semantics is standard, so
we omit it here.

In the language with large eliminations we no longer have
normalization or type safety for arbitrary open terms. This is because
the richer type system lets us make use of absurd equalities: whenever we
have $\Gamma \vdash a : \phi$ and $\Gamma \vdash p : (S\ a')\!=\!0$, we can
show $\Gamma \vdash a : \phi'$ for any $\phi'$ by going via the
intermediate type $(\ifzero\ 0\ \phi (\alpha.\phi'))$. In particular, this
means we  can show judgments like
\[ 
p:1\!\!=\!0 \vdash (\lambda  x.x\ x)\ (\lambda x.x\ x) : \nat
\qquad\text{and}\qquad 
p:1\!\!=\!0 \vdash 0\ 0 : \nat. 
\]
This is also the reason we introduce the quasi-implicit
products. Using our old rule \texttt{spec-abs} we would be able to
show $\vdash 0\ 0 : \forall p:1\!\!=\!0.\nat$, despite $0\ 0$ being a
stuck term in our operational semantics.

Because of this \emph{quod libet} property it is no longer convenient
to prove Progress and Preservation before Normalization. While the
proof of Preservation is not hard, Progress as we have seen depends on
the logical consistency of the language, which is exactly what we hope
to establish through Normalization. To cut this circle we design an
interpretation of types (figure \ref{fig:large_interp}) that lets us
prove type safety, Canonical Forms and Normalization in a single
induction.

\subsection{Semantics of Equality}
We need to pick an interpretation for equality types. Since we
are only interested in closed terms, this can be less elaborate than
in section~\ref{sec:metatheory}. Perhaps surprisingly, even though we
are interested in CBV-evaluation of programs, we can still interpret
equality as joinability $\downarrow$ under unrestricted $\beta$-reduction. 
In the interpretation we use $\leadstov$ for the
program being evaluated, but $\leadsto$ whenever we talk about terms
occurring in types (namely in $\vc$, $=$, and R-types). The
\texttt{join} typing rule is specified in terms of $\leadsto$, so
when doing symbolic evaluation of programs at type checking time the 
type checker can use unrestricted reduction, which gives a powerful
type system than can prove many equalities.

\subsection{Normalization to Canonical Form}

\begin{figure}
\[
\begin{array}{ll}
\begin{array}{lll}
a \in \interp{\nat} & \Leftrightarrow & \exists n. a \leadstov^* n \\
a \in \interp{\langle\vc\ \phi\ l\rangle} & \Leftrightarrow & (a\leadstov^* \nil\ \land  l \leadsto^* 0) \ \lor \\
 & & \begin{array}{ll} \exists v\ v'\ n. & a\leadstov^* (\cons\ v\ v') \land l \leadsto^* (S\ n) \\
\ & \land\ v \in \interp{\phi} 
\ \land\ v'\in \interp{\langle\vc\ \phi\ n\rangle}\\
\end{array}\\
a\in\interp{\Pi x:\phi'.\phi} & \Leftrightarrow &  \exists a'.a \leadstov^* (\lambda x.a')
\ \land\  \forall a'\in\interp{\phi'}.\ (a\ a')\in\interp{[a'/x]\phi} \\
a\in\interp{\forall x:\phi'.\phi} & \Leftrightarrow &  \exists a'.a \leadstov^* (\lambda a')
\ \land\ \forall a'\in\interp{\phi'}.\ (a\ \impapp)\in\interp{[a'/x]\phi} \\
a \in \interp{a_1 = a_2} & \Leftrightarrow &  a \leadstov^* \join \ \land\ a_1 \downarrow a_2 \\
a \in \interp{\ifzero\ b\ \phi\ \phi'} & \Leftrightarrow & \begin{cases} a \in \interp{\phi} & \text{if } b \leadsto^* 0  \\ 
                                                                         a \in \interp{\phi'} & \text{if } b \leadsto^* (S\ n) \\ 
                                                                         \text{False} & \text{otherwise} \end{cases} \\
\end{array}
\begin{array}{c}
\infer{\emptyset\in\interp{\cdot}}{\ } \\
\\
\infer{\sigma\cup\{(x,v)\}\in\interp{\Gamma,x:\phi}}
      {v\in\interp{\sigma\phi} & \sigma \in\interp{\Gamma}}
\end{array}
\end{array}
\]
\caption{Type interpretation $a\in\interp{\phi}$ and context interpretation $\sigma \in \interp{\Gamma}$ for $\Tvec$ with large eliminations}
\label{fig:large_interp}
\end{figure}

We define the interpretation $\interp{\ }$ as in
figure~\ref{fig:large_interp} by recursion on the depth of the
type $\phi$. As we only deal with closed terms, the definition
can be simpler than the one in section \ref{sec:metatheory}.
The proof then proceeds much like the proof for open
terms:

\noindent \textbf{R-Canon}. If $a\in\interp{\phi}$, then $a\leadstov^*
v$ for some $v$. Furthermore, if the top-level constructor of $\phi$
is $\nat$, $\Pi$, $\forall$, $=$, or $\vc$, then $v$ is the
corresponding introduction form.

\noindent \textbf{R-Pres}. If $a\in\interp{\phi}$ and $a \leadstov a'$, then $a' \in \interp{\phi}$.

\noindent \textbf{R-Prog}. If $a \leadstov a'$, and $a' \in \interp{\phi}$, then $a\in\interp{\phi}$.

\noindent \textbf{R-Join}. If $a_1 \downarrow a_2$, then $a \in \interp{[a_1/x]\phi}$ 
implies $a \in \interp{[a_2/x]\phi}$.

\begin{theorem}
\label{thm:fundamental_tveclarge}
If $\Gamma \vdash a : \phi$ and $\sigma \in \interp{\Gamma}$,
then $\sigma a \in \interp{\sigma\phi}$.
\end{theorem}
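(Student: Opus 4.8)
The plan is to prove Theorem~\ref{thm:fundamental_tveclarge} by induction on the derivation of $\Gamma \vdash a : \phi$, using the four critical properties R-Canon, R-Pres, R-Prog, and R-Join as black boxes (these would be established by a prior simultaneous induction on the depth of $\phi$). The statement is the usual ``fundamental lemma'' of a reducibility argument, generalized over closing substitutions $\sigma \in \interp{\Gamma}$. For each typing rule I would unfold the conclusion's interpretation clause from Figure~\ref{fig:large_interp} and supply the required evaluation/membership data from the induction hypotheses applied to the premises.

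First I would dispatch the structural and introduction rules. For \texttt{var}, $\sigma x = \sigma(x) \in \interp{\sigma\Gamma(x)}$ holds directly by definition of $\sigma \in \interp{\Gamma}$. For \texttt{zero}, \texttt{succ}, \texttt{nil}, \texttt{cons}, \texttt{join}, the relevant clause is satisfied immediately, using R-Canon on the subterms to get them to values where the clause demands it (e.g.\ for \texttt{cons} we need $v \in \interp{\phi}$ and $v' \in \interp{\langle\vc\ \phi\ n\rangle}$, obtained by R-Canon applied to the IHs). For \texttt{abs} and \texttt{spec-abs'}: $\sigma(\lambda x.a) = \lambda x.\sigma a$ is already a value, so the existential $\exists a'.\,\cdot \leadstov^* \lambda x.a'$ is trivial, and for every argument $a' \in \interp{\sigma\phi'}$ we apply the IH to the body premise under the extended substitution $\sigma \cup \{(x,a')\}$ — here I would use R-Canon to replace $a'$ by a value if the context-interpretation clause insists on values, then R-Pres/R-Prog to transport membership across the $\beta$/$\impapp$-reduction step $(\lambda x.\sigma a)\ a' \leadstov [a'/x]\sigma a$. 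For \texttt{app} and \texttt{spec-app'}: apply the IH to the function premise to land in $\interp{\Pi x:\sigma\phi'.\sigma\phi}$ (resp.\ $\forall$), which by its clause gives exactly $(\sigma a\ \sigma a') \in \interp{[\sigma a'/x]\sigma\phi}$ after feeding in the IH for the argument premise; commuting substitution with $\drop{\cdot}$/instantiation is routine.

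The genuinely new rules are the four large-elimination rules and \texttt{conv}. For \texttt{conv} (equality reflection), I would use Equational Soundness / the \texttt{join} clause to get $a_1 \downarrow a_2$ from the proof-premise's IH, then invoke R-Join to move from $\interp{[a_1/x]\sigma\phi}$ to $\interp{[a_2/x]\sigma\phi}$. For \texttt{foldZ}/\texttt{unfoldZ}: the IH gives $\sigma a \in \interp{\sigma\phi}$, and since the side condition forces the scrutinee to be (convertible to) $0$, hence $\sigma 0 \leadsto^* 0$, the first case of the $\ifzero$ clause applies and $\interp{\ifzero\ 0\ \sigma\phi\ \sigma\phi'} = \interp{\sigma\phi}$ — and symmetrically for unfolding. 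For \texttt{foldS}/\texttt{unfoldS} the argument is the same but with the second case of the clause, using the premise $\Gamma \vdash a' : \nat$ together with R-Canon to know $\sigma a'$ evaluates to a numeral, so that $\sigma a'$ is indeed of the form $S\,n$ up to $\leadsto^*$ and the ``False'' branch of the clause is never triggered.

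The main obstacle I expect is exactly the interplay in the $\ifzero$ clause between the operational semantics $\leadstov$ used for programs and the unrestricted $\leadsto$ used for the index term $b$. To apply the $\ifzero$ clause I must know that $\sigma b$ does reduce (under $\leadsto$) to $0$ or to some $S\ n$, never getting stuck — this is where the premise $\Gamma \vdash b : \nat$ (via the $\nat$-clause, which is stated with $\leadstov^*$) plus confluence is needed to bridge between the two reduction relations and rule out the ``False'' case. A secondary subtlety is that \texttt{unfoldZ}/\texttt{unfoldS} take the $\ifzero$-type as a \emph{premise}, so the IH hands back membership in $\interp{\ifzero\ \dots}$, and I must read off from the clause — not construct — the fact that the scrutinee reduces appropriately; this is fine because the clause is defined by cases on that very reduction behavior and the ``otherwise/False'' case yields no inhabitants, so membership in the $\ifzero$-type already certifies which branch we are in. Everything else is bookkeeping about substitution commuting with $\drop{\cdot}$ and with type instantiation.
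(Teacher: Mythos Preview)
Your overall approach matches the paper's --- induction on the typing derivation, using R-Canon/R-Pres/R-Prog/R-Join as established lemmas --- and your treatment of \texttt{var}, \texttt{conv}, the introduction forms, \texttt{app}/\texttt{spec-app'}, and the \texttt{fold}/\texttt{unfold} rules is essentially what the paper does. The genuine gap is that you have omitted the recursor cases \texttt{Rnat} and \texttt{Rvec} entirely, and these are in fact the most substantial cases, considerably more involved than the \texttt{fold}/\texttt{unfold} rules you flag as the main obstacle. The outer IH gives you $\sigma a$, $\sigma a'$, $\sigma a''$ in their respective interpretations, but the reduct of $(R_\nat\ v\ v'\ (S\ n))$ is $v'\ n\ (R_\nat\ v\ v'\ n)$, which contains a recursive $R$-term on a smaller scrutinee that the outer IH says nothing about. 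The paper closes this with an \emph{inner} induction on a termination measure (number of CBV steps remaining in each subterm plus the size of the scrutinee's normal form), proving that for every $b \in \interp{\nat}$ (resp.\ $b\in\interp{\langle\vc\ \phi'\ l\rangle}$) the term $(R_\nat\ (\sigma a)\ (\sigma a')\ b)$ lies in $\interp{[b/x]\sigma\phi}$; one then case-splits on the next CBV step, with congruence steps handled by inner IH plus R-Pres (and R-Join when the scrutinee steps, to keep the type index at $b$), and the recursion-redex case handled by the inner IH on the smaller scrutinee together with the $\Pi$-clause for $\sigma a'$. Without this inner induction the argument does not close.

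Two minor points. The reference to $\drop{\cdot}$ is out of place: this theorem is about the unannotated system, where there is no erasure map. And in the \texttt{abs} case your step $(\lambda x.\sigma a)\ a' \leadstov [a'/x]\sigma a$ is only a CBV step when $a'$ is already a value; the actual chain goes through $(\lambda x.\sigma a)\ v'$ first, and you then need R-Join (using $a'\downarrow v'$) to pass from $\interp{[v'/x]\sigma\phi}$ back to $\interp{[a'/x]\sigma\phi}$. The paper organizes this as a short inner induction on the number of steps $a'$ takes to reach a value, which is the same device in miniature that is needed for the recursors.
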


\begin{corollary}[Type Safety]
\label{cor:sn}
If $\vdash a : \phi$, then $a \leadstov^* v$.
\end{corollary}

\begin{corollary}[Logical Soundness]
$\vdash a:1\!\!=\!0$ does not hold for any $a$.
\end{corollary}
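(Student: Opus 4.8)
The plan is to derive Logical Soundness as an immediate consequence of Type Safety (Corollary \ref{cor:sn}) together with the interpretation clause for equality types. Suppose, for contradiction, that $\vdash a : 1 = 0$ for some term $a$ (where $1$ abbreviates $(S\ 0)$). The idea is that the interpretation $\interp{1 = 0}$ must be empty, since no ground reduct can make $(S\ 0)$ and $0$ joinable, yet the fundamental theorem forces $a$ into that interpretation, a contradiction.

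First I would instantiate the fundamental theorem (Theorem \ref{thm:fundamental_tveclarge}) with $\Gamma$ empty and $\sigma$ the empty substitution, which is in $\interp{\cdot}$ by the first context-interpretation rule. This yields $a \in \interp{1 = 0}$. Next I would unfold the clause for equality types: $a \in \interp{1 = 0}$ requires in particular that $1 \downarrow 0$, i.e. that $(S\ 0)$ and $0$ are joinable under $\leadsto$. But $0$ and $(S\ 0)$ are both normal forms (no reduction rule in Figure~\ref{fig:red}, nor the new CBV rules, applies to a bare numeral), and they are syntactically distinct; invoking confluence — assumed as a working hypothesis in Section~\ref{sec:metatheory} — they cannot have a common reduct. (Alternatively, one avoids confluence entirely: any reduct of $0$ is $0$ and any reduct of $(S\ 0)$ has the form $(S\ b)$, so the two reduct sets are disjoint by inspection.) This contradicts $1 \downarrow 0$, so no such $a$ exists.

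The only real subtlety is the joinability fact $1 \not\downarrow 0$. Here I would argue it directly rather than through confluence: by inspection of the reduction rules, $0$ is irreducible, and every term reachable from $(S\ 0)$ is of the form $(S\ b)$ with $0 \leadsto^* b$, hence is never equal to $0$. So the sets of reducts are disjoint and joinability fails. Note that this argument does not need Type Safety at all — it only uses the fundamental theorem and a syntactic observation — so the corollary is genuinely a one-line consequence; the heavy lifting is entirely in Theorem~\ref{thm:fundamental_tveclarge}. (Indeed this mirrors Corollary~\ref{cor:eqsnd} and the Logical Soundness corollary of Section~\ref{sec:metatheory}, specialized to the closed, large-elimination setting.)
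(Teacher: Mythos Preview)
Your proposal is correct and follows essentially the same approach the paper takes (and spells out for the analogous corollary in Section~\ref{sec:metatheory}): instantiate the fundamental theorem with the empty context and substitution to get $a\in\interp{1=0}$, read off $1\downarrow 0$ from the equality clause, and observe that $(S\ 0)$ and $0$ are distinct normal forms. Your remark that Type Safety is not actually needed here is also accurate.
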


\section{Conclusion and Future Work}

The $\Tvec$ type theory includes intersection types and a form of
equality reflection, justified by translation to an undecidable
unannotated system.  The division into annotated and unannotated
systems enables us to reason about terms without annotations, while
retaining decidable type checking.  We have seen how this approach
extends to a language including large eliminations, by introducing a
novel kind of \emph{quasi-implicit} products. The quasi-implicit
products allow convenient reasoning about specificational data, while
permitting a simple proof of normalization of closed terms.  Possible
future work includes formalizing the metatheory, and extending to a
polymorphic type theory.  Adding an extensional form of equality while
retaining decidability would also be of interest, as
in~\cite{altenkirch+07}.

\textbf{Acknowledgments:} Thanks to members of the \textsc{Trellys}
team, especially Stephanie Weirich and Tim Sheard, for discussions on
this and related systems.  This work was partially supported by the
the U.S. National Science Foundation under grants 0910510 and 0910786.

\bibliographystyle{eptcs}
\bibliography{biblio}

\ifTR
\appendix

\section{Proof of Type Preservation (Theorem~\ref{thm:tppres})}

\textbf{More about contexts.}  In more detail, we consider a contexts
$\Gamma$ to be a function from a finite set of variables to types,
together with a total ordering on its domain, and write
$\Gamma,x:\phi$ for the function that behaves just like $\Gamma$,
except that it returns $\phi$ for $x$, and places $x$ after the
variables in $\textit{dom}(\Gamma)$.  When
$\Gamma\equiv\Gamma_1\cup\Gamma_2$ with all the variables in
$\Gamma_2$ greater than those of $\Gamma_1$ in the ordering, we write
$\Gamma_1,\Gamma_2$ (implying also that the domains of $\Gamma_1$ and
$\Gamma_2$ are disjoint).

The proof of Type Preservation is by induction on the structure of the
assumed typing derivation.  We list all cases.  Unless we introduce
meta-variable $b$ for another purpose, in each case we will assume the
term in question reduces to $b$.  In cases where the term in question
is a normal form, this will lead to a contradiction.

\ 

\noindent \textbf{Case:}

\

$\infer{\Gamma\vdash x:\phi}{\Gamma(x) \equiv \phi}$

\ 

\noindent This case cannot arise, since $x$ is a normal form and so
cannot reduce.

\ 

\noindent \textbf{Case:}

\

$\infer{\Gamma\vdash \join : a = a'}{a \downarrow a'}$

\ 

\noindent This case cannot arise, since $\join$ is a normal form.

\ 

\noindent \textbf{Case:}

\

$\infer{\Gamma\vdash a:[a''/x]\phi}{\Gamma \vdash a''': a' = a'' & \Gamma \vdash a:[a'/x]\phi & x\not\in\textit{dom}(\Gamma)}$

\ 

\noindent (Recall that by convention in this proof, our second
assumption is $a\leadsto b$.)  By the induction hypothesis, we have
$\Gamma \vdash b:[a'/x]\phi$.  We may then reapply this rule to
conclude $\Gamma\vdash b:[a''/x]\phi$.

\ 

\noindent \textbf{Case:}

\

$\infer{\Gamma \vdash a : \forall x:\phi'.\phi}{\Gamma, x:\phi' \vdash a:\phi & x\not\in\textit{FV}(a)}$

\ 

\noindent By the induction hypothesis, we have $\Gamma, x:\phi' \vdash
b:\phi$.  Reduction cannot increase the set of free variables, so
$x\not\in\textit{FV}(b)$.  We may then reapply this rule to obtain
$\Gamma\vdash b:\forall x:\phi'.\phi$.

\ 

\noindent \textbf{Case:}

\

$\infer{\Gamma \vdash a : [a'/x]\phi}{\Gamma \vdash a:\forall x:\phi'.\phi & \Gamma \vdash a':\phi'}$

\ 

\noindent By the induction hypothesis, we have $\Gamma \vdash
b:\forall x:\phi'.\phi$.  We may then reapply this rule to obtain
$\Gamma \vdash b:[a'/x]\phi$.

\ 

\noindent \textbf{Case:}

\

$\infer{\Gamma \vdash \lambda x.a : \Pi x:\phi'.\phi}
      {\Gamma, x:\phi' \vdash a:\phi}$

\ 

\noindent By the induction hypothesis, we have $\Gamma, x:\phi' \vdash
b:\phi$.  We may then reapply this rule to obtain $\Gamma \vdash
\lambda x.b : \Pi x:\phi'.\phi$.

\ 

\noindent \textbf{Case:}

\

$\infer{\Gamma \vdash (a\ a') : [a'/x]\phi}{\Gamma \vdash a:\Pi x:\phi'.\phi & \Gamma\vdash a':\phi'}$

\ 

\noindent Suppose the reduction is from $a\leadsto b$, so we have
$(a\ a')\leadsto(b\ a')$.  Then we apply the induction hypothesis to
the first premise to obtain $\Gamma \vdash b:\Pi x:\phi'.\phi$, and
then reapply this rule to obtain $\Gamma \vdash (b\ a') : [a'/x]\phi$.
Suppose now that the reduction is from $a'\leadsto b'$, so we have
$(a\ a')\leadsto(a\ b')$.  Then we apply the induction hypothesis to
the second premise to obtain $\Gamma\vdash b':\phi'$.  Reapplying this
rule then gives us $\Gamma \vdash (a\ b') : [b'/x]\phi$.  We must now
apply the \texttt{conv} rule (of Figure~\ref{fig:typing}), using as
the first premise the judgment $\Gamma \vdash \join : a' = b'$, which
is derivable since $a'\downarrow b'$ (because $a'\leadsto b'$).  This gives
us the desired result: $\Gamma \vdash (a\ b') : [a'/x]\phi$.

Finally, suppose the reduction is because we have $a \equiv \lambda
x.b$ for some $x$ and $b$, and the application itself is being
$\beta$-reduced.  In this case, we need a lemma in order to limit the
cases arising from inversion on the derivation of $\Gamma\vdash
\lambda x.b:\Pi x:\phi'.\phi$.  We now need this lemma (proof in
Section~\ref{sec:simplinv}):

\begin{lemma}[Simplifying Inversion]
\label{lem:simplinv}
Suppose $\Gamma\vdash a:\phi$ is derivable, where $a$ is an
introduction form (i.e., of the form $\join$, $0$, $(S\ b)$, $\nil$,
$(\cons\ b\ b')$, or $\lambda x.b$), and $\phi$ has the corresponding
form of type (e.g., a $\Pi$-type for a $\lambda$-abstraction).  Then
$\Gamma\vdash a:\phi$ is also derivable by a derivation starting with
the corresponding introduction rule for the form of $a$, using the
same context $\Gamma$, and followed by a sequence of \texttt{(conv)}
inferences.
\end{lemma}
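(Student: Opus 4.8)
The plan is to prove this by induction on the derivation of $\Gamma\vdash a:\phi$, where $a$ is a fixed introduction form and $\phi$ has the shape corresponding to $a$'s constructor. The only typing rules that can conclude a judgment $\Gamma\vdash a:\phi$ when $a$ is an introduction form are: (i) the matching introduction rule itself (\texttt{join}, \texttt{zero}, \texttt{succ}, \texttt{nil}, \texttt{cons}, or \texttt{abs}); (ii) \texttt{conv}; (iii) \texttt{spec-abs}; and (iv) \texttt{spec-app}. The key observation is that $\phi$ is assumed to already be of the ``correct'' form (a $\Pi$-type for a $\lambda$, an equation for $\join$, $\nat$ for $0$ and $S$, a $\vc$-type for $\nil$ and $\cons$), which immediately rules out (iii), since \texttt{spec-abs} concludes a $\forall$-type, which is never of the correct form for any introduction form. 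So the last rule is the matching introduction rule, \texttt{conv}, or \texttt{spec-app}.

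First I would handle the base case: if the last rule is the matching introduction rule, we are already done (the derivation is of the required shape, with an empty trailing sequence of \texttt{conv} steps). Next, if the last rule is \texttt{conv}, deriving $\Gamma\vdash a:[a''/x]\psi$ from $\Gamma\vdash a''':a'=a''$, $\Gamma\vdash a:[a'/x]\psi$, and $x\notin\textit{dom}(\Gamma)$, then $[a'/x]\psi$ also has the correct form for $a$ (substitution does not change the outermost type constructor, since $x\notin\textit{dom}(\Gamma)$ means the conversion is happening ``inside'' a type of the same shape — and in any case $\phi=[a''/x]\psi$ having the right shape forces $\psi$, hence $[a'/x]\psi$, to have the right shape too). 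So I apply the induction hypothesis to $\Gamma\vdash a:[a'/x]\psi$, obtaining a derivation starting with the matching introduction rule followed by \texttt{conv} steps, and then append the current \texttt{conv} inference at the end, preserving the required shape.

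The remaining case, where the last rule is \texttt{spec-app}, is the one I expect to be the main obstacle, because superficially \texttt{spec-app} is not a \texttt{conv} step and would seem to break the ``introduction rule followed by \texttt{conv}s'' shape. Here the derivation concludes $\Gamma\vdash a:[a'/x]\psi$ from $\Gamma\vdash a:\forall x:\psi'.\psi$ and $\Gamma\vdash a':\psi'$. The crucial point is that the premise $\Gamma\vdash a:\forall x:\psi'.\psi$ has $a$ an introduction form at a $\forall$-type, which is \emph{not} of the correct form — so the induction hypothesis does not directly apply to it. Instead I would prove a companion sub-lemma (or fold it into a strengthened induction) stating that whenever $\Gamma\vdash a:\forall \vec{x}:\vec{\psi}.\,\psi_0$ with $a$ an introduction form and $\psi_0$ of the correct shape (possibly under a telescope of $\forall$'s), the derivation can be rearranged into: the matching introduction rule, then a sequence of \texttt{conv} steps, then \texttt{spec-abs} steps interleaved appropriately — and then observe that a \texttt{spec-abs} immediately followed by a \texttt{spec-app} on the same variable can be eliminated, or more simply, that \texttt{spec-abs}/\texttt{spec-app} pairs can be permuted past \texttt{conv} and ultimately cancelled, since \texttt{spec-app} after \texttt{spec-abs} just instantiates the quantifier that \texttt{spec-abs} introduced. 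A cleaner route: strengthen the statement to allow the conclusion type to be reached from the introduction form via any mix of \texttt{conv}, \texttt{spec-abs}, and \texttt{spec-app}, prove \emph{that} by straightforward induction, and then separately argue that such a mixed sequence, when the final type is \emph{not} a $\forall$-type, can be normalized to kill all \texttt{spec-abs}/\texttt{spec-app} occurrences (each \texttt{spec-app} must cancel a preceding \texttt{spec-abs}, since there is no other way to obtain the $\forall$-type it eliminates when $a$ is an introduction form), leaving only \texttt{conv} steps. The bookkeeping in this normalization — tracking which \texttt{spec-abs} is cancelled by which \texttt{spec-app} and commuting \texttt{conv} steps past them — is the technical heart of the argument; everything else is routine case analysis.
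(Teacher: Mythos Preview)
Your proposal is correct and lands on the same technical core as the paper: the only obstruction is the presence of \texttt{spec-abs}/\texttt{spec-app} on the path typing $a$, and the fix is to cancel adjacent \texttt{spec-abs}--\texttt{spec-app} pairs (using Substitution in the special case $x\notin\textit{FV}(a)$) and to commute \texttt{conv} past \texttt{spec-app} so that such pairs become adjacent. The paper carries this out slightly differently: rather than first doing your ``trivial'' induction to isolate an intro-rule followed by a mixed spine, it applies the two rewrites directly to the original derivation as a preprocessing step, with an explicit lexicographic termination measure, and then observes that in the rewritten derivation no \texttt{spec-app} can occur on the path typing $a$ (its predecessor would have to be \texttt{conv} or \texttt{spec-abs}, both now forbidden, or the intro rule, which yields a non-$\forall$ type). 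Your two-phase decomposition (induction to extract the spine, then normalize the spine) is arguably cleaner to present; the paper's version is more concrete about the rewrites and their termination. One small imprecision: your justification ``there is no other way to obtain the $\forall$-type'' is not literally true locally (a \texttt{conv} can preserve a $\forall$, and a \texttt{spec-app} whose body is a $\forall$-type produces one), but the global bracket-matching argument you sketch does go through, and the paper's version sidesteps this by looking at the first \texttt{spec-app} after the rewrites.
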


\noindent Using Simplifying Inversion on the derivation $\Gamma\vdash
\lambda x.b:\Pi x:\phi'.\phi$, we may assume this derivation starts
like this:
\[
\infer{\Gamma\vdash\lambda x.b:\Pi x:\psi'.\psi}
      {\Gamma,x:\psi'\vdash b:\psi}
\]
\noindent The derivation then uses a sequence $S$ of \texttt{(conv)}
inferences to end in $\Gamma\vdash \lambda x.b:\Pi x:\phi'.\phi$.  Let
$S^{-1}$ be the sequence which is just the same except that for every
first premise $\Gamma\vdash d:c = c'$ of a \texttt{(conv)}-inference
in $S$, we have a \texttt{(conv)} inference with first premise
$\Gamma\vdash \join:c' = c$, easily derived from $\Gamma\vdash d:c =
c'$.  We now wish to show that the result of substituting our $a'$ for
$x$ in $b$ has the expected type $[a'/x]\phi$.  For this, we must
first apply the sequence $S^{-1}$ to $\Gamma\vdash a':\phi'$.  This
gives us $\Gamma\vdash a':\psi'$.  Now we apply Substitution (proved
in Section~\ref{sec:subst} below):

\begin{lemma}[Substitution]
\label{lem:subst}
If $\Gamma,x:\phi,\Gamma'\vdash a':\phi'$ and $\Gamma\vdash a:\phi$,
then $\Gamma,[a/x]\Gamma'\vdash [a/x]a':[a/x]\phi'$.
\end{lemma}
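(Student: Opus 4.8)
The argument is by induction on the derivation of $\Gamma, x{:}\phi, \Gamma' \vdash a' : \phi'$. Before starting I would record four auxiliary facts, each a routine induction on typing or reduction: \emph{(i) Weakening} --- if $\Gamma_1, \Gamma_3 \vdash b : \psi$ and $\Gamma_1, \Gamma_2, \Gamma_3$ is $\textit{Ok}$, then $\Gamma_1, \Gamma_2, \Gamma_3 \vdash b : \psi$; \emph{(ii) Substitutivity of reduction} --- if $b \leadsto^* \hat{b}$ then $[a/x]b \leadsto^* [a/x]\hat{b}$, hence $b \downarrow b'$ implies $[a/x]b \downarrow [a/x]b'$; \emph{(iii) Free-variable containment} --- if $\Gamma \vdash b : \psi$ then $\textit{FV}(b) \subseteq \textit{dom}(\Gamma)$; and \emph{(iv) Preservation of $\textit{Ok}$} --- if $\Gamma, x{:}\phi, \Gamma'$ is $\textit{Ok}$ (as it is whenever it appears in a derivation) and $\textit{FV}(a) \subseteq \textit{dom}(\Gamma)$, then $\Gamma, [a/x]\Gamma'$ is $\textit{Ok}$, where $[a/x]\Gamma'$ denotes the context obtained by substituting into every type in the range of $\Gamma'$. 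Fact (iv) relies on $\textit{Ok}$-ness of the original context to know that $x \notin \textit{FV}(\psi)$ for every type $\psi$ assigned in $\Gamma$, and on fact (iii) applied to $\Gamma \vdash a : \phi$. I also use the standard term-level substitution identity $[a/x]([b/y]\psi) = [([a/x]b)/y]([a/x]\psi)$, valid after $\alpha$-renaming so that $y \neq x$ and $y \notin \textit{FV}(a)$.

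The key base case is \texttt{var}, where $a' \equiv y$ and $\phi' \equiv (\Gamma, x{:}\phi, \Gamma')(y)$. If $y \in \textit{dom}(\Gamma)$ then $[a/x]y = y$ and $[a/x]\phi' = \phi'$ (by $\textit{Ok}$-ness, $x \notin \textit{FV}(\phi')$), so we reapply \texttt{var} in $\Gamma, [a/x]\Gamma'$, discharging its side condition with fact (iv). If $y \equiv x$, then $[a/x]y = a$ and $[a/x]\phi' = [a/x]\phi = \phi$, and the goal $\Gamma, [a/x]\Gamma' \vdash a : \phi$ follows from $\Gamma \vdash a : \phi$ by Weakening. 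If $y \in \textit{dom}(\Gamma')$, then $[a/x]y = y$, its type in the new context is $[a/x]\phi'$, and we reapply \texttt{var}. The \texttt{join} case uses fact (ii) to pass from $a_1 \downarrow a_2$ to $[a/x]a_1 \downarrow [a/x]a_2$, plus fact (iv); the cases \texttt{zero} and \texttt{nil} are the same but easier. The cases \texttt{succ}, \texttt{cons}, \texttt{app}, and \texttt{spec-app} apply the induction hypothesis to the premises and reapply the rule, using the substitution-commutation identity to bring $[a/x]([a'/x']\psi)$ into the form produced by the rule.

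The cases needing real care are the ones with bound or freshness-constrained variables: \texttt{abs}, \texttt{spec-abs}, \texttt{conv}, \texttt{Rnat}, and \texttt{Rvec}. In each I would first $\alpha$-rename the rule's designated variable --- the $\lambda$-bound variable in \texttt{abs}/\texttt{spec-abs}, the variable $x'$ bound by the dot in \texttt{conv}/\texttt{Rnat}/\texttt{Rvec}, and the auxiliary binders $y,z,l,v,u$ in the recursor premises --- so that it differs from $x$ and does not occur in $\textit{FV}(a)$. For \texttt{abs}/\texttt{spec-abs} I then invoke the induction hypothesis on the premise $\Gamma, x{:}\phi, \Gamma', z{:}\psi' \vdash \cdots$, treating $\Gamma', z{:}\psi'$ as the lemma's $\Gamma'$ and noting $[a/x](\Gamma', z{:}\psi') = [a/x]\Gamma', z{:}[a/x]\psi'$; for \texttt{spec-abs} the side condition $z \notin \textit{FV}([a/x]a')$ follows from $z \notin \textit{FV}(a')$ together with $z \notin \textit{FV}(a)$ (guaranteed by the renaming). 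For \texttt{conv}, \texttt{Rnat}, and \texttt{Rvec} I apply the induction hypothesis to each premise and reassemble, again using the substitution-commutation identity to verify that $[a/x]$ applied to the rule's conclusion --- expressions like $[a''/x']\psi$ or $[l/y,\, b/x']\psi$ --- coincides with the conclusion obtained by reapplying the rule to the substituted premises.

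\textbf{Main obstacle.} The bookkeeping is routine, but the delicate point is exactly these last cases, \texttt{conv}, \texttt{Rnat}, and \texttt{Rvec}, where the conclusion's type is itself a substitution instance: one must $\alpha$-rename the designated variable away from $x$ and $\textit{FV}(a)$ and then check that outer and inner substitutions commute so that pushing $[a/x]$ through the conclusion yields precisely what the rule produces on the substituted premises. The secondary subtlety is propagating the $\textit{Ok}$ side conditions through $[a/x]\Gamma'$, which is where facts (iii) and (iv) --- rather than mechanical rule-reapplication --- do the actual work.
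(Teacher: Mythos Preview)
Your proposal is correct and matches the paper's own proof essentially point for point: the paper also proceeds by induction on the typing derivation, isolates Weakening, substitutivity of $\downarrow$, free-variable containment, and preservation of $\textit{Ok}$ as auxiliary lemmas, and handles the \texttt{var}, \texttt{join}, binder, and recursor cases exactly as you describe (including the $\alpha$-renaming to make the inner and outer substitutions commute). The only cosmetic difference is that the paper phrases the induction as ``on the depth of the derivation'' and states Weakening only for appending to the end of the context rather than your more general middle-insertion form.
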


\noindent This gives us $\Gamma\vdash[a/x]a':[a/x]\psi$.  We may
apply $S$ now to obtain $\Gamma\vdash[a/x]a':[a/x]\phi$. 

\ 

\noindent \textbf{Case:}

\

$\infer{\Gamma\vdash 0:\nat}{\ }$

\

\noindent This case cannot arise since $0$ is a normal form.

\ 

\noindent \textbf{Case:}

\

$\infer{\Gamma\vdash \nil:\langle\vc\ \phi\ 0\rangle}{\ }$

\ 

\noindent This case cannot arise since $\nil$ is a normal form.

\ 

\noindent \textbf{Case:}

\

$\infer{\Gamma\vdash (S\ a):\nat}{\Gamma\vdash a:\nat  }$

\ 

\noindent By the induction hypothesis, we have $\Gamma\vdash b:\nat$,
and we may then reapply this rule to obtain $\Gamma\vdash
(S\ b):\nat$.

\ 

\noindent \textbf{Case:}

\

$\infer{\Gamma\vdash (R_\nat\ a\ a'\ a''):[a''/x]\phi}
      {\begin{array}{l}\Gamma \vdash a'' : \nat \\
       \Gamma \vdash a : [0/x]\phi \\
       \Gamma \vdash a' : \Pi y:\nat. \Pi u : [y/x]\phi. [(S y)/x]\phi
       \end{array}}$

\ 

\noindent If the reduction arises from $a\leadsto b$ or $a'\leadsto b$
or $a''\leadsto b$, then we apply the induction hypothesis to the
corresponding premise and then reapply this typing rule.  If the
reduction arises because $a''\equiv 0$ and the $R_\nat$-term is itself
being reduced to $a$, then we have $\Gamma\vdash a:[a''/x]\phi$ from
the second premise to the rule, and the fact that $a''\equiv
0$. Suppose the reduction arises because $a''\equiv (S\ b)$ and the
$R_\nat$-term is itself being reduced to
$(a'\ b\ (R_\nat\ a\ a'\ b))$.  Applying Simplifying Inversion
(Lemma~\ref{lem:simplinv}), we obtain $\Gamma\vdash b:\nat$.  
So we may apply the $R_\nat$ typing rule
to obtain $\Gamma\vdash(R_\nat\ a\ a'\ b):[b/x]\phi$.  Applying the
application typing rule twice gives us then
$\Gamma\vdash(a'\ b\ (R_\nat\ a\ a'\ b)):[(S\ b)/x]\phi$. This is the
desired typing, since $a''\equiv (S\ b)$.

\ 

\noindent \textbf{Case:}

\

$\infer{\Gamma\vdash (\cons\ a\ a'):\langle \vc\ \phi\ (S\ l)\rangle}
      {\begin{array}{l}\Gamma\vdash a:\phi \\ \Gamma \vdash a':\langle \vc\ \phi\ l\rangle
       \end{array}}$

\ 

\noindent The reduction must arise from $a\leadsto b$ or $a'\leadsto
b$, so we apply the induction hypothesis to the corresponding premise,
and then reapply this typing rule.

\ 

\noindent \textbf{Case:}

\

$\infer{\Gamma\vdash (R_\vc\ a\ a'\ a''):[l/y, a''/x]\phi}
      {\begin{array}{l}\Gamma \vdash a'' : \langle \vc\ \phi'\ l\rangle \\
       \Gamma \vdash a : [0/y,\nil/x]\phi \\
       \Gamma \vdash a' : \Pi z:\phi'. \forall l:\nat. \Pi v :\langle \vc\ \phi'\ l\rangle. \Pi u : [l/y, v/x]\phi. \\
        \ \ \ \ \ \ \ \ \ \ \ \ \  [(S\ l)/y, (\cons\ z\ v)/x]\phi
       \end{array}}$

\ 

\noindent If the reduction arises from $a\leadsto b$ or $a'\leadsto b$
or $a''\leadsto b$, then we apply the induction hypothesis to the
corresponding premise and then reapply this typing rule.  If the
reduction arises because $a''\equiv\nil$ and the $R_\nat$-term is
itself being reduced to $a$, then we have $\Gamma\vdash
a:[0/y,a''/x]\phi$ from the second premise to the rule, and the fact
that $a''\equiv \nil$.  By Simplifying Inversion
(Lemma~\ref{lem:simplinv}), we know there is a derivation of
$\Gamma\vdash\nil:\langle\vc\ \phi'\ l\rangle$ which starts with
$\Gamma\vdash\nil:\langle\vc\ \phi''\ 0\rangle$ and then has a
sequence $S$ of \texttt{(conv)}-inferences.  We may use this same
series $S$ to change $0$ to $l$ in $\Gamma\vdash a:[0/y,a''/x]\phi$,
yielding the desired conclusion.  Finally, suppose the reduction
arises because $a''\equiv(\cons\ b'\ b'')$ for some $b'$ and $b''$,
and the $R_\vc$-term itself is reduced to
$(a'\ b'\ b''\ (R_\vc\ a\ a'\ b''))$.  By Simplifying Inversion again,
we have a derivation of $\Gamma \vdash
(\cons\ b'\ b''):\langle\vc\ \phi'\ l\rangle$ starting from a
$\cons$-introduction deriving $\Gamma\vdash
(\cons\ b'\ b''):\langle\vc\ \phi''\ (S\ l'')$ from premises $\Gamma
b':\phi''$ and $\Gamma b'':\langle\vc\ \phi''\ l''\rangle$; and then
using a sequence $S$ of \texttt{(conv)} inferences.

We may apply the sequence $S$ of \texttt{(conv)} inferences to the
typing for $b''$ to obtain $\Gamma
b'':\langle\vc\ \phi'\ \hat{l}rangle$, for some $\hat{l}$ where
$(S\ \hat{l})\equiv l$.  With this, we can reapply the typing rule for
$R_\vc$ to get
$\Gamma\vdash(R_\vc\ a\ a'\ b''):[\hat{l}/y,b''/x]\phi$.  Using this
and the typing for $b''$ we derived just previously, we can obtain
$\Gamma\vdash(a'\ b'\ b''\ (R_\vc\ a\ a'\ b'')):[(S\ \hat{l})/y,(\cons\ b'\ b'')/x]\phi$.
Since $a\equiv(\cons\ b'\ b'')$ and $(S\ \hat{l})\equiv l$, the type
is equivalent to the desired one.

\subsection{Proof of Simplifying Inversion (Lemma~\ref{lem:simplinv})}
\label{sec:simplinv}

For the proof of this lemma, we begin by simplifying the derivation
$\mathcal{D}$ of $\Gamma\vdash a:\phi$ by applying two transformations
to the maximal path ending with the conclusion $\mathcal{D}$, which is
assigning a type to $a$ (rather than a strict subterm of $a$). First,
we remove all inferences of the following form:
\[
\infer{\Gamma\vdash a:[a'/x]\phi}
      {\infer{\Gamma\vdash a:\forall x:\phi'.\phi}
             {\Gamma, x:\phi'\vdash a:\phi} &
       \Gamma \vdash a':\phi'}
\]
\noindent We may replace this with the result of applying the
Substitution Lemma proved in the next section, in the special case
where $x\not\in\textit{FV}(a)$.  By inspection of the proof of
Substitution, we see that while inferences of the form we are eliminating
can be created, they must be created in a part of $\mathcal{D}$ typing
a strict subterm of $a$.  This is because $a$ is an introduction form.

The second transformation simplifies this inference:
\[
\infer{\Gamma\vdash a:[a'/y][a_2/x]\phi}
      {\infer{\Gamma\vdash a:\forall y:[a_2/x]\phi'.[a_2/x]\phi}
             {\Gamma\vdash \hat{a}:a_1 = a_2 &
              \Gamma\vdash a:\forall y:[a_1/x]\phi'.[a_1/x]\phi} &
       \Gamma\vdash a':[a_2/x]\phi'}
\]
\noindent Observing that the substitutions in question commute,
we reduce this to the following, where $\hat{a}'$ is easily
constructed to show $a_2 = a_1$ from $\hat{a}:a_1 = a_2$:
\[
\infer{\Gamma\vdash a:[a_2/x][a'/y]\phi}
      {\Gamma\vdash \hat{a}:a_1 = a_2 &
       \infer{\Gamma\vdash a:[a_1/x][a'/y]\phi}
             {\Gamma\vdash a:\forall y:[a_1/x]\phi'.[a_1/x]\phi &
              \infer{\Gamma\vdash a':[a_1/x]\phi'}
                    {\Gamma\vdash \hat{a}':a_2 = a_1 & \Gamma\vdash a':[a_2/x]\phi'}}}
\]

\noindent For each of these transformations, the following measure is
strictly decreased in the lexicographic ordering (combining two copies
of the natural number ordering): the pair of the sum of the distances
of occurrences of a \texttt{(conv)} inference on the maximal path
typing $a$; and the number of inferences of the form removed by the
first transformation in that same path.  Note that the \texttt{(conv)}
inference introduced by the second transformation in the topmost
rightmost position show above is not on the maximal path typing $a$
(it is typing $a'$).  Strict decrease of the stated measure implies
that the transformations terminate.  They also preserve the form of
the derived judgment.

We can now prove the lemma by induction on the simplified derivation
$\mathcal{D}$.  It cannot end in a use of \texttt{(spec-abs)}, since
then $\phi$ would be a $\forall$-type (and by assumption it is of the
form corresonding to the introduction form which $a$ is assumed to
have).  It also cannot be a \texttt{(spec-app)} inference, for the
following reason.  Consider the maximal consecutive sequence $S$ of
\texttt{(spec-app)} inferences ending at the conclusion of
$\mathcal{D}$, and typing $a$.  These inferences cannot start with the
conclusion of either a \texttt{(conv)} or a \texttt{(spec-abs)}
inference, since such patterns of inference have been eliminated by
the above transformations.  But these are the only possibilities, since
$a$ is an introduction form.  Therefore, the derivation $\mathcal{D}$
ends in a sequence of \texttt{(conv)} inferences, starting from a use
of the introduction rule for $a$.  Since \texttt{(conv)} does not change
the context, this introduction inference for $a$ uses the same context,
as required by the statement of the lemma.

\subsection{Proof of Substitution (Lemma~\ref{lem:subst})}
\label{sec:subst}

The proof is written using different variable names than the statement
of the lemma, in order to not clash with the variable names in the
typing rules. We prove:

If $\Gamma,y:\psi,\Gamma'\vdash a:\phi$ and $\Gamma\vdash b:\psi$,
then $\Gamma,[b/y]\Gamma'\vdash [b/y]a:[b/y]\phi$.

The proof is by induction on the depth of $\Gamma,y:\psi,\Gamma'\vdash a:\phi$. The cases are:

\ 

\noindent \textbf{Case:}

\

$\infer{\Gamma,y:\psi,\Gamma'\vdash x:\phi}{(\Gamma,y:\psi,\Gamma')(x) \equiv \phi & \Gamma \mathit{Ok}}$

There are three cases: $x \in dom(\Gamma)$, $x \in dom(\Gamma')$, or $x = y$. If $x \in dom(\Gamma)$, then by $\Gamma \mathit{Ok}$ we know $y \not\in FV(\phi)$, so $[b/y]\phi \equiv \phi$ and the conclusion follows by Var. If $x \in dom(\Gamma')$ the conclusion follows directly by Var. In the case $x = y$, we use the second assumption together with a weakening lemma:

\begin{lemma}[Weakening]
\label{lem:weak}
If $\Gamma \vdash a : \phi$, $\textit{dom}(\Gamma) \subset \textit{dom}(\Gamma')$, and $\Gamma' \mathit{Ok}$, then $\Gamma,\Gamma' \vdash a : \phi$.
\end{lemma}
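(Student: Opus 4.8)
The plan is to prove this by induction on the derivation of $\Gamma \vdash a : \phi$, in each case applying the induction hypothesis to the typing premises and reapplying the same rule in the larger context. It is convenient to first record an auxiliary observation: if $\Gamma \vdash a : \phi$ is derivable then $\Gamma$ is \textit{Ok}. Indeed, every derivation is a finite tree whose leaves are instances of \texttt{var}, \texttt{join}, \texttt{zero}, or \texttt{nil}, each of which carries the side condition that its own context is \textit{Ok}; and along any path from the root the context only grows, by appending bindings at the end (via \texttt{abs} and \texttt{spec-abs}), so $\Gamma$ is a prefix of that leaf context. Since any prefix of an \textit{Ok} context is again \textit{Ok}, we conclude $\Gamma$ is \textit{Ok}.

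Given this, most cases are immediate. For \texttt{var}, the larger context assigns $x$ the same type, since it agrees with $\Gamma$ on $\textit{dom}(\Gamma)$, and it is \textit{Ok} by hypothesis. For \texttt{join}, \texttt{zero}, and \texttt{nil}, the only relevant side condition is \textit{Ok}-ness of the context, which is again assumed; note that joinability $a\downarrow a'$ in \texttt{join} does not depend on the context. For the rules whose premises use the same context as the conclusion --- \texttt{conv}, \texttt{spec-app}, \texttt{app}, \texttt{succ}, \texttt{cons}, \texttt{Rnat}, \texttt{Rvec} --- we apply the induction hypothesis to each typing premise and reapply the rule; the variable-freshness side conditions, such as $x\notin\textit{dom}(\Gamma)$ in \texttt{conv}, \texttt{Rnat}, and \texttt{Rvec}, are re-established by first $\alpha$-renaming the offending bound variable away from the domain of the larger context, which is legitimate since we work up to $\equiv$.

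The two cases that take a little care are \texttt{abs} and \texttt{spec-abs}, where the premise is typed in $\Gamma, x:\phi'$. After renaming $x$ fresh for the larger context, I apply the induction hypothesis to that premise in the context obtained by appending $x:\phi'$, and for this the appended context must be shown \textit{Ok}. Here the auxiliary observation does the work: derivability of the premise $\Gamma, x:\phi' \vdash a : \phi$ gives that $\Gamma, x:\phi'$ is \textit{Ok}, hence $\textit{FV}(\phi')\subseteq\textit{dom}(\Gamma)$, which together with \textit{Ok}-ness of the larger context and freshness of $x$ yields \textit{Ok}-ness of the appended context. We then reapply \texttt{abs}, respectively \texttt{spec-abs} (whose extra condition $x\notin\textit{FV}(a)$ is unaffected by the change of context). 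The only genuine --- and still minor --- obstacle is exactly this bookkeeping with the \textit{Ok} side conditions; everything else is routine.
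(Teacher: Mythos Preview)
Your proof is correct and follows essentially the same approach as the paper: induction on the typing derivation, with the only nontrivial cases being \texttt{abs} and \texttt{spec-abs}, where you need that the premise's extended context is \textit{Ok}; the paper obtains this by citing ``regularity'' while you spell out the same fact as your auxiliary observation. Your treatment is simply more explicit about the routine bookkeeping (alpha-renaming for freshness, the leaf-based argument for regularity), but the structure and key idea are identical.
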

\begin{proof}
Induction on $\Gamma \vdash a : \phi$. The only interesting cases are for Abs and Spec-Abs. There we have $\Gamma, x:\phi' \vdash a : \phi$ by assumption; by regularity, we get $\Gamma, x:\phi' \mathit{Ok}$, so then $\Gamma', x:\phi' \mathit{Ok}$ and the conclusion follows by IH.
\end{proof}

\begin{lemma}[Free variables of typable terms]
\label{lem:vars}
If $\Gamma\vdash a:\phi$, then $\textit{FV}(a)\subset\textit{dom}(\Gamma)$.
\end{lemma}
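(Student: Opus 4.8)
The plan is a straightforward induction on the derivation of $\Gamma \vdash a : \phi$, reading the claim off each typing rule of Figure~\ref{fig:typing}; the only subtlety is bookkeeping about bound variables.

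First I would dispatch the axiom-like rules. For \texttt{var}, the conclusion $\Gamma \vdash x : \phi$ requires $\Gamma(x)$ to be defined, hence $x \in \textit{dom}(\Gamma)$ and $\textit{FV}(x) = \{x\} \subset \textit{dom}(\Gamma)$. For \texttt{join}, \texttt{zero}, and \texttt{nil}, the subject has no free variables at all, so there is nothing to check.

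Next, the rules whose subject is built from subterms, each of which is typed in the same context $\Gamma$: \texttt{app}, \texttt{succ}, \texttt{cons}, \texttt{Rnat}, and \texttt{Rvec}. In each case $\textit{FV}$ of the subject is the union of the $\textit{FV}$'s of its immediate subterms (e.g.\ $\textit{FV}((a\ a')) = \textit{FV}(a) \cup \textit{FV}(a')$), and the induction hypothesis applied to each premise that types one of those subterms gives containment in $\textit{dom}(\Gamma)$; a finite union of such sets is still contained in $\textit{dom}(\Gamma)$. The \texttt{conv} and \texttt{spec-app} rules are even easier: the subject $a$ is typed verbatim in a premise over the same context $\Gamma$, so the induction hypothesis applied to that premise is exactly the claim, while the remaining premises mention terms that do not occur in $a$.

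Finally, the rules that extend the context, \texttt{abs} and \texttt{spec-abs}. For \texttt{abs}, the premise is $\Gamma, x{:}\phi' \vdash a : \phi$, so the induction hypothesis gives $\textit{FV}(a) \subset \textit{dom}(\Gamma) \cup \{x\}$; since $\textit{FV}(\lambda x.a) = \textit{FV}(a) \setminus \{x\}$, we conclude $\textit{FV}(\lambda x.a) \subset \textit{dom}(\Gamma)$. For \texttt{spec-abs}, the subject is just $a$, not a binder, so removal of $x$ is not automatic; but the rule carries the side condition $x \notin \textit{FV}(a)$, and combining it with the induction hypothesis $\textit{FV}(a) \subset \textit{dom}(\Gamma) \cup \{x\}$ again yields $\textit{FV}(a) \subset \textit{dom}(\Gamma)$. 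There is no genuine obstacle; the proof is entirely routine. If anything, the point to get right is to use, for \texttt{conv}, \texttt{spec-app}, and \texttt{spec-abs}, exactly the premise that types the subject (and, for \texttt{spec-abs}, its side condition) rather than all premises, since the other premises type terms unrelated to the subject.
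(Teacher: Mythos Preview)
Your proposal is correct and follows exactly the approach the paper takes: a straightforward induction on the typing derivation. The paper's own proof is merely the one-line remark that this is a straightforward induction; you have simply spelled out the cases in full, and each case is handled correctly.
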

\noindent The proof is a straightforward induction on the typing
derivation.

We also need to note that by Lemma~\ref{lem:vars} $\mathit{FV}(b) \subset \mathit{dom}(\Gamma)$, and the substitution preserves well-scoping of contexts:

\begin{lemma}
If $\Gamma, y:\psi, \Gamma' \mathit{Ok}$ and $\mathit{FV}(b)\subset \mathit{dom}(\Gamma)$, then $\Gamma, [b/y]\Gamma' \mathit{Ok}$.
\end{lemma}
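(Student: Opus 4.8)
The statement is pure bookkeeping about the predicate $\mathit{Ok}$, so the plan is to prove it by induction on the length of $\Gamma'$, just unfolding the definition of $\mathit{Ok}$ (namely: every split $\Delta_1, x{:}\phi, \Delta_2$ of the context must satisfy $\textit{FV}(\phi)\subseteq\textit{dom}(\Delta_1)$). The one fact to keep in mind throughout is that a substitution never renames the variables in a context's domain, so $\textit{dom}([b/y]\Gamma') = \textit{dom}(\Gamma')$; consequently each split of $\Gamma, [b/y]\Gamma'$ arises from a split of $\Gamma, y{:}\psi, \Gamma'$ by applying the substitution (and deleting the $y$ entry when the split happens to fall after it).

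In the base case $\Gamma'$ is empty, $[b/y]\Gamma'$ is empty as well, and we must show $\Gamma\,\mathit{Ok}$. Any split $\Gamma \equiv \Gamma_1, x{:}\phi, \Gamma_2$ is also a split $\Gamma, y{:}\psi \equiv \Gamma_1, x{:}\phi, (\Gamma_2, y{:}\psi)$, so the required inclusion $\textit{FV}(\phi) \subseteq \textit{dom}(\Gamma_1)$ is immediate from $\Gamma, y{:}\psi\,\mathit{Ok}$.

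For the inductive step, write $\Gamma' \equiv \Gamma'', x{:}\chi$. From the hypothesis $\Gamma, y{:}\psi, \Gamma'', x{:}\chi\,\mathit{Ok}$ I would extract two consequences: (a) $\Gamma, y{:}\psi, \Gamma''\,\mathit{Ok}$, because any split of $\Gamma, y{:}\psi, \Gamma''$ extends, by appending $x{:}\chi$ to its suffix, to a split of $\Gamma, y{:}\psi, \Gamma'', x{:}\chi$; and (b) $\textit{FV}(\chi) \subseteq \textit{dom}(\Gamma) \cup \{y\} \cup \textit{dom}(\Gamma'')$, taking the split whose middle entry is $x{:}\chi$. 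Applying the induction hypothesis to (a) gives $\Gamma, [b/y]\Gamma''\,\mathit{Ok}$, so all that remains is to check the new final entry, i.e.\ that $\textit{FV}([b/y]\chi) \subseteq \textit{dom}(\Gamma) \cup \textit{dom}(\Gamma'')$, which equals $\textit{dom}(\Gamma, [b/y]\Gamma'')$. This is exactly where the hypothesis $\textit{FV}(b) \subseteq \textit{dom}(\Gamma)$ enters: from the standard inclusion $\textit{FV}([b/y]\chi) \subseteq (\textit{FV}(\chi) \setminus \{y\}) \cup \textit{FV}(b)$ together with (b) we get $\textit{FV}([b/y]\chi) \subseteq (\textit{dom}(\Gamma) \cup \textit{dom}(\Gamma'')) \cup \textit{dom}(\Gamma)$, which is the desired set. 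Disjointness of the domains in $\Gamma, [b/y]\Gamma'', x{:}[b/y]\chi$ is inherited directly from that of $\Gamma, y{:}\psi, \Gamma'', x{:}\chi$, since no domain changes under the substitution.

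The whole argument is routine; the closest thing to an obstacle — and the step I would write most carefully — is the free-variable computation in the inductive step: one must observe that deleting $y$ from $\textit{FV}(\chi)$ via the substitution is precisely compensated by $\textit{FV}(b) \subseteq \textit{dom}(\Gamma)$, and that $y$ is not silently reintroduced, which holds because $y \notin \textit{dom}(\Gamma) \supseteq \textit{FV}(b)$. An essentially equivalent alternative would be a direct case analysis on whether the distinguished variable of an arbitrary split of $\Gamma, [b/y]\Gamma'$ lies in $\Gamma$ or in $\Gamma'$, but structuring it as an induction on $\Gamma'$ keeps the two cases cleanly separated and matches the way the lemma is used in the Substitution proof.
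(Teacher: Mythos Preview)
Your proof is correct, and the core step---the free-variable bound $\textit{FV}([b/y]\chi) \subseteq (\textit{FV}(\chi)\setminus\{y\}) \cup \textit{FV}(b)$ combined with $\textit{FV}(b)\subseteq\textit{dom}(\Gamma)$---is exactly what the paper uses. The only difference is organizational: the paper takes the direct route you mention at the end, ranging over each variable $z$ in $\Gamma'$ and checking well-scoping of $[b/y]\phi$ for that entry in one shot, rather than structuring it as an induction on the length of $\Gamma'$; the two arguments are interchangeable and neither buys anything the other doesn't.
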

\begin{proof}
For each variable $z$ in $\Gamma'$, if the entire context looks like  $\Gamma, y:\psi, \Gamma'', z:\phi, \Gamma'''$ we know that $\mathit{FV}([b/y]\phi) \subset \mathit{FV}(\phi)\cup\mathit{FV}(b)-\{y\}$ and $\mathit{dom}(\Gamma, [b/y]\Gamma'') = \mathit{dom}(\Gamma, y:\psi, \Gamma'') - \{y\}$, so $[b/y]\phi$ is still well-scoped.
\end{proof}

\ 

\noindent \textbf{Case:}

\

$\infer{\Gamma,y:\psi,\Gamma'\vdash x:\phi\vdash \join : a = a'}{a \downarrow a'}$

\ 

\noindent Immediate by the fact that substitution preserves joinability.

\ 

\noindent \textbf{Case:}

\

$\infer{\Gamma,y:\psi,\Gamma'\vdash a:[a''/x]\phi}{\Gamma,y:\psi,\Gamma' \vdash a''': a' = a'' & \Gamma,y:\psi,\Gamma'\vdash a:[a'/x]\phi & x\not\in\textit{dom}(\Gamma,y:\psi,\Gamma')}$

\ 

\noindent First, rename $x$ in $\phi$ so that $x \not\in \mathit{FV}(b)$.

By IH we get $\Gamma, [b/y]\Gamma' \vdash [b/y]a''' : [b/y]a' = [b/y]a''$ and $\Gamma, [b/y]\Gamma' \vdash [b/y]a : [b/y][a'/x]\phi \equiv [[b/y]a'/x][b/y]\phi$. Now by \texttt{(conv)} we conclude $\Gamma, [b/y]\Gamma' \vdash [b/y]a : [[b/y]a''/x]\phi \equiv [b/y][a''/x]\phi$ as required.

\ 

\noindent \textbf{Case:}

\

$\infer{\Gamma,y:\psi,\Gamma' \vdash a : \forall x:\phi'.\phi}{\Gamma,y:\psi,\Gamma', x:\phi' \vdash a:\phi & x\not\in\textit{FV}(a)}$

\ 

\noindent First, rename $x$ in the derivation of $\Gamma,y:\psi,\Gamma', x:\phi' \vdash a:\phi$ so that $x \not\in \mathit{FV}(a) \cup \mathit{FV}(b)$. This can be done without changing the depth.

By IH we get $\Gamma,[b/y]\Gamma', x:[b/y]\phi' \vdash [b/y]a : [b/y]\phi$. The conclusion follows by Spec-Abs.

\ 

\noindent \textbf{Case:}

\

$\infer{\Gamma,y:\psi,\Gamma' \vdash a : [a'/x]\phi}{ \Gamma,y:\psi,\Gamma' \vdash a:\forall x:\phi'.\phi & \Gamma,y:\psi,\Gamma' \vdash a':\phi'}$

\ 

\noindent Pick $x \neq y$ and $x \not\in \mathit{FV}(b)$. 

By IH we get $\Gamma, [b/y]\Gamma' \vdash [b/y]a : [b/y]\forall x:\phi'.\phi \equiv \forall x:[b/y]\phi'.[b/y]\phi$ and $\Gamma, [b/y]\Gamma' \vdash [b/y]a' : [b/y]\phi'$. Then by Spec-App,  $\Gamma, [b/y]\Gamma' \vdash [b/y]a : [[b/y]a'/x][b/y]\phi \equiv [b/y][a'/x]\phi$ as required.

\noindent \textbf{Case:}

\

$\infer{\Gamma,y:\psi,\Gamma',x:\phi' \vdash \lambda x.a : \Pi x:\phi'.\phi}
      {\Gamma,y:\psi,\Gamma' \vdash a:\phi}$

\ 

\noindent Similar to Spec-abs.

\

\noindent \textbf{Case:}

\

$\infer{\Gamma,y:\psi,\Gamma' \vdash (a\ a') : [a'/x]\phi}{\Gamma,y:\psi,\Gamma'\vdash a:\Pi x:\phi'.\phi & \Gamma,y:\psi,\Gamma'\vdash a':\phi'}$

\ 

\noindent Similar to Spec-App.

\

\noindent \textbf{Case:}

\

$\infer{\Gamma\vdash 0:\nat}{\ }$

\

\noindent Immediate.

\ 

\noindent \textbf{Case:}

\

$\infer{\Gamma\vdash \nil:\langle\vc\ \phi\ 0\rangle}{\ }$

\ 

\noindent Immediate.

\ 

\noindent \textbf{Case:}

\

$\infer{\Gamma\vdash (S\ a):\nat}{\Gamma\vdash a:\nat  }$

\ 

\noindent Immediate by IH.

\ 

\noindent \textbf{Case:}

\

$\infer{\Gamma,y_0:\psi,\Gamma'\vdash (R_\nat\ a\ a'\ a''):[a''/x]\phi}
      {\begin{array}{l}\Gamma,y_0:\psi,\Gamma' \vdash a'' : \nat \\
       \Gamma,y_0:\psi,\Gamma'\vdash a : [0/x]\phi \\
       \Gamma,y_0:\psi,\Gamma'\vdash a' : \Pi y:\nat. \Pi u : [y/x]\phi. [(S y)/x]\phi
       \end{array}}$

\ 

\noindent First rename $x$ in $\phi$ so that $x \not\in \mathit{FV}(b)\cup\{z,y\}$.

IH gives \[ \Gamma,[b/y_0]\Gamma'\vdash a : [b/y_0][0/x]\phi \equiv [0/x][b/y_0]\phi \] and \[ \Gamma,[b/y_0]\Gamma'\vdash a' : [b/y_0] \Pi y:\nat. \Pi u : [y/x]\phi. [(S y)/x]\phi \equiv  \Pi y:\nat. \Pi u : [y/x][b/y_0]\phi. [(S y)/x][b/y_0]\phi
\] Then by Rnat, \[
  \Gamma,[b/y_0]\Gamma'\vdash (R_\nat\ a\ a'\ a''):[[b/y_0]a''/x][b/y_0]\phi \equiv [b/y_0][a''/x]\phi
\]

\ 

\noindent \textbf{Case:}

\

$\infer{\Gamma,y:\psi,\Gamma'\vdash (\cons\ a\ a'):\langle \vc\ \phi\ (S\ l)\rangle}
      {\begin{array}{l}\Gamma,y:\psi,\Gamma'\vdash a:\phi \\ \Gamma,y:\psi,\Gamma' \vdash a':\langle \vc\ \phi\ l\rangle
       \end{array}}$

\ 

\noindent Immediate by IH.

\ 

\noindent \textbf{Case:}

\

$\infer{\Gamma\vdash (R_\vc\ a\ a'\ a''):[l/y, a''/x]\phi}
      {\begin{array}{l}\Gamma \vdash a'' : \langle \vc\ \phi'\ l\rangle \\
       \Gamma \vdash a : [0/y,\nil/x]\phi \\
       \Gamma \vdash a' : \Pi z:\phi'. \forall l:\nat. \Pi v :\langle \vc\ \phi'\ l\rangle. \Pi u : [l/y, v/x]\phi. \\
        \ \ \ \ \ \ \ \ \ \ \ \ \  [(S\ l)/y, (\cons\ z\ v)/x]\phi
       \end{array}}$

\ 

\noindent Similar to Rnat case.

\section{Proof of Progress (Theorem~\ref{thm:progress})}
\label{sec:progress}

The proof is by induction on $\Gamma \vdash a : \phi$. 

\ 

\noindent \textbf{Case:}

\

$\infer{\Gamma\vdash x:\phi}{\Gamma(x) \equiv \phi}$

\ 

\noindent Impossible by $dom(\Gamma) \cap FV(v) = \emptyset$.

\ 

\noindent \textbf{Case:}

\

$\infer{\Gamma\vdash \join : a = a'}{a \downarrow a'}$

\ 

\noindent $\join$ is a value, as required.

\ 

\noindent \textbf{Case:}

\

$\infer{\Gamma\vdash a:[a''/x]\phi}{\Gamma \vdash a''': a' = a'' & \Gamma \vdash a:[a'/x]\phi & x\not\in\textit{dom}(\Gamma)}$

\ 

\noindent Directly by the IH for $\Gamma \vdash a:[a'/x]\phi$.

\ 

\noindent \textbf{Case:}

\

$\infer{\Gamma \vdash a : \forall x:\phi'.\phi}{\Gamma, x:\phi' \vdash a:\phi & x\not\in\textit{FV}(a)}$

\ 

\noindent The condition $x\not\in\textit{FV}(a)$ ensures that $dom(\Gamma,x:\phi') \cap \textit{FV}(a) = \emptyset$, so we can apply the IH for $\Gamma, x:\phi' \vdash a:\phi$.

\ 

\noindent \textbf{Case:}

\

$\infer{\Gamma \vdash a : [a'/x]\phi}{\Gamma \vdash a:\forall x:\phi'.\phi & \Gamma \vdash a':\phi'}$

\ 

\noindent Directly by the IH for $\Gamma \vdash a:\forall x:\phi'.\phi$.

\noindent \textbf{Case:}

\

$\infer{\Gamma \vdash \lambda x.a : \Pi x:\phi'.\phi}
      {\Gamma, x:\phi' \vdash a:\phi}$

\ 

\noindent $\lambda x.a$ is a value as required.

\noindent \textbf{Case:}

\

$\infer{\Gamma \vdash (a\ a') : [a'/x]\phi}{\Gamma \vdash a:\Pi x:\phi'.\phi & \Gamma\vdash a':\phi'}$

\ 

\noindent By the IH for $\Gamma \vdash a:\Pi x:\phi'.\phi$, we know $a$ either steps or is a value. If $a$ steps, the entire expression $(a\ a')$ steps also, by $\leadsto$-congruence. If $a$ is a value, by Lemma~\ref{lem:canonical} $a=\lambda a.a_o$, so $(a\ a')$ steps by $\beta$.

\noindent \textbf{Case:}

\

$\infer{\Gamma\vdash 0:\nat}{\ }$

\

\noindent $0$ is a value as required.

\ 

\noindent \textbf{Case:}

\

$\infer{\Gamma\vdash \nil:\langle\vc\ \phi\ 0\rangle}{\ }$

\ 

\noindent $\nil$ is a value as required.

\ 

\noindent \textbf{Case:}

\

$\infer{\Gamma\vdash (S\ a):\nat}{\Gamma\vdash a:\nat  }$

\ 

\noindent By the IH for $\Gamma\vdash a:\nat$, we know $a$ either steps or is a value; accordingly $S a$ steps or is a value.

\ 

\noindent \textbf{Case:}

\

$\infer{\Gamma\vdash (R_\nat\ a\ a'\ a''):[a''/x]\phi}
      {\begin{array}{l}\Gamma \vdash a'' : \nat \\
       \Gamma \vdash a : [0/x]\phi \\
       \Gamma \vdash a' : \Pi y:\nat. \Pi u : [y/x]\phi. [(S y)/x]\phi
       \end{array}}$

\ 

\noindent By the IH for $\Gamma \vdash a'' : \nat$ and Lemma~\ref{lem:canonical}, we know that either $a''$ steps or $a'' = 0$ or $a'' = S\ v$. Then $(R_\nat\ a\ a'\ a'')$ steps by congruence or one of the two reduction rules, respectively.

\ 

\noindent \textbf{Case:}

\

$\infer{\Gamma\vdash (\cons\ a\ a'):\langle \vc\ \phi\ (S\ l)\rangle}
      {\begin{array}{l}\Gamma\vdash a:\phi \\ \Gamma \vdash a':\langle \vc\ \phi\ l\rangle
       \end{array}}$

\ 

\noindent By the IH, $a$ and $a'$ either step or are values; accordingly $(\cons\ a\ a')$ steps by congruence or is a value.

\ 

\noindent \textbf{Case:}

\

$\infer{\Gamma\vdash (R_\vc\ a\ a'\ a''):[l/y, a''/x]\phi}
      {\begin{array}{l}\Gamma \vdash a'' : \langle \vc\ \phi'\ l\rangle \\
       \Gamma \vdash a : [0/y,\nil/x]\phi \\
       \Gamma \vdash a' : \Pi z:\phi'. \forall l:\nat. \Pi v :\langle \vc\ \phi'\ l\rangle. \Pi u : [l/y, v/x]\phi. \\
        \ \ \ \ \ \ \ \ \ \ \ \ \  [(S\ l)/y, (\cons\ z\ v)/x]\phi
       \end{array}}$

\ 

\noindent By the IH for $\Gamma \vdash a'' : \langle \vc\ \phi'\ l\rangle$ and Lemma~\ref{lem:canonical}, we know that either $a''$ steps or $a'' = \nil$ or $a'' = (\cons\ v\ v')$. Then $(R_\vc\ a\ a'\ a'')$ steps by congruence or by one of the two reduction rules, respectively.

\subsection{Proof of Canonical Forms (Lemma~\ref{lem:canonical})}
\label{sec:canonical}

Write $\forall \vec{x}.\phi$ for $\forall x_1:\phi_1'. \forall x_2:\phi'_2. \dots \forall x_n:\phi_n'.\phi$.
\begin{lemma}[Canonical Forms]
\label{lem:canonical}
If $\textit{dom}(\Gamma)\cap FV(v) = \emptyset$, then
\begin{itemize}
\item if $\Gamma \vdash v : \forall \vec{x}. \nat$, then $v = 0$ or $v = S\ v$.
\item if $\Gamma \vdash v : \forall \vec{x}. \langle\vc\ \phi\ l\rangle$, then $v = \nil$ or $v = \cons\ v'\ v''$.
\item if $\Gamma \vdash v : \forall \vec{x}. \Pi x:\phi'.\phi$, then $v = \lambda x.a$
\item if $\Gamma \vdash v : \forall \vec{x}. a_1 = a_2$, then $v = \join$.
\end{itemize}
\end{lemma}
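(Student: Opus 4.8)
The plan is to prove all four statements simultaneously by induction on the derivation of $\Gamma\vdash v:\phi$, carrying throughout the hypotheses that $v$ is a value and $\textit{dom}(\Gamma)\cap FV(v)=\emptyset$, and where $\phi$ has one of the shapes $\forall\vec{x}.\nat$, $\forall\vec{x}.\langle\vc\ \psi\ l\rangle$, $\forall\vec{x}.\Pi y{:}\psi'.\psi$, or $\forall\vec{x}.(a_1=a_2)$ --- the shape being fixed by which of the four conclusions we are establishing. The one structural fact I would isolate first is that a term substitution touches only the term positions of a type and never its type structure, so that in each of these four cases ``being of the form $\forall\vec{x}.(\text{head})$'' for a fixed head constructor is preserved by whatever substitutions the typing rules perform. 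Given this, I would case on the last rule of the derivation.

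The rules \texttt{var}, \texttt{app}, \texttt{Rnat}, and \texttt{Rvec} cannot conclude the derivation, since their subjects (a variable, an application, a recursor) are not values; \texttt{var} is moreover ruled out by the hypothesis on $FV(v)$. The introduction rules are the base cases, and each forces exactly the expected shape: \texttt{zero}/\texttt{succ} give a $\nat$-type with $v\in\{0,(S\ v')\}$; \texttt{nil}/\texttt{cons} give a $\vc$-type with $v\in\{\nil,(\cons\ v'\ v'')\}$; \texttt{abs} gives a $\Pi$-type with $v=\lambda y.a$; \texttt{join} gives an $=$-type with $v=\join$. In each of these cases the produced type has a definite head constructor, so if it disagrees with the head fixed by the statement being proved the case is impossible, and if it agrees the conclusion is immediate (noting, e.g., that $\forall\vec{x}.\nat$ being literally $\nat$ forces $\vec{x}$ empty).

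The three remaining rules reduce to the induction hypothesis. In \texttt{conv} the subject $v$ is unchanged and the premise type differs from the conclusion only by a term substitution, so by the observation above it is again of the form $\forall\vec{x}.(\text{head})$ for the same head; in \texttt{spec-app} the premise type is $\forall x{:}\phi'.\phi_0$ where the conclusion is $[a'/x]\phi_0$, which has the same head with one extra leading $\forall$; in both cases the induction hypothesis applies, with unchanged side conditions. In \texttt{spec-abs} the conclusion $\forall y{:}\psi'.\psi$ has a leading $\forall$, so $\vec{x}$ is nonempty and the premise $\Gamma,y{:}\psi'\vdash v:\psi$ has the same head with one fewer $\forall$; here the induction hypothesis applies once we note that $\textit{dom}(\Gamma,y{:}\psi')\cap FV(v)=\emptyset$, which holds because $y\notin FV(v)$.

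I expect the only point requiring care to be this bookkeeping of the $\forall$-prefix in the \texttt{conv}, \texttt{spec-app}, and \texttt{spec-abs} cases: making precise that the class of types ``$\forall\vec{x}.(\text{fixed head})$'' is closed under the term substitutions of \texttt{conv} and \texttt{spec-app} and shrinks by exactly one leading $\forall$ under \texttt{spec-abs}. This is routine once the substitution observation is stated and checked against the grammar of types; in particular the whole argument uses no inversion beyond reading off the premises of the last rule, and so does not depend on the Simplifying Inversion lemma (Lemma~\ref{lem:simplinv}).
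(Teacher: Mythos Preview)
Your proposal is correct and follows essentially the same approach as the paper's own proof: induction on the typing derivation, ruling out \texttt{var}, \texttt{app}, \texttt{Rnat}, \texttt{Rvec} because their subjects are not values (and \texttt{var} also by the free-variable hypothesis), handling the introduction rules as base cases, and pushing through \texttt{conv}, \texttt{spec-abs}, \texttt{spec-app} by the induction hypothesis using the observation that term substitution preserves the head type-constructor. Your explicit articulation of the ``substitution preserves the $\forall\vec{x}.(\text{head})$ shape'' invariant is exactly what the paper leaves implicit in phrases like ``the same top-level structure'' and ``the required form.''
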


\noindent \textbf{Proof:}

Induction on the typing judgment. The cases are 
\begin{itemize}
\item \textbf{var} Impossible by $dom(\Gamma) \cap FV(v) = \emptyset$.
\item \textbf{conv}
\

$\infer{\Gamma\vdash a:[a''/x]\phi}{\Gamma \vdash a''': a' = a'' & \Gamma \vdash a:[a'/x]\phi & x\not\in\textit{dom}(\Gamma)}$

\ 

The types $[a''/x]\phi$ and $[a'/x]\phi$ have the same top-level structure, so the IH applies.

\item \textbf{spec-abs}
\

$\infer{\Gamma \vdash a : \forall x:\phi'.\phi}{\Gamma, x:\phi' \vdash a:\phi & x\not\in\textit{FV}(a)}$

\ 

The condition $x\not\in \textit{FV}(a)$ ensures that $dom(\Gamma, x:\phi') \cup FV(v) = \emptyset$. Also, the type $\phi$ still has the required form. So the IH applies.

\item \textbf{spec-app}
\

$\infer{\Gamma \vdash a : [a'/x]\phi}{\Gamma \vdash a:\forall x:\phi'.\phi & \Gamma \vdash a':\phi'}$

\ 

If the type $[a'/x]\phi$ has the required form, then $\forall x:\phi'.\phi$ has required form also, so the IH applies.

\item \textbf{join,abs,zero,nil,succ,cons} The value in the conclusion has the required form.
\item \textbf{app, Rnat, Rvec} The term in the conclusion is not a value.

\end{itemize}

\section{Proof of Critical Properties (Section~\ref{sec:critprop})}

\subsection{More basic notation}

We will define a term context to be a term $a^*$ with a designated free
variable $*$, which may be instantiated in a capture-avoiding way by a
term $a'$ using the notation $a^*[a']$.  

Also, if $S$ is a set of terms, then we will allow ourselves to write
a term which has $S$ inserted for the hole of some context $a^*$.  For
example, we may write $\lambda x. \{ x, (x\ x) \}$ (here $a^* =
\lambda x.*$).  The meaning of this notation is the set of terms
$\{a^*[a'] | a' \in S\}$.  So in the example: $\{ \lambda
x.x,\ \lambda x.(x\ x) \}$.  

Finally, if $a\in\SN$, we define $\nu(a)$ to be some bound on the
lengths of the reduction sequences from $a$.

\subsection{Preliminary observation}

We will not explicitly prove strong normalization or typability in the
cases for \textbf{R-Join} below.  This is because \textbf{R-Join}
assumes $\Gamma\vdash a':a_1 = a_2$ for some $a'$, so we will always
be able to show $\Gamma\vdash a:[a_2/x]\phi$ from $\Gamma\vdash
a:[a_1/x]\phi$ using \texttt{(conv)}.  Similarly, we will always have
$a\in\SN$, since it follows from $a\in\interp{[a_1/x]\phi}_\Gamma$.

\subsection{Critical properties for $\nat$}

\textbf{R-Pres} holds using Type Preservation
(Theorem~\ref{thm:tppres}), and the fact that $a\in\SN \To
\next(a)\subset\SN$.  For \textbf{R-Prog}, we have $\next(a)\subset\SN
\To a\in\SN$, and $\Gamma\vdash a:\nat$ by assumption (of
\textbf{R-Prog}). We have \textbf{R-Join} because all instances
$[a/x]\nat \equiv \nat$ have the same (trivial) interpretation.

\subsection{Critical properties for $\langle\vc\ \phi\ l\rangle$}

\subsubsection{Proof of \textbf{R-Pres}}

Suppose $a\in\interp{\langle\vc\ \phi\ l\rangle}_\Gamma$, and
$a\leadsto a'$.  We have $a'\in\SN$ from $a\in\SN$, and $\Gamma\vdash
a':\phi$ by Type Preservation.  To prove the second conjunct for
$\interp{\cdot}$ at $\vc$-type -- namely, $(a'\leadsto^* \nil\ \To\ l
\sim_\Gamma 0)$ -- assume $a'\leadsto^* \nil$.  From this and the
assumption that $a\leadsto a'$, we have $a\leadsto^* \nil$.  So we can
use the corresponding conjunct for $a$ to conclude $l\sim_\Gamma 0$,
as required.  Similar reasoning applies for the third conjunct.

\subsubsection{Proof of \textbf{R-Prog}}

Suppose $\next(a)\subset \interp{\langle\vc\ \phi\ l\rangle}_\Gamma$,
with $a$ neutral and $\Gamma\vdash a:\phi$.  We have $a\in\SN$ for the
same reason as for the $\nat$ case above.  It suffices now to show the
two conjuncts of the clause of $\interp{\cdot}$ for $\vc$-types, for
$a$.  For the first, assume $a\leadsto^*\nil$.  Now since $a$ is
neutral, we cannot have $a \equiv \nil$.  So consider arbitrary
$a'\in\next(a)$.  From $a\leadsto^*\nil$ and $a\leadsto a'$, we obtain
$a'\leadsto^*\nil$ by confluence.  We may then apply the corresponding
second conjunct for $a'$ to obtain the desired result for this
conjunct.  The second conjunct follows by similar reasoning.

\subsubsection{Proof of \textbf{R-Join}}

Suppose that $a_1\sim_\Gamma a_2$, and consider arbitrary
$a\in\interp{[a_1/x]\langle\vc\ \phi\ l\rangle}_\Gamma$.  We must show
$a\in\interp{[a_2/x]\langle\vc\ \phi\ l\rangle}_\Gamma$.  It suffices
to prove the two conjuncts for $\interp{\cdot}$ at the type involving
$a_2$, assuming them for the type involving $a_1$.  For the first
conjunct, assume $a\leadsto^*\nil$.  We must show that for an
arbitrary $\sigma\in\interp{\Gamma}$, we have $\sigma
([a_2/x]l)\downarrow 0$.  From the second conjunct for the $a_1$-type,
we have $\sigma ([a_1/x]l)\downarrow 0$.  Instantiate our first
assumption about $a_1$ and $a_2$, to obtain $\sigma a_1\downarrow
\sigma a_2$.  Now we use the fact that joinability is closed under
substitution of joinable terms (proof omitted), to obtain the desired
result.  Note that joinability is not closed under substitution of
joinable terms for more specialized reduction strategies, such as
call-by-value or call-by-name.

For the second conjunct, assume $a\leadsto^*(\cons\ a'\ a'')$.  From
the corresponding second conjunct for the $a_1$-type, we obtain the
following for some $l'$:
\begin{itemize}
\item $a'\in\interp{[a_1/x]\phi}_\Gamma$
\item $a''\in\interp{\langle\vc\ [a_1/x]\phi\ l'\rangle}_\Gamma$
\item $\forall \sigma\in\interp{\Gamma}.\ \sigma ([a_1/x]l) \downarrow (S\ \sigma l')$ 
\end{itemize}
\noindent We use \textbf{R-Join} on the first formula to derive the
similar statement involving $a_2$. The measure
$(|\Gamma|,d(\phi),l(a))$ decreases, because the depth of the type
decreases (and $|\Gamma|$ is unchanged).  Now let $z$ be a variable
not in $\textit{dom}(\Gamma)$ and not free in $\phi$, $a_1$, or $l'$.
Then the second formula is equivalent to
$a''\in\interp{[a_1/z]\langle\vc\ [z/x]\phi\ l'\rangle}_\Gamma$, and
by \textbf{R-Join} (where the measure decreases because the depth of
the type is the same, but the quantity given by $l(\cdot)$ has
decreased), we have
$a''\in\interp{[a_2/z]\langle\vc\ [z/x]\phi\ l'\rangle}_\Gamma$, which
is equivalent to the required
$a''\in\interp{\langle\vc\ [a_2/x]\phi\ l'\rangle}_\Gamma$.
Instantiating the third formula with an arbitrary
$\sigma\in\interp{\Gamma}$, we have $\sigma ([a_1/x]l) \downarrow
(S\ l')$.  We appeal as above to the closure of joinability under
substitution of joinable terms, to obtain $\sigma ([a_2/x]l)
\downarrow (S\ l')$.

\subsection{Critical properties for $\Pi x:\phi'.\phi$}

\subsubsection{Proof of \textbf{R-Pres}}

Assume $a\in\interp{\Pi x:\phi'.\phi}_\Gamma$, and consider an
arbitrary $a'\in\interp{\phi'}^+_\Gamma$.  By definition of
$\interp{\cdot}$, we have $(a\ a')\in\interp{[a'/x]\phi}_\Gamma$.  We
also have
\begin{equation}
(\next(a)\ a') \ \subset\ \next(a\ a') \label{eq7}
\end{equation}

\noindent By \textbf{R-Pres} at type $[a'/x]\phi$ (where we have
$d([a'/x]\phi) < d(\Pi x:\phi'.\phi)$, and so can apply the induction
hypothesis), we obtain
$(\next(a\ a'))\subset\interp{[a'/x]\phi}_\Gamma$.  By (\ref{eq7}),
this implies $(\next(a)\ a')\subset\interp{[a'/x]\phi}_\Gamma$.  We
conclude this for all $a'\in\interp{\phi'}^+_\Gamma$.  Then by the
definition of $\interp{\cdot}$, we obtain the desired
$\next(a)\subset\interp{\Pi x:\phi'.\phi}_\Gamma$, using also Type
Preservation and the fact that $\next(a)\subset\SN$ (since $a\in\SN$).

\subsubsection{Proof of \textbf{R-Prog}}

Suppose $a$ is neutral with $\Gamma\vdash a:\Pi x:\phi'.\phi$.  By
assumption, we have
\begin{equation}
\next(a) \subset \interp{\Pi x:\phi'.\phi}_\Gamma \label{eq8}
\end{equation}

\noindent It suffices, by the definition of $\interp{\cdot}$, to show
that $a\in\SN$ and for all $a'\in\interp{\phi'}^+_\Gamma$,
$(a\ a')\in\interp{[a'/x]\phi}_\Gamma$. We have $a\in\SN$ from
$\next(a)\subset\SN$, so we focus on the latter property.  Consider
arbitrary $a'\in\interp{\phi'}^+_\Gamma$.  Since $a$ is neutral,
$(a\ a')$ cannot be a $\beta$-redex.  Since
$a'\in\interp{\phi'}_\Gamma$, we have $a'\in\SN$ by \textbf{R-SN} at
type $\phi'$ (where $d(\phi') < d(\Pi x:\phi'.\phi)$, so the induction
hypothesis may be applied).  So we may reason by inner induction on
the number $\nu(a')$ to prove that for all
$a'\in\interp{\phi'}^+_\Gamma$, we have
$(a\ a')\in\interp{[a'/x]\phi}_\Gamma$.  By \textbf{R-Prog} at type
$[a'/x]\phi$ (where $d([a'/x]\phi)<d(\Pi x:\phi'.\phi)$, so the
induction hypothesis may be applied), it suffices to prove
$\next(a\ a')\subset\interp{[a'/x]\phi}_\Gamma$, since the term in
question is neutral and since we have $\Gamma \vdash
(a\ a'):[a'/x]\phi$.  The possibilities for reduction are summarized
by:
\begin{equation*}
\next(a\ a')\ \subset\ (\next(a)\ a')\ \cup\ (a\ \next(a'))
\end{equation*}

\noindent We have $(\next(a)\ a')\in\interp{[a'/x]\phi}_\Gamma$ from
(\ref{eq8}), by the definition of $\interp{\cdot}$.  For reducibility
of the second set, consider arbitrary $a''\in\next(a')$.  By our inner
induction hypothesis, which we may apply because
$a''\in\interp{\phi'}_\Gamma$ by \textbf{R-Pres} at type $\phi'$ (with
smaller depth), we have $(a\ a'')\in\interp{[a''/x]\phi}_\Gamma$.  Now
we may apply \textbf{R-Join} at type $\phi$ (with smaller depth),
using the obvious fact that $a' \leadsto a''$ implies the facts $a'
\sim_\Gamma a''$ and $\Gamma\vdash \join:a' = a''$ (required by
\textbf{R-Join}).  This yields
$(a\ a'')\in\interp{[a'/x]\phi}_\Gamma$, as required by our inner
induction.

\subsubsection{Proof of \textbf{R-Join}}

Suppose that $a_1\sim_\Gamma a_2$, and consider arbitrary
$a\in\interp{[a_1/x]\Pi y:\phi'.\phi}_\Gamma$.  We must show
$a\in\interp{[a_2/x]\Pi y:\phi'.\phi}_\Gamma$.  It suffices to show
$(a\ a')\in\interp{[a'/y][a_2/x]\phi}_\Gamma$ for an arbitrary
$a'\in\interp{[a_2/x]\phi'}^+_\Gamma$.  We now wish to use
\textbf{R-Join} at type $\phi'$ (with smaller depth), with the
symmetric equality $a_2\sim_\Gamma a_1$.  Symmetry of $\sim_\Gamma$ is
direct from its definition.

Using \textbf{R-Join} in this way with $a_2\sim_\Gamma a_1$, we obtain
$a'\in\interp{[a_1/x]\phi'}_\Gamma$.  We must further obtain
$a'\in\interp{[a_1/x]\phi'}^+_\Gamma$.  So consider arbitrary
$\sigma\in\interp{\Gamma}$.  From closability of $a'$ at the type
involving $a_2$, we have $\sigma a'\in\interp{\sigma ([a_2/x]\phi')}$.
We must show $\sigma a'\in\interp{\sigma ([a_1/x]\phi')}$.  If
$\Gamma$ is empty, this formula is equivalent to
$a'\in\interp{[a_1/x]\phi'}$, which we already have.  So suppose
$\Gamma$ is not empty.  Then the formula is equivalent to $\sigma
a'\in\interp{[\sigma a_1/x](\sigma \phi')}$, since
$x\not\in\textit{ran}(\sigma)$.  Notice that from our assumption that
$a_1\sim_\Gamma a_2$, we obtain $\sigma a_1 \sim \sigma a_2$.  We may
now use \textbf{R-Join}, where the length of the context has
decreased, to conclude $\sigma a'\in\interp{[\sigma a_1/x](\sigma
  \phi')}$ from $\sigma a'\in\interp{[\sigma a_2/x](\sigma \phi')}$.

Since we have obtained $a'\in\interp{[a_1/x]\phi'}^+_\Gamma$, we now get
$(a\ a')\in\interp{[a'/y][a_1/x]\phi}_\Gamma$ by the assumption above
of reducibility of $a$.  Applying Lemma~\ref{lem:vars} to the fact
that $\Gamma\vdash a':[a_1/x]\phi'$ (which we have from
$a'\in\interp{[a_1/x]\phi'}_\Gamma$), and using the assumption that
$x\not\in\textit{dom}(\Gamma)$, we obtain $x\not\in\textit{FV}(a')$.
Since $y$ is locally scoped, we may also assume that
$y\not\in\textit{FV}(a_1)$ and $y\not\in\textit{FV}(a_2)$. This tells
us that $[a'/y][a_1/x]\phi = [a_1/x][a'/y]\phi$ and also
$[a'/y][a_2/x]\phi = [a_2/x][a'/y]\phi$.  Using the first of these, we
may conclude $(a\ a')\in\interp{[a_1/x][a'/y]\phi}_\Gamma$ from the
similar fact we had just above.  Using the second of these
commutations of substitutions, and also \textbf{R-Join} at type
$[a'/y]\phi$ (of smaller depth), we can conclude
$(a\ a')\in\interp{[a'/y][a_2/x]\phi}_\Gamma$, as required.

\subsection{Critical properties for $\forall x:\phi'.\phi$}

The proofs here are simpler versions (particularly for
\textbf{R-Prog}) of those for the previous case.

\subsubsection{Proof of \textbf{R-Pres}}

Assume $a\in\interp{\forall x:\phi'.\phi}_\Gamma$, and consider an
arbitrary $a'\in\interp{\phi'}^+_\Gamma$.  By \textbf{R-Pres} at type
$[a'/x]\phi$ (with smaller depth), we obtain
$\next(a)\subset\interp{[a'/x]\phi}_\Gamma$.  We conclude this for all
$a'\in\interp{\phi'}^+_\Gamma$.  Then by the definition of
$\interp{\cdot}$, we get the required $\next(a)\subset\interp{\forall
  x:\phi'.\phi}_\Gamma$, using also Type Preservation and the fact
$\next(a)\subset\SN$ (from $a\in\SN$).

\subsubsection{Proof of \textbf{R-Prog}}

Suppose $a$ is neutral with $\Gamma\vdash a:\phi$.  By assumption, we have
\[
\next(a) \subset \interp{\forall x:\phi'.\phi}_\Gamma 
\]

\noindent It suffices, by the definition of $\interp{\cdot}$, to show
that $a\in\SN$ and for all $a'\in\interp{\phi'}^+_\Gamma$,
$a\in\interp{[a'/x]\phi}_\Gamma$. We have $a\in\SN$ from
$\next(a)\subset\SN$, so we focus on the latter property.  Consider
arbitrary $a'\in\interp{\phi'}^+_\Gamma$.  By the definition of
$\interp{\cdot}$ at $\forall$-type and our above assumption, we have
$\next(a)\subset\interp{[a'/x]\phi}_\Gamma$.  So by \textbf{R-Prog} at
type $[a'/x]\phi$ (with smaller depth), we have
$a\in\interp{[a'/x]\phi}_\Gamma$, as required.

\subsubsection{Proof of \textbf{R-Join}}

Suppose that $a_1\sim_\Gamma a_2$, and consider arbitrary
$a\in\interp{[a_1/x]\forall y:\phi'.\phi}_\Gamma$.  We must show
$a\in\interp{[a_2/x]\forall y:\phi'.\phi}_\Gamma$.  It suffices to
show $a\in\interp{[a'/y][a_2/x]\phi}_\Gamma$ for an arbitrary
$a'\in\interp{[a_2/x]\phi'}^+_\Gamma$.  By \textbf{R-Join} at type
$\phi'$ (with smaller depth), and using the symmetric version of our
assumption as above, we have $a'\in\interp{[a_1/x]\phi'}_\Gamma$.  We
further obtain $a'\in\interp{[a_1/x]\phi'}^+_\Gamma$ as in the case
for \textbf{R-Prog} for $\Pi$-types.  So we get
$a\in\interp{[a'/y][a_1/x]\phi}_\Gamma$ by the assumption of
reducibility of $a$.  By similar reasoning as above, we may permute
the substitutions in question.  So we may apply \textbf{R-Join} at
type $[a'/y]\phi$ (of smaller depth) to conclude
$a\in\interp{[a'/y][a_2/x]\phi}_\Gamma$, as required.

\subsection{Critical properties for $a_1 = a_2$}

\subsubsection{Proof of \textbf{R-Pres}}

Consider arbitrary $b$ with $a\leadsto b$.  We
have $b\in\SN$ from $a\in\SN$, and $\Gamma\vdash b: a_1 = a_2$ by Type
Preservation.  Now suppose $b\leadsto^* \join$.  Then we have
$a\leadsto^* \join$ and obtain $a_1\sim_\Gamma a_2$ from
$a\in\interp{a_1 = a_2}_\Gamma$.  

\subsubsection{Proof of \textbf{R-Prog}}

We have $a\in\SN$ from $\next(a)\subset\SN$ as in other cases above.
Suppose that $a\leadsto^* \join$.  Since $a$ is neutral, we cannot
have $a\equiv \join$.  So we must have $a\leadsto b$.  Then we get
$b\leadsto^* \join$ by confluence, and we can use the assumption that
$b\in\interp{a_1 = a_2}_\Gamma$ to obtain $a_1 \sim_\Gamma a_2$ as
required.

\subsubsection{Proof of \textbf{R-Join}}

Assume $a_1'\sim_\Gamma a_2'$, and assume $a\leadsto^*\join$.  Then we
have $[a_1'/x]a_1\sim_\Gamma [a_1'/x]a_2$ from $a\in\interp{[a_1'x]a_1
  = [a_1'/x]a_2}_\Gamma$.  Consider arbitrary
$\sigma\in\interp{\Gamma}$.  Instantiating our two assumptions of
joinability under all ground instances with this $\sigma$, we obtain:
\begin{itemize}
\item $\sigma a_1'\downarrow \sigma a_2'$
\item $(\sigma ([a_1'/x]a_1))\downarrow (\sigma ([a_1'/x]a_2))$
\end{itemize}
\noindent The desired result (namely, $(\sigma
([a_2'/x]a_1))\downarrow (\sigma ([a_2'/x]a_2))$) now follows from 
closure of joinability under substitution of joinable terms.

\section{Proof of Soundness (Theorem~\ref{thm:soundness})}

\subsection{The Closability Lemma}

We have carefully crafted our notions of closable terms and closable
substitutions to allow the following two lemmas to be proved.  The
first expresses the basic desired property of closable substitutions,
and the second shows that under the conditions of the Soundness
Theorem, the term $\sigma a$ is closable which Soundness tells us is
in the interpretation of $\sigma\phi$ with context $\Delta$.

\begin{lemma}[Composing Substitutions]
\label{lem:csubst}
Suppose $\sigma\in\interp{\Gamma}_{\Delta}$ and
$\sigma'\in\interp{\Delta}$.  Then
$\sigma'\circ\sigma\in\interp{\Gamma}$.
\end{lemma}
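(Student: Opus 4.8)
The plan is to induct on the structure of $\Gamma$ --- equivalently, on the derivation of $\sigma\in\interp{\Gamma}_{\Delta}$, since the two rules defining closable substitutions in Figure~\ref{fig:interp} are syntax-directed on $\Gamma$, with $\sigma'$ and $\Delta$ held fixed throughout. In the base case $\Gamma\equiv\cdot$ the only closable substitution is $\sigma=\emptyset$, so $\sigma'\circ\sigma=\emptyset\in\interp{\cdot}$ by the first rule.

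For the step case $\Gamma\equiv\Gamma_0,x:\phi$, inversion gives $\sigma\equiv\sigma_0\cup\{(x,a)\}$ with $\sigma_0\in\interp{\Gamma_0}_{\Delta}$ and $a\in\interp{\sigma_0\phi}^+_{\Delta}$. The induction hypothesis applied to $\sigma_0$ yields $\sigma'\circ\sigma_0\in\interp{\Gamma_0}$, and since $\sigma'\circ\sigma$ decomposes as $(\sigma'\circ\sigma_0)\cup\{(x,\sigma'a)\}$, it suffices by the second rule for closable substitutions to establish $\sigma'a\in\interp{(\sigma'\circ\sigma_0)\phi}^+$. This interpretation is taken over the empty context, so its $(\cdot)^+$ decoration is vacuous and the goal reduces to $\sigma'a\in\interp{(\sigma'\circ\sigma_0)\phi}$. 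I would then split on $|\Delta|$. If $|\Delta|>0$, unfold $a\in\interp{\sigma_0\phi}^+_{\Delta}$ and instantiate its quantified substitution with $\sigma'\in\interp{\Delta}$, obtaining $\sigma'a\in\interp{\sigma'(\sigma_0\phi)}$, and note that $\sigma'(\sigma_0\phi)$ is exactly $(\sigma'\circ\sigma_0)\phi$. If instead $\Delta\equiv\cdot$, then $\sigma'=\emptyset$, so $\sigma'a=a$ and $\sigma'\circ\sigma_0=\sigma_0$, and the first conjunct of $a\in\interp{\sigma_0\phi}^+$ is already the goal.

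The only real work is the substitution bookkeeping: checking that $\sigma'\circ\sigma$ indeed decomposes as $(\sigma'\circ\sigma_0)\cup\{(x,\sigma'a)\}$ and that $\sigma'(\sigma_0\phi)$ is literally $(\sigma'\circ\sigma_0)\phi$. Both are instances of the standard substitution-composition identity applied to $\phi$, with capture avoided by $\alpha$-renaming the bound variables of $\phi$ away from the (closed, resp.\ $\Delta$-scoped) ranges of $\sigma'$ and $\sigma_0$. No appeal to the critical properties \textbf{R-Pres}, \textbf{R-Prog}, or \textbf{R-Join} is needed, and the strong normalization and typing side conditions folded into membership in $\interp{\cdot}$ transfer for free, since they follow from $\sigma'a\in\interp{(\sigma'\circ\sigma_0)\phi}$.
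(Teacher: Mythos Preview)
Your proposal is correct and follows essentially the same approach as the paper: induction on the derivation of $\sigma\in\interp{\Gamma}_{\Delta}$, using closability of $a$ to obtain $\sigma'a\in\interp{\sigma'(\sigma_0\phi)}$ in the step case, then reapplying the rule. The paper's proof is terser and does not make the case split on $|\Delta|$ explicit (it just says ``by the definition of closability''), whereas you spell out that when $\Delta$ is empty the second conjunct of $(\cdot)^+$ is vacuous and $\sigma'$ is the identity; this extra care is harmless and arguably clearer.
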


\noindent The proof is by induction on the
structure of the derivation of
$\sigma\in\interp{\Gamma}_{\Delta}$.  The base case
holds trivially, noting that $\sigma'\circ\emptyset = \emptyset$.  For
the step case, we have
\[
\infer{\sigma''\cup\{(x,a)\}\in\interp{\Gamma',x:\phi}_{\Delta}}
      {a\in\interp{\sigma''\phi}^+_{\Delta} & \sigma'' \in\interp{\Gamma'}_{\Delta}}
\]
\noindent Now we obtain
$\sigma'(\sigma''a)\in\interp{\sigma'(\sigma''\phi)}$, by the
definition of closability of $a$.  This implies
$\sigma'(\sigma''a)\in\interp{\sigma'(\sigma''\phi)}^+$, since the
definitions of $\interp{\cdot}$ and $\interp{\cdot}^+$ coincide when
the context is empty.  By the induction hypothesis we have
$\sigma'\circ\sigma'' \in\interp{\Gamma'}_{\Delta}$.  So we may
reapply the rule to obtain the desired
$(\sigma'\circ\sigma'')\cup\{(x,\sigma'
a)\}\in\interp{\Gamma',x:\phi}$.

\begin{lemma}[Closability]
\label{lem:close} Suppose the following main assumption is true: for any $\Delta\,\textit{Ok}$
and $\sigma\in\interp{\Gamma}_{\Delta}$, we have $(\sigma
a)\in\interp{\sigma \phi}_{\Delta}$.  In this case, for any such
$\Delta$ and $\sigma$, we also have $(\sigma a)\in\interp{\sigma
  \phi}^+_{\Delta}$.
\end{lemma}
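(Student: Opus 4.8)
The plan is to unfold the definition of $\interp{\cdot}^+$ and reduce everything to two applications of the main assumption, the second one taken with an \emph{empty} surrounding context. So I would fix $\Delta\,\textit{Ok}$ and $\sigma\in\interp{\Gamma}_{\Delta}$, and observe that, by definition, $(\sigma a)\in\interp{\sigma\phi}^+_{\Delta}$ is the conjunction of (i) $(\sigma a)\in\interp{\sigma\phi}_{\Delta}$, and (ii) if $|\Delta|>0$ then $\sigma'(\sigma a)\in\interp{\sigma'(\sigma\phi)}$ for every $\sigma'\in\interp{\Delta}$. Conjunct (i) is immediate: it is exactly what the main assumption gives for this $\Delta$ and $\sigma$.

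For conjunct (ii), I would assume $|\Delta|>0$, fix $\sigma'\in\interp{\Delta}$, and apply the Composing Substitutions Lemma (Lemma~\ref{lem:csubst}): from $\sigma\in\interp{\Gamma}_{\Delta}$ and $\sigma'\in\interp{\Delta}$ we get $\sigma'\circ\sigma\in\interp{\Gamma}$, i.e.\ $\sigma'\circ\sigma\in\interp{\Gamma}_{\cdot}$. Since the empty context is trivially $\textit{Ok}$, I can now instantiate the main assumption a second time, with $\Delta$ taken to be the empty context and with substitution $\sigma'\circ\sigma$, obtaining $(\sigma'\circ\sigma)\,a\in\interp{(\sigma'\circ\sigma)\,\phi}_{\cdot}$. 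Because substitution composition acts pointwise on terms and on the terms appearing in types, $(\sigma'\circ\sigma)\,a = \sigma'(\sigma a)$ and $(\sigma'\circ\sigma)\,\phi = \sigma'(\sigma\phi)$, so this is precisely $\sigma'(\sigma a)\in\interp{\sigma'(\sigma\phi)}$, as required. Both conjuncts then hold, giving $(\sigma a)\in\interp{\sigma\phi}^+_{\Delta}$. (When $|\Delta|=0$ the second step is unnecessary, since $\interp{\cdot}^+_{\Delta}$ and $\interp{\cdot}_{\Delta}$ coincide on the empty context, as already noted in the proof of Lemma~\ref{lem:csubst}.)

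The single step worth pausing over — and the only place where anything subtle happens — is recognizing that the main assumption is universally quantified over \emph{all} $\Delta\,\textit{Ok}$, so it legitimately applies again with $\Delta$ empty; the rest is bookkeeping about how composition interacts with the pointwise action of substitutions. In that sense there is no genuine obstacle: the real work has already been absorbed into the careful formulation of $\interp{\cdot}^+$ and into the Composing Substitutions Lemma, and this lemma is just the payoff that turns ``$\sigma a$ lies in the interpretation for every closable $\sigma$'' into ``$\sigma a$ is itself closable.''
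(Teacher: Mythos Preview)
Your proposal is correct and follows essentially the same approach as the paper's own proof: unfold the definition of $\interp{\cdot}^+$, obtain the first conjunct directly from the main assumption, and for the second conjunct use Composing Substitutions (Lemma~\ref{lem:csubst}) to get $\sigma'\circ\sigma\in\interp{\Gamma}$ and then re-instantiate the main assumption with the empty context. The paper's proof is terser (it skips straight to the closability conjunct), but the argument is the same.
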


\noindent Assume an arbitrary $\sigma'\in\interp{\Delta}$.  We must
show $\sigma'(\sigma a)\in\interp{\sigma'(\sigma\phi)}$.  By Composing
Substitutions (Lemma~\ref{lem:csubst}), we have
$\sigma'\circ\sigma\in\interp{\Gamma}$.  So we may instantiate the
main assumption with $\sigma'\circ\sigma$ to obtain the the required
formula.

\subsection{The Proof}

The proof of the Soundness Theorem is by induction on the structure of
the assumed typing derivation.  We consider all cases, and implicitly
start each by assuming an arbitrary $\sigma\in\interp{\Gamma}_\Delta$.
We often will use this $\sigma$ to instantiate universal formulas
obtained by application of our induction hypothesis, without
explicitly noting that we are instantiating the induction hypothesis.
If $\sigma$ is a substitution, we will write $\sigma[a'/x]$ for the
substitution that extends $\sigma$ by mapping $x$ to $a'$.

\ 

\noindent \textbf{Case:}

\

$\infer{\Gamma\vdash x:\phi}{\Gamma(x) \equiv \phi}$

\ 

\noindent We prove $\sigma x\in\interp{\sigma \Gamma(x)}^+_{\Delta}$ by
inner induction on the structure of
$\sigma\in\interp{\Gamma}_{\Delta}$.  The base case cannot arise,
since we have $\Gamma(x)$ defined.  For the step case, we have:
\[
\infer{\sigma'\cup\{(y,a)\}\in\interp{\Gamma',y:\phi}_{\Delta}}
      {a\in\interp{\sigma' \phi}^+_{\Delta} & 
      \sigma' \in\interp{\Gamma'}_{\Delta}}
\]
\noindent If $x\equiv y$, then we have $\sigma x\in\interp{\sigma'
  \Gamma(x)}_{\Delta}$ from the first premise.  We just need to show
$\sigma \Gamma(x) \equiv \sigma' \Gamma(x)$.  But this follows from
the fact that $x\not\in\textit{FV}(\phi)$ (by $\Gamma\ \textit{Ok}$).
If $x\not\equiv y$, then by the inner induction hypothesis we have
$\sigma' x\in\interp{\sigma' \Gamma'(x)}_{\Delta}$.  We must show that
this implies the desired $\sigma x\in\interp{\sigma
  \Gamma(x)}_{\Delta}$.  We certainly have $\sigma' x \equiv \sigma
x$, and $\Gamma'(x)\equiv\Gamma(x)$.  So it suffices to show that
$\sigma' \Gamma(x)\equiv \sigma\Gamma(x)$.  But $\Gamma(x)$ cannot
contain $x$, so this holds.

\ 

\noindent \textbf{Case:}

\

$\infer{\Gamma\vdash \join : a = a'}{a \downarrow a'}$

\ 

\noindent If $a\downarrow a'$, we certainly also have $\sigma a
\downarrow \sigma a'$, since joinability is closed under substitution.
This gives us $\Delta\vdash \join : \sigma a = \sigma a'$.  Again by
closure of joinability under substitution, we have $\sigma a
\sim_\Delta \sigma a$, since for any $\sigma'\in\interp{\Delta}$, we
certainly have $\sigma'(\sigma a)\downarrow \sigma'(\sigma a')$.  We
obviously have $\join\in\SN$, so we conclude
$\join\in\interp{\sigma(a_1 = a_2)}_{\Delta}$.

\ 

\noindent \textbf{Case:}

\

$\infer{\Gamma\vdash a:[a''/x]\phi}{\Gamma \vdash a''': a' = a'' & \Gamma \vdash a:[a'/x]\phi & x\not\in\textit{dom}(\Gamma)}$

\ 

\noindent The required conclusion follows by \textbf{R-Join} from
$\sigma a\in\interp{[\sigma a'/x](\sigma\phi)}_\Delta$, which we have
from the induction hypothesis for the second premise.  To enable this
use of \textbf{R-Join}, we need $\Delta\vdash \sigma a''':\sigma a' =
\sigma a''$ and $\sigma a'\sim_\Delta \sigma a''$.  The former we
obtain from $\sigma a'''\in\interp{\sigma a' = \sigma a''}_\Delta$,
which we have by the induction hypothesis for the first premise.  The
latter we obtain as follows.  Consider an arbitrary
$\sigma'\in\interp{\Delta}$.  From this and the fact that
$\sigma\in\interp{\Gamma}_\Delta$, we have
$\sigma'\circ\sigma\in\interp{\Gamma}$ by Composing Substitutions
(Lemma~\ref{lem:csubst}).  

Since $\sigma'\circ\sigma\in\interp{\Gamma}$, we can use it to
instantiate the induction hypothesis for the first premise.  This
gives us $\sigma' (\sigma a''')\in\interp{\sigma'(\sigma a') =
  \sigma'(\sigma a'')}$, which implies $\cdot\vdash\sigma'(\sigma
a'''):\sigma'(\sigma a') = \sigma'(\sigma a'')$.  So consider the
unique normal form $n$ of $\sigma' (\sigma a''')$, which exists by
confluence and \textbf{R-SN}.  We have $n\in\interp{\sigma'(\sigma a')
  = \sigma'(\sigma a'')}$ by repeated application of \textbf{R-Pres}.
This implies $n:\sigma'(\sigma a') = \sigma'(\sigma a'')$.  By
Progress, this $n$ must be a value.  We may now apply Canonical Forms
(Lemma~\ref{lem:canonical}), to conclude that $n \equiv \join$.  Now
by the definition of $\interp{\sigma'(\sigma a') = \sigma'(\sigma
  a'')}$, we have $\sigma'(\sigma a') \downarrow \sigma'(\sigma a'')$,
as required.  We assumed an arbitrary $\sigma'\in\interp{\Delta}$, so
we may conclude that $\sigma'(\sigma a') \downarrow \sigma'(\sigma
a'')$ holds for all such $\sigma'$.  This is sufficient for $\sigma a'
\sim_\Delta \sigma a''$.

\ 

\noindent \textbf{Case:}

\

$\infer{\Gamma \vdash a : \forall x:\phi'.\phi}
       {\Gamma, x:\phi' \vdash a:\phi & x\not\in\textit{FV}(a)}$

\ 

\noindent From the
induction hypothesis, we infer the following, for any
$\sigma'\in\interp{\Gamma,x:\phi'}_{\Delta'}$, for any
$\Delta'\subset\sigma (\Gamma,x:\phi')$:
\begin{equation}
\label{eqlama}
\forall a'\in\interp{\sigma \phi'}^+_{\Delta}.\ (\sigma[a'/x]) a\in\interp{(\sigma[a'/x])\phi}_{\Delta}
\end{equation}

\noindent Our first need is to prove $a\in\SN$.  For this, we
instantiate (\ref{eqlama}) with $\Delta'\equiv \Delta,x:\sigma\phi'$;
$\sigma'\equiv\sigma[x/x]$; and $a'\equiv x$.  Note that it is at
precisely this point that we critically need open substitutions in the
statement of Soundness.  To show that this instantiation is legal, we
must, of course, prove that $\sigma'\in\interp{\Gamma}_{\Delta'}$.
For this, we need two things.  First, we need to know that
$x\in\interp{\sigma \phi'}^+_{\Delta,x:\sigma\phi'}$.  This follows
because $x\in\interp{\sigma \phi'}_{\Delta,x:\sigma\phi'}$ by
\textbf{R-Prog} (since $\next(a) = \emptyset$); and further, for any
$\sigma'\in\interp{\Delta,x:\sigma\phi'}$, we derive from that same
fact the formula $\sigma' x\in\interp{\sigma'(\sigma\phi')}$, which we
require for closability of $x$.  Now we use the following lemma (proof
in Section~\ref{sec:weaksubst} below) to finish our proof of the
intermediate fact $\sigma'\in\interp{\Gamma,x:\phi'}_{\Delta'}$:

\begin{lemma}[Weakening Substitutions]
\label{lem:weaksubst}
If $\sigma\in\interp{\Gamma}_{\Delta}$ and
$\Delta,y:\phi'\,\textit{Ok}$, then
$\sigma\in\interp{\Gamma}_{\Delta,y:\phi'}$.
\end{lemma}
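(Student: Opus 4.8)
The plan is to induct on the derivation of $\sigma\in\interp{\Gamma}_{\Delta}$. The base case is immediate: the first rule gives $\emptyset\in\interp{\cdot}_{\Delta,y:\phi'}$. In the step case the derivation ends with
\[
\infer{\sigma_0\cup\{(x,a)\}\in\interp{\Gamma_0,x:\phi}_{\Delta}}
      {a\in\interp{\sigma_0\phi}^+_{\Delta} & \sigma_0 \in\interp{\Gamma_0}_{\Delta}}
\]
so $\Gamma\equiv\Gamma_0,x:\phi$ and $\sigma\equiv\sigma_0\cup\{(x,a)\}$. The induction hypothesis gives $\sigma_0\in\interp{\Gamma_0}_{\Delta,y:\phi'}$, so re-applying the rule reduces the goal to $a\in\interp{\sigma_0\phi}^+_{\Delta,y:\phi'}$. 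I would first record the freshness facts $y\notin\textit{FV}(\sigma_0\phi)$ and $y\notin\textit{FV}(a)$: these follow because $\textit{dom}(\sigma_0)=\textit{dom}(\Gamma_0)$, $\Gamma_0,x:\phi\,\textit{Ok}$ gives $\textit{FV}(\phi)\subseteq\textit{dom}(\Gamma_0)$, the range of $\sigma_0$ is typable in $\Delta$, and Lemma~\ref{lem:vars}.

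It thus suffices to prove an auxiliary claim: if $a\in\interp{\psi}^+_{\Delta}$, $\Delta,y:\phi'\,\textit{Ok}$, and $y\notin\textit{FV}(a)\cup\textit{FV}(\psi)$, then $a\in\interp{\psi}^+_{\Delta,y:\phi'}$; I would prove this by induction on $(d(\psi),l(a))$, the measure already used in Section~\ref{sec:critprop}. The key observation is that every closing substitution $\tau$ for $\Delta,y:\phi'$ has the form $\tau_0\cup\{(y,b)\}$ with $\tau_0\in\interp{\Delta}$, and $\tau c=\tau_0 c$ whenever $y\notin\textit{FV}(c)$. From this: the closability conjunct of $\interp{\cdot}^+$ transfers (restrict $\tau$ to $\tau_0$ and use the $\Delta$-side conjunct; the case $|\Delta|=0$ is trivial since $a$ and $\psi$ are then closed); joinability under ground instances is monotone in the needed direction, $b_1\sim_{\Delta}b_2\Rightarrow b_1\sim_{\Delta,y:\phi'}b_2$ for $y$ fresh for $b_1,b_2$, which discharges the positive occurrences of $\sim$ in the $=$- and $\vc$-clauses; and Weakening (Lemma~\ref{lem:weak}) supplies $\Delta,y:\phi'\vdash a:\psi$, with $a\in\SN$ untouched. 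The $\nat$-, $=$- and $\vc$-clauses then go through directly, the recursive appeals being at strictly smaller depth or $l(\cdot)$ and the freshness side-conditions being preserved (the witness $l'$ in the $\vc$-clause may be chosen fresh for $y$).

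The step I expect to be the main obstacle is the $\Pi$- and $\forall$-clauses, where the domain type occurs negatively: to show $a\in\interp{\Pi x_1:\phi_1.\phi_2}_{\Delta,y:\phi'}$ one must handle an arbitrary $a'\in\interp{\phi_1}^+_{\Delta,y:\phi'}$, and such an $a'$ may genuinely contain $y$ (e.g.\ $a'\equiv y$ when both $\phi'$ and $\phi_1$ are $\nat$), so it need not lie in $\interp{\phi_1}^+_{\Delta}$ and the hypothesis on $a$ does not apply to it. My plan is: since $(a\ a')$ is neutral and $\Delta,y:\phi'\vdash(a\ a'):[a'/x_1]\phi_2$, conclude $(a\ a')\in\interp{[a'/x_1]\phi_2}_{\Delta,y:\phi'}$ by \textbf{R-Prog} at smaller depth, the reducts being covered by $\next(a\ a')\subseteq(\next(a)\ a')\cup(a\ \next(a'))$ together with \textbf{R-Pres}, the inductive hypothesis, and \textbf{R-Join} (using $a'\leadsto a''\Rightarrow a'\sim_{\Delta,y:\phi'}a''$), just as in the \textbf{R-Prog} proof for $\Pi$-types in the appendix. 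The genuinely delicate point is the $\lambda$-headed case $a\equiv\lambda x_1.b$, where $(a\ a')$ is a $\beta$-redex rather than neutral, so \textbf{R-Prog} does not apply; here I would either run a nested induction on $\nu(a)$ or prove an auxiliary lemma that a function in $\interp{\Pi x_1:\phi_1.\phi_2}_{\Delta}$ is well-behaved on a fresh variable, and then analyze $[a'/x_1]b$ directly, propagating the freshness of $y$ through $b$ and $[a'/x_1]\phi_2$ using $x_1\notin\textit{FV}(a')$ (which follows from $\Delta,y:\phi'\vdash a':\phi_1$ and $x_1\notin\textit{dom}(\Delta,y:\phi')$).
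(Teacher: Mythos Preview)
Your outer structure---induct on the derivation of $\sigma\in\interp{\Gamma}_\Delta$ and reduce to an auxiliary ``Weakening for Closable Terms'' claim---matches the paper exactly. The gap is in your proof of that auxiliary claim at $\Pi$- and $\forall$-types, and you have correctly located it but not resolved it.

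First, a correction: in this paper every application is neutral by definition, so $((\lambda x_1.b)\ a')$ \emph{is} neutral and \textbf{R-Prog} does apply. The obstruction is not that \textbf{R-Prog} fails, but that $\next(a\ a')$ now contains the $\beta$-reduct $[a'/x_1]b$, which is not covered by $(\next(a)\ a')\cup(a\ \next(a'))$. Your analogy with the \textbf{R-Prog} proof for $\Pi$-types breaks precisely here: there $a$ was assumed neutral, hence not a $\lambda$, so no $\beta$-reduct arose. Even setting the $\beta$-reduct aside, your treatment of $(\next(a)\ a')$ is unclear: you invoke ``the inductive hypothesis,'' but neither $d(\psi)$ nor $l(a)$ has decreased, and you have no assumption on $\next(a)$ analogous to the one driving the \textbf{R-Prog} argument. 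For the $\beta$-reduct $[a'/x_1]b$ itself, you would need to know that $b$ behaves well when an argument containing the fresh variable $y$ is substituted in---but all you know about $b$ comes from $a\in\interp{\Pi x_1:\phi_1.\phi_2}^+_\Delta$, which only speaks about arguments in $\interp{\phi_1}^+_\Delta$, i.e.\ arguments typable in $\Delta$ and hence free of $y$. Your proposed ``auxiliary lemma that a function \ldots\ is well-behaved on a fresh variable'' would itself require weakening the interpretation to a larger context, which is the very thing under proof.

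The paper's device for escaping this contravariance trap is to strengthen the auxiliary claim to an \emph{iff}: it proves a Weakening--Strengthening lemma stating that, given $a\in\interp{\phi}^+_\Delta$, one has $a\in\interp{\phi}^+_{\Delta,y:\phi'}$ if and only if $[a'/y]a\in\interp{[a'/y]\phi}^+_\Delta$ for every $a'\in\interp{\phi'}^+_\Delta$. The equivalence is proved by induction on $d(\phi)$; at the domain of a $\Pi$- or $\forall$-type one simply invokes the \emph{other} direction of the induction hypothesis at smaller depth, which exactly absorbs the polarity flip. Weakening for Closable Terms is then the trivial instance of the backward direction where $y\notin\textit{FV}(a)\cup\textit{FV}(\phi)$, so that every $[a'/y]$ is the identity and the right-hand side reduces to the hypothesis $a\in\interp{\phi}^+_\Delta$. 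This bidirectional strengthening is the idea you are missing.
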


\noindent Using this intermediate fact
$\sigma'\in\interp{\Gamma,x:\phi'}_{\Delta'}$, we may indeed
instantiate (\ref{eqlam}) above with $\sigma'$, $\Delta'$ and $x$ for
$a'$, as mentioned.  This gives us $\sigma a\in\interp{\sigma
  \phi}_{\Delta'}$. By \textbf{R-SN}, we then obtain $\sigma
a\in\interp{\sigma\phi}_{\Delta,x:\sigma\phi'}$.  From this, we obtain
$\sigma a\in\SN$ and $\Delta\vdash \lambda x.a : \Pi x:\sigma
\phi'.\sigma \phi$, which we need to show $\sigma a\in\interp{\Pi
  x:\sigma\phi'.\sigma\phi}_\Delta$.

To complete this case, it suffices to consider arbitrary
$a'\in\interp{\sigma\phi'}^+_{\Delta}$, and show $\sigma
a\in\interp{[a'/x]\sigma\phi}_{\Delta}$.  Instantiating (\ref{eqlama})
with $a'$, we obtain $(\sigma[a'/x])
a\in\interp{(\sigma[a'/x])\phi}_{\Delta}$.  This is equivalent to the
goal, thanks to the following facts about the substitutions in
question.  Since $x\not\in\textit{FV}(a)$, we have $(\sigma[a'/x])a
\equiv \sigma a$.  Also, we have $x\not\in\textit{ran}(\sigma)$ by the
following lemma.  So we get the desired $\sigma
a\in\interp{[a'/x]\sigma\phi}_{\Delta}$.

\begin{lemma}[Basic Property of Substitutions]
\label{lem:substvars}
$\sigma\in\interp{\Gamma}_{\Delta}\ \land \Gamma \mathit{Ok}\ \To\ \sigma(x)\in \interp{\sigma \Gamma(x)}^+_{\Delta}$
\end{lemma}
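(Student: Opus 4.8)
The plan is to prove the lemma by induction on the structure of the derivation of $\sigma\in\interp{\Gamma}_{\Delta}$, following the same pattern as the \texttt{var} case of Theorem~\ref{thm:soundness}. First I would observe that the statement is only meaningful when $x\in\textit{dom}(\Gamma)$, since $\Gamma(x)$ and $\sigma(x)$ are otherwise undefined; in particular the base case, where $\Gamma\equiv\cdot$, holds vacuously.

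For the inductive step the derivation ends with the rule concluding $\sigma'\cup\{(y,a)\}\in\interp{\Gamma',y:\phi}_{\Delta}$ from premises $a\in\interp{\sigma'\phi}^+_{\Delta}$ and $\sigma'\in\interp{\Gamma'}_{\Delta}$ (I rename bound names to $y$, $\sigma'$, $\Gamma'$ to avoid clash with the lemma's $x$, $\sigma$, $\Gamma$), so that $\Gamma\equiv\Gamma',y:\phi$ and $\sigma\equiv\sigma'\cup\{(y,a)\}$; moreover $\Gamma\,\mathit{Ok}$ entails both $\Gamma'\,\mathit{Ok}$ and $\textit{FV}(\phi)\subseteq\textit{dom}(\Gamma')$. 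I would then split on whether the variable $x$ in question equals $y$. If $x\equiv y$, then $\sigma(x)\equiv a$ and $\Gamma(x)\equiv\phi$, and it remains only to note that $\sigma\phi\equiv\sigma'\phi$ because $y\notin\textit{FV}(\phi)$ by $\Gamma\,\mathit{Ok}$; hence the first premise already gives the goal $a\in\interp{\sigma\phi}^+_{\Delta}$. If $x\not\equiv y$, then $\sigma(x)\equiv\sigma'(x)$ and $\Gamma(x)\equiv\Gamma'(x)$, the induction hypothesis applied to $\sigma'\in\interp{\Gamma'}_{\Delta}$ (with $\Gamma'\,\mathit{Ok}$) yields $\sigma'(x)\in\interp{\sigma'\Gamma'(x)}^+_{\Delta}$, and $\Gamma\,\mathit{Ok}$ guarantees that the free variables of $\Gamma'(x)$ all precede $x$ in $\Gamma$, hence precede $y$, so $y\notin\textit{FV}(\Gamma'(x))$ and therefore $\sigma\Gamma(x)\equiv\sigma'\Gamma'(x)$, which closes the case.

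I do not expect a real obstacle: the lemma is essentially bookkeeping, and, unlike most of the critical-property and soundness reasoning, it needs no recursion on type depth and no appeal to \textbf{R-Pres}, \textbf{R-Prog}, or \textbf{R-Join}. The one point that requires care is the disciplined use of the $\mathit{Ok}$ invariant to ensure that the last-bound variable $y$ does not occur free in the type being substituted into --- the type $\phi$ in the subcase $x\equiv y$, and the earlier type $\Gamma'(x)$ in the subcase $x\not\equiv y$ --- which is exactly what makes $\sigma$ and $\sigma'$ agree on that type and lets the induction hypothesis transfer.
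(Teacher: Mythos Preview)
Your proposal is correct and matches the paper's approach: the paper's own proof of this lemma is the single line ``by induction on the structure of the assumed derivation,'' and the detailed case analysis you give is exactly the one the paper spells out in the \texttt{var} case of Theorem~\ref{thm:soundness} (which you explicitly cite as your model). Your handling of the $\mathit{Ok}$ invariant to rule out $y$ from the relevant types is the right observation, and is in fact slightly more carefully stated than the paper's corresponding passage.
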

\noindent The proof is by induction on the structure of the assumed
derivation.

\ 

\noindent \textbf{Case:}

\

$\infer{\Gamma \vdash a : [a'/x]\phi}{\Gamma \vdash a:\forall x:\phi'.\phi & \Gamma \vdash a':\phi'}$

\ 

\noindent This follows immediately from induction hypothesis, and the
definition of $\interp{\cdot}$ for $\forall$-types (here, the type
$\forall x:\sigma \phi'.\sigma \phi$), using the fact that various
substitutions involved commute, as in cases above.  We critically use
Closability (Lemma~\ref{lem:close}), to get $\sigma
a'\in\interp{\sigma\phi'}^+_\Delta$ from $\sigma
a'\in\interp{\sigma\phi'}_\Delta$.

\ 

\noindent \textbf{Case:}

\

$\infer{\Gamma \vdash \lambda x.a : \Pi x:\phi'.\phi}
      {\Gamma, x:\phi' \vdash a:\phi}$

\ 

\noindent We begin just as for the \texttt{(spec-abs)} case.  From the
induction hypothesis, we infer the following, for any
$\sigma'\in\interp{\Gamma,x:\phi'}_{\Delta'}$, for any
$\Delta'\subset\sigma (\Gamma,x:\phi')$:
\begin{equation}
\label{eqlam}
\forall a'\in\interp{\sigma' \phi'}^+_{\Delta'}.\ (\sigma'[a'/x]) a\in\interp{(\sigma'[a'/x])\phi}_{\Delta'}
\end{equation}
\noindent By the same reasoning as for the \texttt{(spec-abs)} case,
we obtain $\sigma a\in\SN$ and $\Delta\vdash \lambda x.a : \Pi
x:\sigma \phi'.\sigma \phi$.

Now by the definition of $\interp{\cdot}$, it suffices to prove that
for all $a'\in\interp{\sigma\phi'}^+_{\Delta}$, we have $((\lambda
x.(\sigma a))\ a') \in \interp{[a'/x](\sigma \phi)}_{\Delta}$.  We
prove that (\ref{eqlam}) implies this, by inner induction on
$\nu(\sigma a)+\nu(a')$, which is defined by \textbf{R-SN} (for
$\sigma a$ and $a'$).  By \textbf{R-Prog}, it suffices to prove
$\next((\lambda x.(\sigma a))\ a') \subset
\interp{[a'/x](\sigma\phi)}_{\Delta}$, since the term in question
(i.e., $((\lambda x.(\sigma a))\ a')$) is neutral and appropriately
typable since $\sigma a$ is.  In more detail, since we have $\sigma
a\in\interp{\sigma \phi}_{\Delta,x:\sigma\phi'}$, we obtain
$\Delta,x:\sigma\phi'\vdash \sigma a:\sigma \phi$.  Then we apply the
typing rule for $\lambda$-abstractions to obtain $\Delta\vdash \lambda
x.\sigma a:\Pi x:\sigma\phi'.\sigma \phi$, and we conclude the typing
proof with the application rule on this fact and $\Delta\vdash
a':\sigma\phi'$.

Now the possibilities for reduction of the term in question are
summarized by:
\[
\next((\lambda x. \sigma a)\ a') \ \subset\ 
((\lambda x. \next(\sigma a))\ a')\ \cup \ 
((\lambda x. \sigma a)\ \next(a'))\ \cup \ 
\{ [a'/x]\sigma a \}
\]

\noindent We have $((\lambda x.\next(\sigma
a))\ a')\subset\interp{[a'/x](\sigma\phi)}_\Delta$ by the inner
induction hypothesis, using \textbf{R-Pres} to conclude $\next(\sigma
a)\subset\interp{\sigma\phi'}_{\Delta}$.  For the set $((\lambda
x. \sigma a)\ \next(a'))$, we use the inner induction hypothesis to
conclude that for all $a''\in\next(a')$, we have $((\lambda x. \sigma
a)\ a'')\in\interp{[a''/x](\sigma \phi)}_{\Delta}$.  Applying the
induction hypothesis here requires the fact that
$a''\in\interp{\sigma\phi'}_\Delta$, which follows by \textbf{R-Pres}.
Now we may apply \textbf{R-Join} with $a' \sim_\Delta a''$ and
$\Delta\vdash\join:a' = a''$ (which follow from $a'\leadsto a''$), to
obtain $((\lambda x. \sigma a)\ a'')\in\interp{[a'/x](\sigma
  \phi)}_{\Delta}$.  This implies $((\lambda x. \sigma
a)\ \next(a'))\subset\interp{[a'/x](\sigma \phi)}_{\Delta}$, as
required.

Finally, we wish to conclude $[a'/x]\sigma a\in\interp{[a'/x](\sigma
  \phi')}_{\Delta}$ by instantiating (\ref{eqlam}) above with
$\Delta'\equiv \Delta$; $\sigma'\equiv\sigma$; and $a'\equiv a'$. We
have the required $a'\in\interp{\sigma\phi'}^+_\Delta$, of course.
But we also need the fact that $(\sigma[a'/x]) \phi' = [a'/x](\sigma
\phi')$.  This holds because $x\not\in\textit{ran}(\sigma)$ (by
Lemma~\ref{lem:substvars}, as in an earlier case).  So we conclude the
desired $[a'/x]\sigma a\in\interp{[a'/x](\sigma \phi')}_{\Delta}$.

\ 

\noindent \textbf{Case:}

\

$\infer{\Gamma \vdash (a\ a') : [a'/x]\phi}{\Gamma \vdash a:\Pi x:\phi'.\phi & \Gamma\vdash a':\phi'}$

\ 

\noindent By the induction hypothesis, we have $\sigma a\in\interp{\Pi
  x:\sigma\phi'.\sigma\phi}_\Delta$ and $\sigma
a'\in\interp{\sigma\phi'}_\Delta$.  By Closability
(Lemma~\ref{lem:close}), we then get $\sigma
a'\in\interp{\sigma\phi'}^+_\Delta$.  Then using the definition of
$\interp{\cdot}$ at $\Pi$-type, we directly obtain $((\sigma
a)\ (\sigma a'))\in\interp{[\sigma a'/x]\sigma\phi}_\Gamma$.  Since
$x\not\in\textit{ran}(\sigma)$ (by Lemma~\ref{lem:substvars}), we get
from this the desired conclusion, namely $\sigma
(a\ a')\in\interp{\sigma([a'/x]\phi)}_\Gamma$.

\ 

\noindent \textbf{Case:}

\

$\infer{\Gamma\vdash 0:\nat}{\ }$

\

\noindent Since $0$ is a normal form, we have $0\in\SN$ and
$\Delta\vdash 0:\nat$, which suffices for this case.

\ 

\noindent \textbf{Case:}

\

$\infer{\Gamma\vdash \nil:\langle\vc\ \phi\ 0\rangle}{\ }$

\ 

\noindent Since $\nil$ is a normal form, we have $\nil\in\SN$, and of
course, $\Delta\vdash\nil:\langle\vc\ \sigma \phi\ 0\rangle$. For the
second conjunct of the definition of $\interp{\cdot}$ at $\vc$-type,
we have $0\downarrow 0$. For the third conjunct, assume
$\nil\leadsto^*(\cons\ a\ a')$ for some $a$ and $a'$.  This is easily
shown to be impossible, since reduction cannot possibly turn $\nil$
into a $\cons$-term.

\ 

\noindent \textbf{Case:}

\

$\infer{\Gamma\vdash (S\ a):\nat}{\Gamma\vdash a:\nat  }$

\ 

\noindent By the induction hypothesis, we have $\sigma
a\in\interp{\nat}_\Delta$, which is equivalent to the conjunction of
$\sigma a\in\SN$ and $\Delta\vdash \sigma a:\nat$.  This implies
$(S\ \sigma a)\in\SN$ and $\Delta\vdash (S\ \sigma a):\nat$, which
suffices.

\ 

\noindent \textbf{Case:}

\

$\infer{\Gamma\vdash (R_\nat\ a\ a'\ a''):[a''/x]\phi}
      {\begin{array}{l}\Gamma \vdash a'' : \nat \\
       \Gamma \vdash a : [0/x]\phi \\
       \Gamma \vdash a' : \Pi y:\nat. \Pi u : [y/x]\phi. [(S y)/x]\phi
       \end{array}}$

\ 

\noindent By the induction hypothesis, we have
\begin{itemize}
\item $\sigma a''\in\interp{\nat}_\Delta$
\item $\sigma a\in\interp{\sigma[0/x]\phi}_\Delta$
\item $\sigma a'\in\interp{\Pi y:\nat.\Pi u:\sigma([y/x]\phi).\ \sigma([(S y)/x]\phi)}_\Delta$
\end{itemize}

\noindent We will prove that for any $b\in\interp{\nat}_\Delta$, and
assuming the second two of these facts, we have $(R_\nat\ (\sigma
a)\ (\sigma a')\ b) \in\interp{[b/x]\sigma\phi}_\Delta$. The proof is
by inner induction on the measure $\nu(\sigma a)+\nu(\sigma
a')+\nu(b)+l(b)$.  Our measure is defined, since all the terms
involved are reducible and hence strongly normalizing by
\textbf{R-SN}.  By \textbf{R-Prog}, it suffices to prove
$\next(R_\nat\ (\sigma a)\ (\sigma a')\ b)
\subset\interp{[b/x]\sigma\phi}_\Delta$, since the term in question is
neutral and appropriately typable.  The possibilities for reduction are
summarized by:
\[
\begin{array}{lll}
(R_\nat\ (\sigma a)\ (\sigma a')\ b) & \subset & (R_\nat\ \next(\sigma a)\ (\sigma a')\ b)\ \cup\\
\ &\ & (R_\nat\ (\sigma a)\ \next(\sigma a')\ b)\ \cup \\
\ &\ & (R_\nat\ (\sigma a)\ (\sigma a')\ \next b)\ \cup \\
\ &\ & \{ (\sigma a)\ |\ b \equiv 0 \}\ \cup \\
\ &\ & \{ ((\sigma a')\ b'\ (R_\nat\ (\sigma a)\ (\sigma a')\ b'))\ |\ b \equiv (S\ b')\}
\end{array}
\]

\noindent The first three cases are for when the reduction is due to
reduction in a subterm.  The second two are for when the term in
question is itself a redex.  For the first two cases, we use the inner
induction hypothesis and \textbf{R-Pres}.  For the third, we do the
same, except also apply \textbf{R-Join} with $b\sim_\Delta b'$ for
$b'\in\next(b)$.  This ensures that we have $(R_\nat\ (\sigma
a)\ (\sigma a')\ \next(b))\subset\interp{[b/x]\sigma\phi}_\Delta$ (the
critical point being that we have $b$ in the type, and not some
$b'\in\next(b)$).  The fourth case follows by our assumption that
$\sigma a\in\interp{[0/x]\phi}_\Delta$ (note that in this case that
the type in question is equivalent to the desired $[b/x]\phi$).  For
the fifth case, we have $(R_\nat\ (\sigma a)\ (\sigma
a')\ b')\in\interp{[b'/x]\sigma\phi}_\Gamma$ by the inner induction
hypothesis, using the fact that $b\in\SN$ and $b = (S\ b')$ implies
$b'\in\SN$; and this then implies $b'\in\interp{\nat}_\Delta$ by
definition of $\interp{\cdot}$.  Note that we obtain $\Delta\vdash
b':\nat$ from $\Delta \vdash (S\ b'):\nat$, by applying Simplifying
Inversion (Lemma~\ref{lem:simplinv} above).  By the definition of
$\interp{\cdot}$ at $\Pi$-type and our hypothesis that $\sigma a'$ is
reducible at the appropriate $\Pi$-type, we have that the given term
is in the set $\interp{[(S\ b')/x]\phi}_\Delta$, which is equal to the
desired $\interp{[b/x]\phi}_\Gamma$.

\ 

\noindent \textbf{Case:}

\

$\infer{\Gamma\vdash (\cons\ a\ a'):\langle \vc\ \phi\ (S\ l)\rangle}
      {\begin{array}{l}\Gamma\vdash a:\phi \\ \Gamma \vdash a':\langle \vc\ \phi\ l\rangle
       \end{array}}$

\ 

\noindent By the induction hypothesis, we have $\sigma
a\in\interp{\phi}_\Delta$ and $\sigma a'\in\interp{\langle\vc\ \sigma
  \phi\ \sigma l\rangle}_\Gamma$.  By \textbf{R-SN}, these facts imply
$\sigma a\in\SN$ and $\sigma a'\in\SN$, respectively, and hence
$\sigma(\cons\ a\ a')\in\SN$.  We also have $\Delta\vdash
\sigma(\cons\ a\ a'):\langle\vc\ \sigma \phi\ (S\ \sigma l)\rangle$.
We must show the conjuncts of the definition of $\interp{\cdot}$ at
$\vc$-type to conclude $(\cons\ (\sigma a)\ (\sigma
a'))\in\interp{\langle \vc\ \phi\ (S\ l)\rangle}_\Delta$.  The second
conjunct is vacuously true, since we cannot have
$(\cons\ a\ a')\leadsto^*\nil$. The third conjunct follows directly
from our assumptions.

\ 

\noindent \textbf{Case:}

\

$\infer{\Gamma\vdash (R_\vc\ a\ a'\ a''):[l/y, a''/x]\phi}
      {\begin{array}{l}\Gamma \vdash a'' : \langle \vc\ \phi'\ l\rangle \\
       \Gamma \vdash a : [0/y,\nil/x]\phi \\
       \Gamma \vdash a' : \Pi z:\phi'. \forall l:\nat. \Pi v :\langle \vc\ \phi'\ l\rangle. \Pi u : [l/y, v/x]\phi. \\
        \ \ \ \ \ \ \ \ \ \ \ \ \  [(S\ l)/y, (\cons\ z\ v)/x]\phi
       \end{array}}$

\ 

\noindent This case is similar to that for $R_\nat$ above, although it
is for this case that we have the various clauses of the definition of
$\interp{\cdot}$ at $\vc$-type.  By the induction hypothesis, we have
\begin{itemize}
\item $\sigma a''\in\interp{\sigma\langle \vc\ \phi'\ l\rangle}_\Delta$
\item $\sigma a\in\interp{\sigma[0/y,\nil/x]\phi}_\Delta$
\item $\sigma a'\in\interp{\sigma\Pi z:\phi'. \forall l:\nat. \Pi v :\langle \vc\ \phi'\ l\rangle. \Pi u : [l/y, v/x]\phi.[(S\ l)/y, (\cons\ z\ v)/x]\phi}_\Delta$
\end{itemize}

\noindent It is sufficient to prove that for any $l$, for any
$b\in\interp{\langle\vc\ \phi'\ l\rangle}_\Delta$, and assuming the
second two of these facts, we have $(R_\vc\ (\sigma a)\ (\sigma
a')\ b) \in\interp{[l/y,b/x]\sigma\phi}_\Delta$. The proof is by inner
induction on the measure $\nu(\sigma a)+\nu(\sigma a')+\nu(b)+l(b)$.
As above, this measure is defined, by \textbf{R-SN}.  By
\textbf{R-Prog}, it suffices to prove $\next(R_\vc\ (\sigma
a)\ (\sigma a')\ b) \subset\interp{[l/y,b/x]\sigma\phi}_\Delta$, since
the term in question is neutral and appropriately typable.  The
possibilities for reduction are summarized by:
\[
\begin{array}{lll}
(R_\vc\ (\sigma a)\ (\sigma a')\ b) & \subset & (R_\vc\ \next(\sigma a)\ (\sigma a')\ b)\ \cup\\
\ &\ & (R_\vc\ (\sigma a)\ \next(\sigma a')\ b)\ \cup \\
\ &\ & (R_\vc\ (\sigma a)\ (\sigma a')\ \next b)\ \cup \\
\ &\ & \{ (\sigma a)\ |\ b \equiv \nil \}\ \cup \\
\ &\ & \{ ((\sigma a')\ b'\ b''\ (R_\vc\ (\sigma a)\ (\sigma a')\ b''))\ |\ b \equiv (\cons\ b'\ b'')\}
\end{array}
\]

\noindent The first three cases are for when the reduction is due to
reduction in a subterm.  The second two are for when the term in
question is itself a redex.  For the first two cases, we use the inner
induction hypothesis and \textbf{R-Pres}.  For the third, we also
apply \textbf{R-Join} as in the $R_\nat$ case above, to ensure that we
have $(R_\vc\ (\sigma a)\ (\sigma
a')\ \next(b))\subset\interp{[b/x]\sigma\phi}_\Gamma$.  The fourth
case follows by our assumption that $\sigma
a\in\interp{[0/y,\nil/x]\phi}_\Gamma$.  By the definition of
$\interp{\cdot}$ at $\vc$-type, we have $l\sim_\Gamma 0$; so we can
apply \textbf{R-Join} and the fact that $b=\nil$ to obtain
$a\in\interp{[l/y,b/x]\phi}_\Gamma$, as required.

We now consider the fifth case. First, since $b = (\cons\ b'\ b'')$
and $b\in\interp{\langle\vc\ \phi'\ l\ \rangle}_\Gamma$, the
definition of $\interp{\cdot}$ at $\vc$-type gives us the following
facts for some $l'$:
\begin{itemize}
\item $b'\in\interp{\phi'}_\Delta$
\item $b''\in\interp{\langle\vc\ \phi'\ l'\rangle}_\Delta$
\item $l\sim_\Gamma (S\ l')$
\end{itemize}
\noindent We next apply the inner induction hypothesis to obtain
$(R_\vc\ (\sigma a)\ (\sigma
a')\ b'')\in\interp{[l'/y,b''/x]\sigma\phi}_\Delta$; this is legal,
since the measure has decreased (in particular, $l(b'')<l(b)$).  With
this obtained, we use the definition of $\interp{\cdot}$ at $\Pi$-type
and our hypothesis that $\sigma a'$ is reducible at the appropriate
$\Pi$-type.  So we obtain the fact that the given term is in the set
$\interp{[(S\ l'))/y,(\cons\ b'\ b'')/x]\phi}_\Delta$.  We can then
use \textbf{R-Join} with the fourth assumed formula above, and the
fact that $b = (\cons\ b'\ b'')$, to get that the term is in the
desired $\interp{[l/y,b/x]\phi}_\Delta$.

\subsection{Proof of Weakening Substitutions (Lemma~\ref{lem:weaksubst})}
\label{sec:weaksubst}

\noindent The proof is by induction on the structure of the assumed
derivation of $\sigma\in\interp{\Gamma}_{\Delta}$.  The base case is
trivial.  For the step case, we have:
\[
\infer{\sigma'\cup\{(x,a)\}\in\interp{\Gamma',x:\phi}_{\Delta}}
      {a\in\interp{\sigma'\phi}^+_{\Delta} & \sigma' \in\interp{\Gamma'}_{\Delta}}
\]
\noindent The induction hypothesis gives us
$\sigma'\in\interp{\Gamma'}_{\Delta,y:\phi'}$.  We just need
$a\in\interp{\sigma'\phi}^+_{\Delta,y:\phi'}$, and we can
reapply the rule to get the desired result.  For this, we use the
following lemma:

\begin{lemma}[Weakening for Closable Terms]
\label{lem:weakclose}
Suppose $\Delta,y:\phi'\,\textit{Ok}$.  Then
$a\in\interp{\phi}^+_{\Delta}$ implies
$a\in\interp{\phi}^+_{\Delta,y:\phi'}$.
\end{lemma}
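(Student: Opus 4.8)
\noindent\textbf{Proof plan for Lemma~\ref{lem:weakclose}.}
Unfolding the definition of $\interp{\cdot}^+$ and noting $|\Delta,y:\phi'| = |\Delta|+1 > 0$, the goal $a\in\interp{\phi}^+_{\Delta,y:\phi'}$ splits into (i) $a\in\interp{\phi}_{\Delta,y:\phi'}$ and (ii) $\sigma a\in\interp{\sigma\phi}$ for every $\sigma\in\interp{\Delta,y:\phi'}$. The hypothesis $a\in\interp{\phi}^+_\Delta$ already supplies $a\in\SN$ and $\Delta\vdash a:\phi$; since $\Delta,y:\phi'$ is a legal context we have $y\notin\textit{dom}(\Delta)$, hence $y\notin\textit{FV}(a)$ by Lemma~\ref{lem:vars}, and $\Delta,y:\phi'\vdash a:\phi$ by Weakening (Lemma~\ref{lem:weak}). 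This discharges the ``$a\in\SN$'' and typing stipulations implicit in both (i) and (ii), leaving only their structural content.

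For (ii), take $\sigma\in\interp{\Delta,y:\phi'}$ and invert the context-interpretation rules to write $\sigma = \sigma_0\cup\{(y,b)\}$ with $\sigma_0\in\interp{\Delta}$ and $b\in\interp{\sigma_0\phi'}$. From $a\in\interp{\phi}^+_\Delta$ we obtain $\sigma_0 a\in\interp{\sigma_0\phi}$ — trivially when $\Delta$ is empty (then $\sigma_0 a\equiv a$), otherwise from the ``$+$''-conjunct of the hypothesis. Since $\textit{FV}(a)\subseteq\textit{dom}(\Delta)$ and the range of $\sigma_0$ is closed, $\sigma_0 a$ is closed, so $\sigma a\equiv\sigma_0 a$, and it remains to pass from $\interp{\sigma_0\phi}$ to $\interp{[b/y](\sigma_0\phi)}=\interp{\sigma\phi}$. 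When $y\notin\textit{FV}(\phi)$ these two types are identical and there is nothing more to prove — this is exactly the situation at the one call site, inside Lemma~\ref{lem:weaksubst}, where $y$ does not occur free in the type handed to the lemma. In general one transports membership along the closed substitution $[b/y]$ by a side induction on $(d(\phi),l(\sigma_0 a))$, using that the only term-level content of the interpretation clauses is the family of joinability side-conditions ($l\sim_\cdot 0$, $l\sim_\cdot(S\ l')$, $a_1\sim_\cdot a_2$) in the $\vc$- and $=$-clauses, each preserved by substitution (closure of joinability under substitution, the same fact used throughout Section~\ref{sec:critprop}).

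For (i) the natural route is to prove the stronger $\interp{\phi}_\Delta\subseteq\interp{\phi}_{\Delta,y:\phi'}$ by induction on $(d(\phi),l(a))$. Apart from the typing requirement (handled above) and recursive calls on strictly smaller types, $\interp{\phi}_\Gamma$ mentions $\Gamma$ only through $\sim_\Gamma$, and $\sim_\Delta\,\subseteq\,\sim_{\Delta,y:\phi'}$ by the same decompose-$\sigma$-and-substitute argument as in (ii); this settles the $\nat$-, $=$-, and $\vc$-clauses. The \textbf{main obstacle} is the $\Pi$- and $\forall$-clauses: their bodies are quantified over $\interp{\psi_1}^+_\Gamma$, and this set is \emph{not} monotone in $\Gamma$ — over the larger context it admits arguments $a'$ mentioning $y$ (for instance the bare variable $y$, whenever $\psi_1$ matches $\phi'$), which never lie in $\interp{\psi_1}^+_\Delta$ — so the $\Delta$-level clause cannot simply be reused. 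For such an $a'$ one instead proves $(a\ a')\in\interp{[a'/z]\psi_2}_{\Delta,y:\phi'}$ (resp.\ $a\in\interp{[a'/z]\psi_2}_{\Delta,y:\phi'}$ for the $\forall$-clause) directly: the term is neutral and typeable over $\Delta,y:\phi'$, so by \textbf{R-Prog} — available here, since the critical properties of Section~\ref{sec:critprop} are established prior to and independently of Theorem~\ref{thm:soundness} — it suffices to show $\next(a\ a')\subseteq\interp{[a'/z]\psi_2}_{\Delta,y:\phi'}$, which one does by an inner induction on $\nu(a)+\nu(a')$ mirroring the $\Pi$-case of the Soundness proof: reductions inside $a$ or $a'$ go through \textbf{R-Pres} and \textbf{R-Join}, and the $\beta$-step $a\equiv\lambda z.a_0$ is handled by Simplifying Inversion (Lemma~\ref{lem:simplinv}) and the Substitution Lemma (Lemma~\ref{lem:subst}). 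Since at the actual call site $y\notin\textit{FV}(\phi)$ and no argument of the clause need involve $y$, in practice the short monotonicity argument already suffices, and the elaboration above is required only for the lemma in full generality.
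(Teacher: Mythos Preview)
Your decomposition into (i) and (ii) is sound, and your treatment of (ii) is fine in the special case $y\notin\textit{FV}(\phi)$ that actually arises.  The gap is in (i), at the $\Pi$/$\forall$ clause, and it is not repaired by the workaround you sketch.

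Concretely, in your inner induction on $\nu(a)+\nu(a')$ the $\beta$-step is the problem.  When $a\equiv\lambda z.a_0$ you must show $[a'/z]a_0\in\interp{[a'/z]\psi_2}_{\Delta,y:\phi'}$ with $a'$ possibly mentioning $y$.  Simplifying Inversion and the Substitution Lemma give you only a \emph{typing} judgment for $[a'/z]a_0$; they say nothing about membership in the semantic interpretation.  In the Soundness proof this step is discharged by the outer induction hypothesis (formula~(\ref{eqlam}) there), which is precisely the semantic fact you need --- but Weakening for Closable Terms is proved \emph{before} and \emph{for use inside} Soundness, so no such hypothesis is available to you here.  Your fallback claim, that at the call site ``no argument of the clause need involve $y$'', is also wrong: even when $y\notin\textit{FV}(\psi_1)$, the set $\interp{\psi_1}^+_{\Delta,y:\phi'}$ contains terms mentioning $y$ (e.g.\ $y$ itself whenever $\phi'$ and $\psi_1$ happen to coincide), and the clause quantifies over \emph{all} such arguments.

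The paper's route is different and supplies exactly the missing leverage.  It first proves a bi-conditional \emph{Weakening--Strengthening} lemma: under the hypothesis $a\in\interp{\phi}^+_\Delta$, one has $a\in\interp{\phi}^+_{\Delta,y:\phi'}$ iff $[a'/y]a\in\interp{[a'/y]\phi}^+_\Delta$ for every $a'\in\interp{\phi'}^+_\Delta$.  Weakening for Closable Terms then drops out immediately, since $y\notin\textit{FV}(a)\cup\textit{FV}(\phi)$ makes all the $[a'/y]$-instances identical to the assumption.  The point of the iff is that in the inductive $\Pi$-case one can take an argument $a''\in\interp{\psi}^+_{\Delta,y:\phi'}$, use the \emph{only-if} direction at smaller depth to obtain $[a'/y]a''\in\interp{[a'/y]\psi}^+_\Delta$ for every $a'$, feed those into the $\Delta$-level reducibility of $a$, and then use the \emph{if} direction to come back up to $\Delta,y:\phi'$.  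That two-way passage between contexts is the idea your direct argument is missing.
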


\noindent Now we apply the following lemma to complete the proof,
noting that the variable $y$ of interest here is not in the free
variables of $a$ or $\phi$, and so the assumption implies the required
universal formula:

\begin{lemma}[Weakening-Strengthening for Interpretations]
\label{lem:weakstr}
Suppose $a\in\interp{\phi}^+_\Delta$.  Then we have
$a\in\interp{\phi}^+_{\Delta,y:\phi'}$ iff for all
$a'\in\interp{\phi'}^+_\Delta$, we have $[a'/y]a\in\interp{[a'/y]
  \phi}^+_{\Delta}$.
\end{lemma}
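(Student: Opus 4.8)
The plan is to induct on the structure of $\phi$ (equivalently, on the usual measure $(|\Delta|,d(\phi),l(a))$) with $\Delta$, $y$, and $\phi'$ held fixed, after first peeling off the $(\cdot)^+$ decoration. Since $\Delta,y:\phi'$ is non-empty, $a\in\interp{\phi}^+_{\Delta,y:\phi'}$ unfolds to the conjunction of $a\in\interp{\phi}_{\Delta,y:\phi'}$ with the requirement that $\tau a\in\interp{\tau\phi}$ for every closing $\tau\in\interp{\Delta,y:\phi'}$. By the inductive clause for context interpretations, every such $\tau$ is an extension $\sigma\cup\{(y,b)\}$ of a closing $\sigma\in\interp{\Delta}$ by a closed $b\in\interp{\sigma\phi'}$ (closed by Lemma~\ref{lem:vars}); and since $y$ lies in neither the domain nor the range of $\sigma$, we have $\tau a=[b/y](\sigma a)$ and $\tau\phi=[b/y](\sigma\phi)$, which equal $\sigma([a'/y]a)$ and $\sigma([a'/y]\phi)$ when $b=\sigma a'$. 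Using the Basic Property of Substitutions (Lemma~\ref{lem:substvars}) and Composing Substitutions (Lemma~\ref{lem:csubst}) to pass between an open $a'\in\interp{\phi'}^+_\Delta$ and the closed term $\sigma a'\in\interp{\sigma\phi'}$, the ``closure under closing substitutions'' half of the left-hand side matches, quantifier for quantifier, the corresponding half buried in each $[a'/y]a\in\interp{[a'/y]\phi}^+_\Delta$ on the right; the degenerate case $|\Delta|=0$ (where that half is vacuous on both sides and $\interp{\phi'}^+_\Delta=\interp{\phi'}$) is treated separately and is in fact easier, since there every $\tau$ genuinely comes from some such $a'$.

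It then remains to establish, under the hypothesis $a\in\interp{\phi}^+_\Delta$, that $a\in\interp{\phi}_{\Delta,y:\phi'}$ is interderivable with the statement that $[a'/y]a\in\interp{[a'/y]\phi}_\Delta$ for all $a'\in\interp{\phi'}^+_\Delta$, and this is the part I would prove by the clause-by-clause induction on $\phi$. For $\nat$ the content is only that adding $y:\phi'$ and substituting $a'$ for $y$ preserve typing, which is Weakening (Lemma~\ref{lem:weak}) together with the Substitution Lemma (Lemma~\ref{lem:subst}). For $a_1=a_2$ one uses that $a\leadsto^*\join$ is preserved both under adding a variable and under substitution, and that joinability is closed under substitution of joinable terms, so each $\sim$-condition on the two sides translates into the other. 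For $\langle\vc\ \phi_0\ l\rangle$ I would recurse on the component type $\phi_0$ (strictly smaller depth) and, via an inner induction on $l(a)$, on the tail of a $\cons$-normal form, again invoking closure of $\downarrow$ under substitution for the length equations. For $\Pi x:\psi.\chi$ and $\forall x:\psi.\chi$ one passes an argument $a'$ of the appropriate $(\cdot)^+$ type through and recurses on $[a'/x]\chi$, whose depth is strictly smaller.

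I expect the $\Pi$/$\forall$ case to be the main obstacle: there the domain type $\psi$ enters through $\interp{\psi}^+_{\Delta,y:\phi'}$ rather than $\interp{\psi}_{\Delta,y:\phi'}$, so a bare ``weakening for $\interp{\cdot}$'' sublemma does not suffice --- one must invoke the full iff of this lemma, at the structurally smaller type $\psi$, in order to move an argument between $\interp{\psi}^+_{\Delta,y:\phi'}$ and the $\Delta$-interpretation of a substituted instance. A secondary nuisance is the interplay between the $|\Delta|=0$ and $|\Delta|>0$ cases together with the fact that the type system does not track well-formedness of types, so one must be careful about which variables are genuinely free in $\phi$, $a$, and $\phi'$; here one scopes $a$ via Lemma~\ref{lem:vars} and relies on the standing assumption that $\Delta,y:\phi'$ is well formed with $y$ fresh for $\phi$, exactly as holds in the intended application (Lemma~\ref{lem:weakclose}).
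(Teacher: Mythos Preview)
Your plan is essentially the paper's own proof: the paper also proceeds by induction on $(|\Gamma|,d(\phi),l(a))$, handles $\nat$ trivially, reduces the $\vc$ and equality cases to a statement about $\sim$, and spends its effort on the $\Pi$ case (with $\forall$ declared similar), exactly where you anticipate the main obstacle. The one organizational difference is that the paper isolates the $\sim$ reasoning into a separate auxiliary lemma, \emph{Weakening--Strengthening for Ground Joinability} ($a_1\sim_{\Delta,y:\phi'}a_2$ iff $[a'/y]a_1\sim_\Delta[a'/y]a_2$ for all $a'\in\interp{\phi'}_\Delta$), proved once via Composing Substitutions and then invoked in the $\vc$ and $=$ cases; you describe the same content inline. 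Your attempt to ``peel off'' the $(\cdot)^+$ closure condition uniformly is a presentational choice the paper does not make---it carries $(\cdot)^+$ through each case---and, as you yourself note, the peeling does not fully decouple the two halves because the $\Pi$/$\forall$ domain needs the full $(\cdot)^+$ induction hypothesis; so in the end both arguments are doing the same work.
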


\noindent The proof makes frequent use of the following lemma, which
we prove briefly first:

\begin{lemma}[Weakening-Strengthening for Ground Joinability]
\label{lem:weakstrj}
Suppose $(\Delta,y:\phi')\,\textit{Ok}$.  Then we have
$a_1\sim_{\Delta,y:\phi'} a_2$ iff for all
$a'\in\interp{\phi'}_\Delta$, we have $[a'/y] a_1\sim_{\Delta}
[a'/y]a_2$.
\end{lemma}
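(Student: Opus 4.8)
The plan is to prove the two directions of the biconditional by unfolding the definition of $\sim$ and then decomposing and reassembling the closable substitutions involved. First I would unfold both sides over $\interp{\Delta}$. By the definition of $\sim$, the left-hand side says that $\sigma a_1 \downarrow \sigma a_2$ for every $\sigma \in \interp{\Delta,y{:}\phi'}$. By inversion on the inductive definition of closable substitution, every such $\sigma$ has the form $\sigma_0 \cup \{(y,c)\}$ with $\sigma_0 \in \interp{\Delta}$ and $c \in \interp{\sigma_0\phi'}$; since this interpretation is taken with an empty outer context, $c$ is closed (by Lemma~\ref{lem:vars}) and strongly normalizing. Because $(\Delta,y{:}\phi')\,\textit{Ok}$ forces $y \notin \textit{dom}(\Delta)$ and the range of $\sigma_0$ consists of closed terms, the substitutions commute, giving $\sigma a_i = \sigma_0([c/y]a_i)$. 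So the left-hand side is equivalent to: for every $\sigma_0 \in \interp{\Delta}$ and every closed $c \in \interp{\sigma_0\phi'}$, $\sigma_0([c/y]a_1) \downarrow \sigma_0([c/y]a_2)$. Unfolding the right-hand side the same way, it says: for every $a' \in \interp{\phi'}_\Delta$ and every $\sigma_0 \in \interp{\Delta}$, $\sigma_0([a'/y]a_1)\downarrow\sigma_0([a'/y]a_2)$, where again $\sigma_0([a'/y]a_i) = [\sigma_0 a'/y](\sigma_0 a_i)$. Thus both sides range over the same $\sigma_0$; the only difference is the family of instantiations at $y$ — closed elements of $\interp{\sigma_0\phi'}$ on the left, versus images $\sigma_0 a'$ of elements of $\interp{\phi'}_\Delta$ on the right — so the whole content of the lemma is a matching between these two families.

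For ($\Rightarrow$) I would fix $a' \in \interp{\phi'}_\Delta$ and $\sigma_0 \in \interp{\Delta}$, set $c := \sigma_0 a'$, and feed $\sigma_0 \cup \{(y,c)\}$ into the hypothesis; the only non-bookkeeping step is checking that this is again a closable substitution, i.e.\ that $\sigma_0 a' \in \interp{\sigma_0\phi'}$, which is exactly the statement that membership of $a'$ in the open interpretation is preserved by a closing substitution (the role played by $\interp{\cdot}^+$, the form in which this lemma is invoked; alternatively this can be threaded through the Basic Property of Substitutions and Composing Substitutions). For ($\Leftarrow$) I would fix $\sigma_0 \cup \{(y,c)\}$ with $c$ closed in $\interp{\sigma_0\phi'}$, exhibit an $a' \in \interp{\phi'}_\Delta$ with $\sigma_0 a' \downarrow c$ (taking $a'$ to be $c$ itself, or a recursor term over the indices that computes back to $c$, when $c$'s type mentions variables of $\Delta$), and then apply the hypothesis to this $a'$ and $\sigma_0$; combined with closure of joinability under substitution of joinable terms — which is valid here precisely because we use unrestricted $\leadsto$ rather than a call-by-value strategy, as noted in the \textbf{R-Join} discussion — this yields $\sigma_0([c/y]a_1)\downarrow\sigma_0([c/y]a_2)$, as needed.

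I expect the main obstacle to be exactly this instantiation-at-$y$ matching between closed inhabitants $c \in \interp{\sigma_0\phi'}$ and images $\sigma_0 a'$ of open witnesses $a' \in \interp{\phi'}_\Delta$: in the forward direction it is the fact that a closable term stays in the interpretation after a closing substitution, and in the backward direction it is producing an open witness mapping onto a given closed one. These are the two points where the carefully engineered notions of closable term and closable substitution (and Lemmas~\ref{lem:csubst} and~\ref{lem:substvars}) carry the argument; once they are in place, the remainder — commuting substitutions, tracking the $\textit{Ok}$ side-conditions, and the appeal to closure of $\downarrow$ under substitution — is routine.
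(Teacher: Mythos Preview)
Your overall strategy---unfold $\sim$, decompose and recompose closing substitutions for $\interp{\Delta,y{:}\phi'}$ as $\sigma_0 \cup \{(y,c)\}$ with $\sigma_0 \in \interp{\Delta}$ and $c \in \interp{\sigma_0\phi'}$---is exactly the paper's. The forward direction matches: the paper also builds the extended substitution and cites Composing Substitutions (Lemma~\ref{lem:csubst}) to land it in $\interp{\Delta,y{:}\phi'}$, then applies the hypothesis.

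The backward direction is where you diverge, and where there is a real gap. The paper does \emph{not} try to manufacture an open witness: having decomposed $\sigma \equiv \sigma'[c/y]$ with $c$ closed, it simply instantiates the universal hypothesis with $a' := c$ and then with $\sigma'$. Because $c$ is closed, $\sigma'([c/y]a_i)$ is literally $\sigma a_i$, so no detour through ``closure of $\downarrow$ under substitution of joinable terms'' is needed at all. Your plan instead asks for an open $a' \in \interp{\phi'}_\Delta$ with $\sigma_0 a' \downarrow c$, to be produced ``as $c$ itself, or a recursor term over the indices that computes back to $c$.'' That second option is not a construction the paper provides or supports: there is no lemma guaranteeing that an arbitrary closed inhabitant of $\interp{\sigma_0\phi'}$ arises (even up to joinability) as $\sigma_0 a'$ for some open $a'$ in $\interp{\phi'}_\Delta$, and sketching it as ``a recursor over the indices'' is not an argument. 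Drop that detour and the joinability-closure appeal; just take $a' := c$ as the paper does. (You already note the typing friction this creates---$c$ has type $\sigma_0\phi'$, not $\phi'$---and that is a fair worry, but it is the same point the paper glosses over, and inventing an ad hoc open witness does not resolve it.)
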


\noindent First, suppose $a_1 \sim_{\Delta,y:\phi'} a_2$, and consider
arbitrary $\sigma'\in\interp{\Delta}$ and
$a'\in\interp{\sigma'\phi'}$.  Then we have
$(\sigma'\circ[a'/y])\in\interp{\Delta,y:\phi'}$ by Composing
Substitutions (Lemma~\ref{lem:csubst}).  We can then use $a_1
\sim_{\Delta,y:\phi'} a_2$ to get $(\sigma'([a'/y]a_1))\downarrow
(\sigma'([a'/y]a_2$, as required.  

Second, suppose that for all $a'\in\interp{\phi'}_\Delta$, we have
$[a'/y]a_1\sim_\Delta [a'/y]a_2$, and show $a_1\sim_{\Delta,y:\phi'}
a_2$.  Assume arbitrary $\sigma\in\interp{\Delta,y:\phi'}$.  Then for
some $\sigma'$ and $a'\in\interp{\sigma'\phi}$, we have
$\sigma\equiv\sigma'[a'/y]$.  Instantiating our assumption with this
$a'$ and then $\sigma'$, we obtain the desired conclusion.

We now turn to the main proof for Weakening-Strengthening, which is by
induction on $(|\Gamma|,d(\phi),l(a))$.  In all cases, the typing
statement in question follows by either Weakening
(Lemma~\ref{lem:weak}) or Substitution (Lemma~\ref{lem:subst}), so we
omit consideration of typing below.  Strong normalization of the
substitution instances follows from the definition of closability (and
the assumption that $a$ is a closable term in context $\Delta,y:\phi'$).


\

\noindent \textbf{Case:} $\phi\equiv\nat$.

\

\noindent This case is trivial.

\ 

\noindent \textbf{Case:} $\phi\equiv\langle\vc\ \phi\ l\rangle$.

\

\noindent These cases follow easily by Weakening-Strengthening for Ground
Joinability (Lemma~\ref{lem:weakstrj}) and the induction hypothesis.

\ 

\noindent \textbf{Case:} $\phi\equiv\Pi x:\psi.\psi'$.

\

\noindent First, assume $a\in\interp{\phi}^+_{\Delta,y:\phi'}$, and
show $[a'/y]a\in\interp{[a'/y] \phi}^+_{\Delta}$ for an arbitrary
$a'\in\interp{\phi'}^+_\Delta$.  It suffices to consider arbitrary
$a''\in\interp{[a'/y]\psi}^+_\Delta$, and show
$(([a'/y]a)\ a'')\in\interp{[a''/x][a'/y]\psi'}^+_\Delta$.  Since
$y\not\in\textit{FV}(a'')$, we certainly have
$[\hat{a}/y]a''\in\interp{[\hat{a}/y][a'/y]\psi}^+_\Delta$ for all
$\hat{a}\in\interp{\phi'}^+_\Delta$.  So we may apply the induction
hypothesis to conclude $a''\in\interp{[a'/y]\psi}^+_{\Delta,y:\phi'}$.
Now we may use our assumption of reducibility of $a$ in context
$\Delta,y:\phi'$ to conclude
$(a\ a'')\in\interp{[a''/y]\psi'}_{\Delta,y:\phi'}$.  To obtain
$(a\ a'')\in\interp{[a''/y]\psi'}^+_{\Delta,y:\phi'}$ from this,
assume an arbitrary partition
$(\Delta_1,\Delta_2)\equiv(\Delta,y:\phi')$, and arbitrary
$\sigma\in\interp{\Delta_2}_{\Delta_1}$.  We must show $\sigma
(a\ a'')\in\interp{\sigma [a''/y]\psi'}_{\Delta_1}$.  If $\Delta_2$ is
empty, this statement is equivalent to the fact
$(a\ a'')\in\interp{[a''/y]\psi'}_{\Delta,y:\phi'}$, which we already
have.  So suppose $\Delta_2$ ends in $y:\phi'$.  Then
$y\not\in\textit{ran}(\sigma)$.  Also, $y\not\in\textit{FV}(a'')$, so
our current goal formula is equivalent to $((\sigma
a)\ a'')\in\interp{[a''/y](\sigma\psi')}_{\Delta_1}$.  Instantiating
our assumption of closability of $a$ with
$\sigma|_{\textit{dom}(\Delta_1)}$, we obtain $\sigma
a\in\interp{\sigma\phi}_{\Delta_1}$.  This is then sufficient for the
desired conclusion, since we easily obtain $\sigma
a''\in\interp{\sigma [a'/y]\psi}^+_{\Delta_1}$ from our assumption of
closability of $a''$ in context $\Delta$.  Having obtained $(a\ a'')$
closable, we may now apply the induction hypothesis again, to obtain
$(([a'/y]a)\ a'')\in\interp{[a'/y][a''/x]\psi'}_\Delta$, noting again
that $y\not\in\textit{FV}(a'')$.  Closability of $a$ and $a''$ again
imply closability of this final term.

Now assume that for all $a'\in\interp{\phi}^+_\Delta$, we have
$[a'/y]a\in\interp{[a'/y] \phi}^+_{\Delta}$; and show
$a\in\interp{\phi}^+_{\Delta,y:\phi'}$.  It suffices to consider
arbitrary $a''\in\interp{\psi}^+_{\Delta,y:\phi'}$, and show
$(a\ a'')\in\interp{[a''/x]\psi'}_{\Delta,y:\phi'}$.  By the induction
hypothesis, we have $[a'/y]a''\in\interp{[a'/y]\psi}^+_\Delta$ for any
$a'\in\interp{\phi'}_\Delta$.  Consider arbitrary such $a'$.  We have
$([a'/y]a\ [a'/y]a'')\in\interp{[[a'/y]a''/x][a'/y]\psi'}_\Delta$ by
reducibility of $[a'/y]a$ in context $\Delta$.  Closability of $a$ and
$a''$ in context $(\Delta,y:\phi')$ again imply closability of this
term. This is true for any $a'$, so we may apply the induction
hypothesis again to conclude
$(a\ a'')\in\interp{[a''/x]\psi'}_\Delta$, and again obtain
closability as above, for the required conclusion.

\

\noindent \textbf{Case:} $\phi\equiv\forall x:\psi.\psi'$.

\

\noindent This case is very similar to the previous one, so we omit it.

\ 

\noindent \textbf{Case:} $a_1 = a_2$.

\

\noindent This follows by Weakening-Strengthening for Ground
Joinability (Lemma~\ref{lem:weakstrj}).

\section{Proof of Corollaries of Theorem~\ref{thm:soundness}}

\subsection{Proof of Strong Normalization (Corollary~\ref{cor:sn})}

By Soundness for Interpretations,
we have $\sigma a\in\interp{\sigma\phi}_\Delta$ for all $\Delta$ and
$\sigma$ with $\Delta\subset\sigma\Gamma$ and
$\sigma\in\interp{\Gamma}_\Delta$.  We instantiate this by taking
$\Gamma$ for $\Delta$ and the identity substitution $\textit{id}$ on
$\textit{dom}(\Gamma)$ for $\sigma$.  We have
$\textit{id}\in\interp{\Gamma}_\Gamma$, since for all
$x\in\textit{dom}(\Gamma)$, we have $x\in\interp{\Gamma(x)}^+_\Gamma$
by \textbf{R-Prog} and the fact that if $\sigma'\in\interp{\Gamma}$,
then by that assumption, we get $\sigma' x\in\interp{\sigma'
  \Gamma(x)}$, which is needed for closability of $x$.  This
instantiation yields $a\in\interp{\phi}_\Gamma$, which implies
$a\in\SN$ by \textbf{R-SN}.

\subsection{Proof of Equational Soundness (Theorem~\ref{cor:eqsnd})}

By Type Preservation, Progress, and
Canonical Forms, we obtain $a\leadsto^*\join$.  By Inversion, the only
possible derivations are \texttt{(conv)} inferences starting with a
$\join$-introduction.  This implies $b_1\downarrow b_2$, because
joinability is closed under substitution of joinable terms.  An easy
corollary is:

\section{Proofs for section \ref{sec:tveclarge} (Large Eliminations)}

\subsection{Proof of Critical Properties}

\

\noindent \textbf{R-Canon}. If $a\in\interp{\phi}$, then $a\leadstov^* v$ for some $v$. Furthermore,
if $\phi$ is a value type (i.e. $\nat$, $\Pi$, $\forall$, $=$, or $\vc$), then $v$ is the corresponding introduction form.

\begin{proof} Immediate from the definition of $\interp{\ }$.
\end{proof}

\noindent \textbf{R-Pres}. If $a\in\interp{\phi}$ and $a \leadstov a'$, then $a' \in \interp{\phi}$.

\begin{proof}Induction on the depth of $\phi$.

The clauses of the form $a \leadstov^*$ are all proven in the same way: for instance if $a \leadstov^* n$ and $a \leadstov a'$, then $a' \leadstov^* n$ by determinacy of $\leadstov$. This takes care of all cases except $\Pi$ and $\forall$.

For $\Pi y:\phi'.\phi$, we also need to show $\forall a''\in\interp{\phi'}.\ (a'\ a'')\in\interp{[a''/x]\phi}$. Let $a''\in\interp{\phi'}$. By assumption we know $(a a'') \in \interp{[a''/x]\phi}$. But $(a a'') \leadstov (a' a'')$, so by the IH at the type $[a''/x]\phi$ (which is of lower depth) $(a' a'') \in \interp{[a''/x]\phi}$ as required.

The case $\forall y:\phi'.\phi$ is similar to the above case: we need to show $(a'\ \impapp)\in\interp{[a''/x]\phi}$ and use that $(a\ \impapp) \leadstov (a'\ \impapp)$.
\end{proof}

\noindent \textbf{R-Prog}. If $a \leadstov a'$, and $a' \in \interp{\phi}$, then $a\in\interp{\phi}$.

\begin{proof}
Induction on the depth of $\phi$.

The clauses of the form $a \leadstov^* v$ are all proven in the same way: for instance if $a' \leadstov^* n$ and $a \leadstov a'$, then $a' \leadstov^* n$. This takes care of all cases except $\Pi$ and $\forall$.

For $\Pi y:\phi'.\phi$, we also need to show $\forall a''\in\interp{\phi'}.\ (a\ a'')\in\interp{[a''/x]\phi}$. Let $a''\in\interp{\phi'}$. By assumption $(a'\ a'') \in \interp{[a''/x]\phi}$. But $(a\ a'') \leadstov (a'\ a'')$, so by IH at the type $[a''/x]\phi$ (which is of lower depth), $(a\ a'') \in \interp{[a''/x]\phi}$ as required.

The case $\forall y:\phi'.\phi$ is similar to the above case: we need to show $(a\ \impapp) \in \interp{[a''/x]\phi}$ and use $(a\ \impapp) \leadstov (a'\ \impapp)$.

\end{proof}

\noindent \textbf{R-Join}. If $a_1 \downarrow a_2$, then $a \in \interp{[a_1/x]\phi}$ 
implies $a \in \interp{[a_2/x]\phi}$.

\begin{proof}
Induction on the  depth of $\phi$. 
\begin{itemize}
\item $\nat$. Trivially true since $[a_1/x]\nat = [a_2/x]\nat = \nat$.
\item $\langle\vc\ \phi \l\rangle$.  By assumption $a \in
  \interp{\langle\vc\ \phi \l}$, so either $a \leadstov^* \nil$ and
  $[a_1/x]l \leadsto^* 0$, or $a \leadstov^* (\cons\ v\ v')$ and
  $[a_1/x]l \leadsto^* (S\ n)$ with $v \in \interp{[a_1/x]\phi}$ and
  $v' \in \interp{\langle\ [a_1/x]\phi\ n\rangle}$.

In the first case, note that joinability implies $[a_2/x]l \leadsto^* 0$.
In the second case, joinability gives $[a_2/x] \leadsto^* (S\ n)$,
and the IH gives $v \in \interp{[a_2/x]\phi}$ and $v' \in \interp{\langle\vc\ 
[a_2/x]\phi\ n\rangle}$.

\item $\Pi y:\phi'.\phi$. The first conjunct is the same for both $[a_1/x]\phi$ and $[a_2/x]\phi$.
For the second conjunct, let $a' \in \interp{[a_2/x]\phi'}$. By IH $a' \in \interp{[a_1/x]\phi'}$,
so $(a\ a') \in \interp{[a'/y][a_1/x]\phi}$. Since $y$ was a bound variable we can choose it such
that $a' \not\in \operatorname{FV}(a_1)\cup\{x\}$,
so $[a'/y][a_1/x]\phi = [a_1/x][a'/y]\phi$. By IH applied to $[a'/y]\phi$ (which is of lower depth),
$(a a') \in \interp{[a_2/x][a'/y]\phi}$ as required.

\item $\forall y:\phi'.\phi$. Similar to the previous case.

\item $b_1 = b_2$. We need to show that $[a_1/x]b_1 \downarrow [a_1/x]b_2$ implies $[a_2/x]b_1 \downarrow [a_2/x]b_2$, which is true.

\item $(\ifzero\ b\ \phi_1\ (\alpha.\phi_2))$. Note that $[a_1/x]b \leadsto^* n$ implies $[a_2/x]b \leadsto^* n$, and then by the IH.
\end{itemize}
\end{proof}

\subsection{Proof of Theorem \ref{thm:fundamental_tveclarge} (Fundamental Lemma for Large Eliminations version of $\interp{\ }$)}

\ 

\noindent \textbf{Case:}

\

$\infer{\Gamma\vdash x:\phi}{\Gamma(x) \equiv \phi}$

\ 

\noindent Immediate by $\sigma \in \interp{\Gamma}$.

\ 

\noindent \textbf{Case:}

\

$\infer{\Gamma\vdash \join : a = a'}
{
  \Gamma\vdash a : \phi
& \Gamma\vdash a' : \phi'
& a \downarrow a'}$

\ 

\noindent $\join$ is a value of the right form. We get $\sigma a
\downarrow \sigma a'$, since joinability is closed under substitution.
We get $\exists v_i.\sigma a_i \leadstov^* v_i$ by IH and \textbf{R-Canon}.

\ 

\noindent \textbf{Case:}

\

$\infer{\Gamma\vdash a:[a''/x]\phi}
{\Gamma \vdash a''': a' = a'' & \Gamma \vdash a:[a'/x]\phi & x\not\in\textit{dom}(\Gamma)}$

\ 

\noindent By the IH from the second premise we have $\sigma
a\in\interp{[\sigma a'/x](\sigma\phi)}$.  By the IH from the first
premise we have $\sigma a''' \in \interp{\sigma(a'=a'')}$, so $\sigma
a' \downarrow \sigma a''$.  So by \textbf{R-Join}, $\sigma a \in \interp{[\sigma a''/x]/\psi} = \interp{[\sigma a''/x](\sigma\phi)}$.

\ 

\noindent \textbf{Case:}

\

$\infer{\Gamma \vdash (\lambda a) : \forall x:\phi'.\phi}
       {\Gamma, x:\phi' \vdash a:\phi & x\not\in\textit{FV}(a)}$

\ 

\noindent $(\lambda a)$ is a value of the right form. We must show
$(\lambda \sigma a) \in \interp{\forall x:\sigma\phi'.\sigma\phi}$.

Consider some $a' \in \interp{\sigma\phi'}$. By \textbf{R-Prog}, it
suffices to show $a \in \interp{[a'/x]\sigma\phi}$, since $((\lambda
a)\ \impapp)\leadstov a$.  Let $\sigma' = \sigma \cup \{(x,a')\}$. Then
$\sigma' \in \interp{\Gamma, x:\phi'}$, so by IH we have $\sigma' a
\in \interp{\sigma' \phi}$, that is $\sigma a \in
\interp{[a'/x]\sigma'\phi}$

\ 

\noindent \textbf{Case:}

\

$\infer{\Gamma \vdash a\ \impapp : [a'/x]\phi}
 {\Gamma \vdash a:\forall x:\phi'.\phi & \Gamma \vdash a':\phi'}$

\ 

\noindent This follows immediately from induction hypothesis, and the
definition of $\interp{\cdot}$ for $\forall$-types.

\ 

\noindent \textbf{Case:}

\

$\infer{\Gamma \vdash \lambda x.a : \Pi x:\phi'.\phi}
      {\Gamma, x:\phi' \vdash a:\phi}$

\ 

\noindent $\lambda x.a$ is a value of the right form.  We must show
$(\lambda x. a) \in \interp{\Pi x:\sigma\phi'.\sigma\phi}$.

Consider some $a' \in \interp{\sigma \phi'}$. We must show $(\lambda x.\sigma a) a' \in
\interp{[a'/x]\sigma \phi}$. By \textbf{R-Prog} it suffices to show that it steps to a term in $\interp{[a'/x]\sigma \phi}$.

By \textbf{R-Canon}, $a' \leadsto v'$ for some $v'$. We proceed by the number
of steps $a'$ takes to normalize. In the base case $a'$ is already a
value. Then $(\lambda x. \sigma a) a' \leadstov [a'/x]\sigma a = \sigma' a$ where $\sigma' = \sigma \cup \{(x,a')\}$.
$\sigma' \in \interp{\Gamma, x:\phi'}$, so by IH $\sigma' a \in \interp{\sigma'\phi}$. 

In the step case, $a' \leadstov a''$ for some $a''$, so $(\lambda x.\sigma a) a' \leadstov (\lambda x.\sigma a) a''$. By \textbf{R-Pres}, $a'' \in \interp{\sigma \phi'}$, so the inner IH applies and $(\lambda x.\sigma a) a'' \in \interp{[a''/x]\sigma\phi'}$. But $a' \leadstov a''$, so $a \leadsto a''$, so $a' \downarrow a''$, so \textbf{R-Join} applies and $(\lambda x.\sigma a) a'' \in \interp{[a'/x]\sigma\phi'}$ as required.

\ 

\noindent \textbf{Case:}

\

$\infer{\Gamma \vdash (a\ a') : [a'/x]\phi}{\Gamma \vdash a:\Pi x:\phi'.\phi & \Gamma\vdash a':\phi'}$

\ 

\noindent This follows immediately from induction hypothesis, and the
definition of $\interp{\cdot}$ for $\Pi$-types.

\ 

\noindent \textbf{Case:}

\

$\infer{\Gamma\vdash 0:\nat}{\ }$

\

\noindent$0$ is a value of the right form.

\ 

\noindent \textbf{Case:}

\

$\infer{\Gamma\vdash (S\ a):\nat}{\Gamma\vdash a:\nat  }$

\ 

\noindent By the induction hypothesis, we have $\sigma
a\in\interp{\nat}$, so by \textbf{R-Canon}  $\sigma a\leadstov^*
n$. Then $(S\ \sigma a)\leadstov^* (S n)$, which is a value of the
right form.

\ 

\noindent \textbf{Case:}

\

$\infer{\Gamma\vdash (R_\nat\ a\ a'\ a''):[a''/x]\phi}
      {\begin{array}{l}\Gamma \vdash a'' : \nat \\
       \Gamma \vdash a : [0/x]\phi \\
       \Gamma \vdash a' : \Pi y:\nat. \Pi u : [y/x]\phi. [(S y)/x]\phi
       \end{array}}$

\ 

\noindent By the induction hypothesis, we have
\begin{itemize}
\item $\sigma a''\in\interp{\nat}$
\item $\sigma a\in\interp{\sigma[0/x]\phi}$
\item $\sigma a'\in\interp{\Pi y:\nat.\Pi u:\sigma([y/x]\phi).\ \sigma([(S y)/x]\phi)}$
\end{itemize}

\noindent We will prove that for any $b\in\interp{\nat}$, and
assuming the second two of these facts, we have $(R_\nat\ (\sigma
a)\ (\sigma a')\ b) \in\interp{[b/x]\sigma\phi}$. The proof is
by inner induction on the measure $\nu(\sigma a)+\nu(\sigma
a')+\nu(b)+l(b)$.  Our measure is defined, since all the terms
involved are normalizing by \textbf{R-Canon}.

By \textbf{R-Prog}, it suffices to prove that $R_\nat\ (\sigma a)\
(\sigma a')\ b$ steps to a term in $\interp{[b/x]\sigma\phi}$.  The
terms $(\sigma a)$, $(\sigma a')$, and $b$ are all in  $\interp{\cdot}$, 
so by \textbf{R-Canon} each of them either steps or is a value. By
considering the cases, one of the following must be the case:
\[
\begin{array}{llll}
R_\nat\ (\sigma a)\ (\sigma a')\ b & \leadstov & R_\nat\ c\ (\sigma a')\ b &\qquad\text{ where }(\sigma a) \leadstov c \\
R_\nat\ v\ (\sigma a')\ b & \leadstov & R_\nat\ v\ c'\ b &\qquad\text{ where }(\sigma a') \leadstov c' \\
R_\nat\ v\ v'\ b & \leadstov & R_\nat\ v\ v'\ c &\qquad\text{ where } b \leadstov c \\
R_\nat\ v\ v'\ 0 & \leadstov & v \\
R_\nat\ v\ v'\ (S\ n) & \leadstov & v'\ n\ (R_\nat\ v\  v'\  n)\\
\end{array}
\]

\noindent The first three cases are for when the reduction is due to
reduction in a subterm.  The second two are for when the term in
question is itself a redex.  For the first two cases, we use the inner
induction hypothesis and \textbf{R-Pres}.  For the third, we do the
same, except also apply \textbf{R-Join} with $b\downarrow c$.  This
ensures that we have $(R_\nat\ (\sigma a)\ (\sigma a')\ b)\in
\interp{[b/x]\sigma\phi}$ (the critical point being that we have $b$
in the type, and not $c$) The fourth case follows by our assumption
that $\sigma a\in\interp{[0/x]\phi}$ (note that in this case
that the type in question is equivalent to the desired $[b/x]\phi$).
For the fifth case, we have $(R_\nat\ (\sigma a)\ (\sigma a')\
n)\in\interp{[n/x]\sigma\phi}$ by the inner induction
hypothesis. Since $n$ is a number we trivially have $n \in
\interp{\nat}$.  By the definition of $\interp{\cdot}$ at $\Pi$-type
and our hypothesis that $\sigma a'$ is reducible at the appropriate
$\Pi$-type, we have that the given term is in the set $\interp{[(S\
  n)/x]\phi}$, which is equal to the desired
$\interp{[b/x]\phi}_\Gamma$.

\ 

\noindent \textbf{Case:}

\

$\infer{\Gamma\vdash \nil:\langle\vc\ \phi\ 0\rangle}{\ }$

\ 

\noindent $\nil$ and $0$ are values of the right form.

\ 

\noindent \textbf{Case:}

\

$\infer{\Gamma\vdash (\cons\ a\ a'):\langle \vc\ \phi\ (S\ l)\rangle}
      {\begin{array}{l}\Gamma\vdash a:\phi \\ \Gamma \vdash a':\langle \vc\ \phi\ l\rangle
       \end{array}}$

\ 

\noindent We prove the second disjunct of $\sigma (\cons\ a\ a') \in \interp{\sigma\langle \vc\ \phi\ (S\ l)\rangle}$. 
By IH and \textbf{R-Canon}, we know  $\sigma a$ and $\sigma a'$
reduce to some values $v$ and $v'$. Then $\sigma(\cons\ a\ a') \leadstov^* (\cons\ v\ v')$ as required. Similarly
from the IH we know $\sigma l \leadstov^* n$, so $\sigma(S\ l) \leadstov^* (S\ n)$ as required.

\

\noindent \textbf{Case:}

\

$\infer{\Gamma\vdash (R_\vc\ a\ a'\ a''):[l/y, a''/x]\phi}
      {\begin{array}{l}\Gamma \vdash a'' : \langle \vc\ \phi'\ l\rangle \\
       \Gamma \vdash a : [0/y,\nil/x]\phi \\
       \Gamma \vdash a' : \Pi z:\phi'. \forall l:\nat. \Pi v :\langle \vc\ \phi'\ l\rangle. \Pi u : [l/y, v/x]\phi. \\
        \ \ \ \ \ \ \ \ \ \ \ \ \  [(S\ l)/y, (\cons\ z\ v)/x]\phi
       \end{array}}$

\ 

\noindent This case is similar to that for $R_\nat$ above. By the induction hypothesis, we have
\begin{itemize}
\item $\sigma a''\in\interp{\sigma\langle \vc\ \phi'\ l\rangle}$
\item $\sigma a\in\interp{\sigma[0/y,\nil/x]\phi}$
\item $\sigma a'\in\interp{\sigma\Pi z:\phi'. \forall l:\nat. \Pi v :\langle \vc\ \phi'\ l\rangle. \Pi u : [l/y, v/x]\phi.[(S\ l)/y, (\cons\ z\ v)/x]\phi}$
\end{itemize}

\noindent It is sufficient to prove that for any $l$, for any
$b\in\interp{\langle\vc\ \phi'\ l\rangle}$, and assuming the
second two of these facts, we have $(R_\vc\ (\sigma a)\ (\sigma
a')\ b) \in\interp{[l/y,b/x]\sigma\phi}$. The proof is by inner
induction on the measure $\nu(\sigma a)+\nu(\sigma a')+\nu(b)+l(b)$.
As above, this measure is defined, by \textbf{R-Canon}. 

By \textbf{R-Prog}, it suffices to prove that $R_\vc\ (\sigma a)\
(\sigma a')\ b$ steps to a term in $\interp{[l/y,b/x]\sigma\phi}$.  The
terms $(\sigma a)$, $(\sigma a')$, and $b$ are all in  $\interp{\cdot}$, 
so by \textbf{R-Canon} each of them either steps or is a value. By
considering the cases, one of the following must be the case:
\[
\begin{array}{llll}
R_\vc\ (\sigma a)\ (\sigma a')\ b & \leadstov & R_\vc\ c\ (\sigma a')\ b &\qquad\text{ where }(\sigma a) \leadstov c \\
R_\vc\ v\ (\sigma a')\ b & \leadstov & R_\vc\ v\ c'\ b &\qquad\text{ where }(\sigma a') \leadstov c' \\
R_\vc\ v\ v'\ b & \leadstov & R_\vc\ v\ v'\ c &\qquad\text{ where } b \leadstov c \\
R_\vc\ v\ v'\ \nil & \leadstov & v \\
R_\vc\ v\ v'\ (\cons\ u\ u') & \leadstov & v'\ u\ u'\ (R_\vc\ v\  v'\  u')\\
\end{array}
\]

\noindent The first three cases are for when the reduction is due to
reduction in a subterm.  The second two are for when the term in
question is itself a redex.  For the first two cases, we use the inner
induction hypothesis and \textbf{R-Pres}.  For the third, we also
apply \textbf{R-Join} as in the $R_\nat$ case above, to ensure that we
have $(R_\vc\ (\sigma a)\ (\sigma a')\ c)\in\interp{[l/y,b/x]\sigma\phi}$.  The fourth
case follows by our assumption that $\sigma
a\in\interp{[0/y,\nil/x]\phi}$.  By the definition of
$\interp{\cdot}$ at $\vc$-type, we must have $l\leadstov^* 0$; so we can
apply \textbf{R-Join} and the fact that $b=\nil$ to obtain
$a\in\interp{[l/y,b/x]\phi}$, as required.

For the fifth case, we know by assumption that $(\cons\ u\ u') \in
\interp{\sigma\langle\vc\ \phi\ l\rangle}$.  By the definition of
$\interp{\cdot}$ that means that $u \in \interp{\phi}$, $\sigma l
\leadstov^* (S\ n)$, and $u' \in \interp{\langle\vc\ \phi\
n\rangle}$.

Then we have $(R_\vc\ (\sigma a)\ (\sigma a')\
u')\in\interp{[n/y,u'/x]\sigma\phi}$ by the inner induction
hypothesis. By the definition of $\interp{\cdot}$ at $\Pi$-type and
our hypothesis that $\sigma a'$ is reducible at the appropriate
$\Pi$-type, we have that the given term is in the set $\interp{[(S\
  n)/y, (\cons\ u\ u')/x]\sigma\phi}$. By using \textbf{R-Join} on
$l\downarrow(S\ n)$, this implies the desired $\interp{[l/y,(\cons\ u\
  u')/x]\phi}$.

\ 

\noindent \textbf{Case:}

\

$\infer[\texttt{foldS}]
{\Gamma\vdash a : \ifzero\ (S a')\ \phi\ \phi'}
{\Gamma\vdash a : \phi' & \Gamma\vdash a':nat}$

\ 

\noindent By IH we have $\sigma a' \in \interp{\nat}$, so by \textbf{R-Canon}, $\sigma a' \leadstov^* n$ for some $n$. Therefore 
$\sigma (S a') \leadsto^* (S\ n)$, so we need to show $\sigma a \in \interp{\sigma\phi'}$, which we get by IH.

\

\noindent \textbf{Case:}

\

$\infer[\texttt{unfoldS}]
{\Gamma\vdash a : \phi'}
{\Gamma\vdash a : \ifzero\ (S\ a')\ \phi\ \phi' & \Gamma\vdash a':\nat}$

\ 

\noindent Similar to the previous case.

\noindent \textbf{Case:}

\

$\infer
{\Gamma\vdash a : \ifzero\ 0\ \phi\ \phi'}
{\Gamma\vdash a : \phi}$

\ 

\noindent Similar to \texttt{unfoldS} case.

\

\noindent \textbf{Case:}

\

$\infer
{\Gamma\vdash a : \phi}
{\Gamma\vdash a : \ifzero\ 0\ \phi\ \phi'}$

\ 

\noindent Similar to \texttt{foldS} case.

\

\fi 

\end{document}